\documentclass[11pt,a4paper]{article}

\usepackage[margin=2.5cm]{geometry}

\usepackage{lmodern}
\usepackage{iftex}
\ifpdftex
\usepackage[utf8]{inputenc}
\usepackage[T1]{fontenc}
\usepackage{textcomp}
\fi

\usepackage{amsmath}        
\usepackage{reptheorem}     
\usepackage{mathtools}
\usepackage{amsfonts}       
\usepackage{amssymb}
\usepackage{amsthm}         
\usepackage{bm}             
\usepackage{booktabs}       
\usepackage{caption}        
\usepackage{dcolumn}        
\usepackage{floatrow}       
\usepackage{graphicx}       
\usepackage[nopatch=item]{microtype}   
\usepackage{paralist}       
\usepackage{enumitem}
\usepackage[dvipsnames]{xcolor}         
\usepackage{adjustbox} 
\usepackage{placeins} 
\usepackage{array} 
\usepackage{tabularx} 

\usepackage{makecell}

\usepackage{pifont}
\newcommand{\xmark}{\ding{55}}
\newcommand{\cmark}{\ding{51}}

\usepackage{hyperref}
\hypersetup{unicode}
\hypersetup{breaklinks=true}

\usepackage{algorithm}
\usepackage[noEnd=true,spaceRequire=false]{algpseudocodex}
\algrenewcommand\algorithmicthen{\hspace{-1mm}\textbf{:}}
\algrenewcommand\algorithmicelse{\textbf{else:}}
\algrenewcommand\algorithmicdo{\hspace{-1mm}\textbf{:}}
\usepackage{fancyvrb}       
\usepackage{thm-restate}	

\usepackage{cleveref}
\crefname{rule}{rule}{rules}
\crefname{inv}{invariant}{invariants}
\crefname{obs}{observation}{observations}

\makeatletter
\AtBeginDocument{\@ifundefined{Hy@theorem@refstepcounter}{}{%
  \@ifundefined{cref@thmnoarg}{}{%
    \patchcmd{\cref@thmnoarg}{\refstepcounter}{\Hy@theorem@refstepcounter}{}%
      {\PackageWarningNoLine{main}{Could not patch the theorem link targets}}}}}
\makeatother

\usepackage[natbib,style=iso-alphabetic]{biblatex} 
\DeclareFieldFormat[article]{citetitle}{\mkbibemph{#1}}
\DeclareFieldFormat[article]{title}{\mkbibemph{#1}}
\DeclareFieldFormat[article]{journaltitle}{\normalfont{#1}}

\usepackage{authblk}
\usepackage{orcidlink}

\theoremstyle{plain}

\AfterEndEnvironment{theorem}{\noindent\ignorespaces}
\newtheorem{lemma}{Lemma}
\AfterEndEnvironment{lemma}{\noindent\ignorespaces}

\AfterEndEnvironment{claim}{\noindent\ignorespaces}
\newtheorem{defn}{Definition}
\AfterEndEnvironment{defn}{\noindent\ignorespaces}
\newtheorem{cor}{Corollary}
\AfterEndEnvironment{cor}{\noindent\ignorespaces}
\newtheorem{obs}{Observation}
\AfterEndEnvironment{obs}{\noindent\ignorespaces}
\newtheorem{fact}{Fact}
\AfterEndEnvironment{fact}{\noindent\ignorespaces}

\AfterEndEnvironment{assm}{\noindent\ignorespaces}

\AfterEndEnvironment{nott}{\noindent\ignorespaces}

\theoremstyle{remark}
\newtheorem*{rem}{Remark}

\ifcsname DeclareCaptionStyle\endcsname
\DeclareCaptionStyle{thesis}{style=base,font=small,labelfont=bf,labelsep=quad}
\fi

\DefineVerbatimEnvironment{code}{Verbatim}{fontsize=\small, frame=single}

\ifcsname lstset\endcsname
\fi

\ifcsname DeclareNewFloatType\endcsname
\DeclareNewFloatType{listing}{}
\floatname{listing}{Program}
\fi

\DeclareMathOperator{\pr}{\normalfont{\textsf{P}}}

\def\O{\mathcal{O}}

\newcommand{\abs}[1]{\left|{#1}\right|}
\newcommand{\floor}[1]{\left\lfloor{#1}\right\rfloor}
\newcommand{\ceil}[1]{\left\lceil{#1}\right\rceil}

\newcommand{\eqdef}{\stackrel{\mathrm{def}}{=}}

\DeclareDocumentCommand{\nlog}{o o m}{\ln_{#1}^{#2}{\hspace{-0.2em}\left(#3\right)}}
\DeclareDocumentCommand{\xlog}{o o m}{\log_{#1}^{#2}{\hspace{-0.2em}\left(#3\right)}}

\DeclareMathOperator{\err}{\normalfont{Err}}
\DeclareMathOperator{\rank}{\normalfont{R}}
\newcommand*{\xhat}[1]{#1\kern-0.8em\hat{\phantom{#1}}}
\DeclareMathOperator{\estrank}{\normalfont{\xhat{R}}}

\DeclareMathOperator{\bin}{\normalfont{SUM}}

\DeclareMathOperator{\mar}{\mathcal{M}}

\DeclareMathOperator{\se}{\mathrm{sec}}

\ifx\citet\undefined\else

\DeclareDelimFormat[textcite]{multinamedelim}{\addcomma\space}
\DeclareDelimFormat[textcite]{finalnamedelim}{\space and~}

\fi

\title{Almost Optimal Relative Error Streaming Quantiles with Full Mergeability
	\thanks{Supported by the ERC CZ project LL2406 of the Czech Ministry of Education, Youth and Sports, 
	by grant 24-10306S of GA \v{C}R,
	and by Center for Foundations of Modern Computer Science (Charles Univ.\ project UNCE 24/SCI/008).}
}
\date{}
\author{Tomáš Domes\,\orcidlink{0009-0008-9765-1809} and Pavel Veselý\,\orcidlink{0000-0003-1169-7934}}
\affil{Computer Science Institute of Charles University, Czech Republic\\
	\texttt{\{domestomas,vesely\}@iuuk.mff.cuni.cz}
}

\begin{document}

\maketitle

\noindent 
Quantile summaries provide a scalable way to estimate the distribution of individual attributes in large datasets that are often distributed across multiple machines or generated by sensor networks.
The state-of-the-art summaries with \emph{relative error guarantees} of $\pm \varepsilon\cdot \mathrm{rank}$, offering increasingly higher accuracy towards the distribution’s tails, are ReqSketch~\cite{ReqSketch} and the recent streaming algorithm by Gribelyuk, Sawettamalya, Wu, and Yu~\cite{GribelyukSWY25}.
For inputs of size $N$ and probability $\delta$ of exceeding the error bound,
ReqSketch achieves space of $\O\left(\varepsilon^{-1}\cdot \log^{1.5}(\varepsilon N)\cdot \sqrt{\log \delta^{-1}}\right)$ memory words, and offers \emph{full mergeability}, allowing distributed or parallel processing of datasets.
In contrast, the latter algorithm exceeds the lower bound of $\Omega(\varepsilon^{-1}\cdot \log(\varepsilon N)\cdot \log\log\delta^{-1})$ from~\cite{add_lower} only by a factor of $\O\left(\log\varepsilon^{-1}\cdot (\log \varepsilon^{-1} + \log\log N)\cdot\log^2\log\delta^{-1}\right)$,
but it lacks the mergeability property, is substantially more involved than ReqSketch, and has unclear practical performance.
Both of these data structures consist of relative compactors from~\cite{ReqSketch}
that subsample the stream in a structured way and work well in the streaming setting but are difficult to analyze under mergeability.

We design a new relative-error quantile summary that achieves full mergeability and also improves the state-of-the-art streaming space bound to
$$
\O\left(\varepsilon^{-1}\cdot \log(\varepsilon N) \cdot\log \log \delta^{-1} \cdot (\log \varepsilon^{-1} + \log \log(\varepsilon N) + \log \log \delta^{-1})\right)\,,
$$ 
which is substantially closer to the lower bound.
At the same time, our data structure is simpler than the algorithm from~\cite{GribelyukSWY25} and therefore more suitable for implementation.

Our technical contribution is twofold: First, we design an \emph{adaptive compactor}, which improves its state when small items are inserted and is well suited for mergeability, compared to the relative compactor that behaves independently of the incoming items.
Second, we design a new \emph{space allocation strategy}, which assigns capacities to compactors and exploits their adaptivity.
While proving the error guarantees is relatively straightforward,
the main challenge is in analyzing space requirements, for which we design a new potential-function argument.

\section{Introduction}\label{intro}

Quantile summaries, or sketches, are data structures of sublinear size, typically just polylogarithmic in the input length,
allowing us to efficiently process large streaming or distributed datasets and
approximate their distributions by estimating ranks and quantiles.
Here, the rank of an item $y$ with respect to $\mathcal{S}$, denoted $\rank(y, \mathcal{S})$,
is the number of items in $\mathcal{S}$ that are smaller than or equal to $y$,
and quantiles are the inverse of ranks, i.e., a $\varphi$-quantile for $\varphi\in [0, 1]$ is the $\lceil \varphi\cdot |\mathcal{S}|\rceil$-th smallest input item.
Quantile summaries are essential for many applications in network monitoring, e.g.,
tracking latencies~\cite{MassonRL19,tene2015_latency_talk} or round-trip times~\cite{CormodeKMS05}, or in sensor networks~\cite{q_digest,HeCCW15}.

Many of these applications require two main properties from the summaries that are both challenging to achieve while preserving high space efficiency:
First, the quantile summary should provide an accurate distribution approximation, especially for the tails.
In particular, network latencies are typically heavily long-tailed~\cite{MassonRL19},
and understanding the tail requires increasingly more accurate estimates of the 95th, 99th, 99.5th, etc., percentiles.
The notion of \emph{relative error} captures this scenario by requiring that the rank error for an item of rank $R$ is at most $\varepsilon\cdot R$,
for an accuracy parameter $\varepsilon > 0$;
equivalently, for a quantile query $\varphi\in [0, 1]$, the summary should return a $\hat{\varphi}$-quantile, for  $\hat{\varphi} = (1 \pm \varepsilon)\cdot \varphi$.
While this yields lower error for items of small rank, high accuracy for the other tail, as desired in monitoring latencies, can be achieved by simply flipping the comparator.
A long line of work~\cite{GuptaZ03,CormodeKMS05,q_digest_new,ZhangLXKW06,zhang,ReqSketch} studied relative-error quantile summaries,
and recently Gribelyuk, Sawettamalya,
Wu, and Yu~\cite{GribelyukSWY25} developed a randomized streaming algorithm using near-optimal space of $\widetilde{O}(\varepsilon^{-1}\cdot \log \varepsilon N)$\footnote{
	The notation $\widetilde{O}(f)$ hides factors polylogarithmic in $f$, and the dependence on $\delta$, the probability of a too-large error for a single query.
}.

Second, processing the dataset in one pass is, in many cases, infeasible, 
as it may be distributed across many machines or consist of observations aggregated by sensors~\cite{q_digest}.
To allow for distributed or parallel processing of large datasets,
the sketch should come with a merge operation that takes two sketches representing datasets $\mathcal{S}_1$ and $\mathcal{S}_2$ and outputs one sketch representing the multiset union of $\mathcal{S}_1$ and $\mathcal{S}_2$.
A quintessential property of summaries is \emph{full mergeability}~\cite{agarwal2013mergeable},
stating that even when the sketch is built by \emph{any} sequence of pairwise merge operations on single-item sketches obtained from a set $\mathcal{S}$,
its error should be bounded, ideally achieving the same accuracy-space trade-off as in the streaming model.
The mergeability setting is more general, since the streaming model can be seen as a repeated merge with a single-item sketch.

The state-of-the-art relative-error and fully mergeable quantile summary is ReqSketch~\cite{ReqSketch},
which consists of a sequence of \emph{relative compactors}. These are buffers that randomly subsample the stream in a way that facilitates the relative error; see Section~\ref{technical_overview}.
ReqSketch has been implemented within the Apache DataSketches library~\cite{datasketches},
due to its overall simplicity, mergeability, and fast update time, compared to the widely used $t$-digest~\cite{tdigest,ReqSketch_improvements}.

While ReqSketch~\cite{ReqSketch} only achieves a suboptimal space bound of $\O(\varepsilon^{-1}\cdot \log^{1.5} \varepsilon N)$ memory words,
the recent improvement in~\cite{GribelyukSWY25}
does not come with any mergeability guarantees (or even a merge operation), and moreover, the new algorithm from~\cite{GribelyukSWY25} is also significantly more involved than ReqSketch.
Indeed, to the best of our knowledge, there is no implementation, and it seems likely that ReqSketch would still perform better than
the algorithm of~\cite{GribelyukSWY25} for data streams in practice.

\subsection{Our result}
The key obstacle to progress appears to be proving mergeability for relative-error sketches.
Indeed, despite the simplicity of the ReqSketch algorithm and its relatively accessible analysis in the streaming setting,
the proof of full mergeability in~\cite{ReqSketch}
is very complicated. Furthermore, the algorithm of~\cite{GribelyukSWY25} does not provide a merge operation at all.
This makes it difficult to either improve the space bound for fully mergeable sketches or generalize the proof for other error guarantees.

Here, we develop \emph{adaptive compactors} that naturally capture the charging required to bound the error under full mergeability.
When used instead of relative compactors in ReqSketch,
the new compactors allow for a significantly simplified proof of full mergeability,
while preserving its other properties like space requirements, update time, and relative-error guarantees{\interfootnotelinepenalty=10000 \footnote{
	See the previous version of our paper with the same space requirements as ReqSketch at \url{https://doi.org/10.48550/arXiv.2511.17396}.
}}.

Moreover, our compactors improve their state after the insertion of small items; in contrast, the state of relative compactors depends only on the length of the input stream. Taking advantage of this crucial property and 
using a new space allocation strategy inspired by~\cite{GribelyukSWY25},
we obtain an even better space bound than~\cite{GribelyukSWY25}, while keeping the data structure much simpler.
In particular, we have only one compactor per level of the sketch, as in ReqSketch~\cite{ReqSketch} or the uniform-error KLL sketch~\cite{KLL}. In contrast, the streaming algorithm from~\cite{GribelyukSWY25} needs a subsketch at each level of the final sketch, and this subsketch consists of multiple relative compactors.
Our sketch is fully mergeable, and the introduction of the merge operation adds very little complexity to both the algorithm and the proofs. The theoretical properties of the sketch are formally stated in our main theorem:

\repthm{thm:mainthm}

\Cref{thm:mainthm} gives guarantees for an arbitrary single query. To get the relative error bound for all queries $y\in \mathcal{U}$ simultaneously, i.e., the so-called \emph{all-quantiles guarantee}, we simply use the union bound over
a suitable $\varepsilon$-net, as described in Appendix~B of~\cite{ReqSketch}.
Namely, \Cref{thm:mainthm} with $\delta = \delta' \cdot \varepsilon / \log(\varepsilon N)$ gives $1 \pm \O(\varepsilon)$
multiplicative error for any rank query simultaneously with probability at least $1-\delta'$ using a sketch size of
$$
\O\left(\varepsilon^{-1} \cdot \log(\varepsilon N)\cdot  \log \log \left(\frac{\log(\varepsilon N)}{\varepsilon\cdot \delta'}\right) \cdot (\log \log \delta'^{-1} + \log \varepsilon^{-1} + \log \log(\varepsilon N))\right)\,.
$$

\subsection{Comparison to previous results and lower bounds}\label{lower_bounds}

\paragraph{Comparison to previous randomized algorithms}
Previously, the state-of-the-art space bound with full mergeability was
$\O(\varepsilon^{-1}\cdot \log^{1.5}(\varepsilon N)\cdot \sqrt{\log \delta^{-1}})$
by Cormode, Karnin, Liberty, Thaler, and Veselý~\cite{ReqSketch}, and in the streaming setting it was
$\O(\varepsilon^{-1}\cdot \log(\varepsilon N)\cdot  \log^3(\ln \delta^{-1})\cdot \log\varepsilon^{-1}\cdot (\log \log N + \log \varepsilon^{-1}))$
by Gribelyuk, Sawettamalya, Wu, and Yu~\cite{GribelyukSWY25};
these bounds are for an arbitrary single query. For a quick comparison, see \Cref{comparison_space}.

Here, we note that the error analysis in~\cite{ReqSketch} has a subtle (fixable) issue in analyzing the probability of a too-large error. We explain it in Section~\ref{sec:error_bound} (see the remark right before \Cref{time}) and describe how our analysis fixes it.

\begin{table}
	\centering
	\renewcommand{\arraystretch}{1.3}
	\begin{tabular*}{\textwidth}{@{\extracolsep{\fill}}llllc@{}}
		\toprule
		Algorithm & \multicolumn{1}{c}{$N$} & \multicolumn{1}{c}{$\varepsilon$} & \multicolumn{1}{c}{$\delta$} & Mergeable? \\
		\midrule
		ReqSketch \cite{ReqSketch} & $\log^{1.5}(N)$ & $\varepsilon^{-1}$ & $\sqrt{\log(\delta^{-1})}$ & \cmark \\
		\cite{GribelyukSWY25} & $\log(N)\cdot\log\log(N)$ & $\varepsilon^{-1}\log^2(\varepsilon^{-1})$ & $\log^3\log(\delta^{-1})$ & \xmark \\
		Ours & $\log(N)\cdot\log\log(N)$ & $\varepsilon^{-1}\log(\varepsilon^{-1})$ & $\log^2\log(\delta^{-1})$ & \cmark \\
		\bottomrule
	\end{tabular*}
	\caption{Comparison of the main randomized streaming algorithms for quantile estimation with relative error -- the dependence of the space on relevant parameters and whether the algorithm is mergeable; $N$ is the number of input items;
	$\varepsilon$ is the relative error; 
	$\delta$ is the probability of not satisfying the error bound.
	}
	\label{comparison_space}
\end{table}

\paragraph{Comparison to lower bounds and deterministic bounds}
First, there is a straightforward information-theoretic lower bound of 
$\Omega(\varepsilon^{-1}\cdot \log(\varepsilon N))$ memory words for any relative-error quantile summary, even if constructed offline (i.e., with random access to the whole input);
see Theorem~2 in~\cite{CormodeKMS05} or Appendix~A in~\cite{ReqSketch}.

In the streaming setting, a lower bound\footnote{
	The bound holds under the (reasonable) assumption that $\varepsilon \in \Omega(N^{-c})$ for some constant $c < 1$. Without the assumption, the bound gets smaller by a factor of $\log(\varepsilon N)/\log N$.} 
of $\Omega(\varepsilon^{-1}\cdot \log(\varepsilon N)\cdot \log \log \delta^{-1})$
for randomized comparison-based algorithms follows from Theorem~6.5 by Cormode and Veselý~\cite{add_lower}.
In particular, they show a lower bound of $\Omega(\varepsilon^{-1}\log^2(\varepsilon N))$ for deterministic comparison-based algorithms. If there were a randomized comparison-based algorithm with space $o(\varepsilon^{-1}\cdot \log(\varepsilon N)\cdot \log \log \delta^{-1})$ and we set $\delta < N!^{-1}$, by a union bound, there would be a sequence of random bits that gives low enough error for all possible input orders, yielding the existence of a deterministic comparison-based algorithm violating the deterministic lower bound.  Compared to this lower bound, the additional space factor of our algorithm is only $\O(\log \log \delta^{-1} + \log \varepsilon^{-1} + \log \log(\varepsilon N))$.

The method above shows that a deterministic algorithm can be constructed from a randomized one by using an extremely small $\delta$ and then guessing and fixing the right sequence of (previously random) bits. However, if the number $r$ of random bits used by the algorithm is polynomial in $\varepsilon N$ (which is true for many randomized algorithms, including ours), then there is a much simpler method without guessing the right random bits: If we just set $\delta < 2^{-r}$, then by definition, \emph{all} choices of random bits must give a small enough error for each input and we immediately
obtain a correct deterministic algorithm by choosing arbitrarily instead of randomly.

Thus, by this method, our main theorem with $\delta < 2^{-\varepsilon N}$ yields a deterministic comparison-based algorithm with space $\O(\varepsilon^{-1} \log^3(\varepsilon N))$. This algorithm needs a polynomial upper bound on $N$ to set some fixed parameters based on the value of $\log \ln \delta^{-1}\in \O(\log(\varepsilon N))$. This requirement can be lifted in the streaming setting by a standard trick in which we maintain a sequence of sketches of exponentially increasing sizes with different parameters; see e.g.~Section~5 in~\cite{ReqSketch}. 

With these modifications, the resulting algorithm matches exactly the space bound of the state-of-the-art deterministic algorithm by Zhang and Wang~\cite{zhang}, which achieves the same space bound and is mergeable only with a polynomial upper bound on $N$. Thus, any further improvement of the dependence of the space of our algorithm on $\delta$ would yield new results also for deterministic algorithms -- by the first method existentially and by the second method even constructively, but assuming a polynomial number of random bits. It is an open problem whether this is possible, as there is a $\log(\varepsilon N)$ gap between the known lower and upper bounds for deterministic comparison-based algorithms~\cite{add_lower}.

\subsection{Organization of the paper}
We have presented our result and compared it to the state-of-the-art sketches and known lower bounds.
In the remainder of \Cref{intro}, we briefly comment on practical performance (\Cref{practical}), review further related work (\Cref{prior_work}), and introduce basic notation (\Cref{notation}). 
In \Cref{technical_overview}, we explain the intuition behind our algorithm and compare it with previous work in~\cite{ReqSketch,GribelyukSWY25}.
Then, in \Cref{description}, we describe our sketch in detail and state a few crucial observations. \Cref{analysis} provides an analysis of the space bound and of the error guarantee.
We conclude with a discussion of the main problem in \Cref{conclusions}.

\subsection{Practical performance}\label{practical}
Despite a somewhat intricate analysis of the space requirements,
our algorithm is suitable for implementation due to its simplicity. To support this claim, we prepared a simple proof-of-concept implementation of the sketch and published it on GitHub\footnote{\url{https://github.com/domestomas/adaptive-sketch-experiments}}. The algorithm is indeed under 300 lines of Python code, including the merge operation, comments, functions for approximating ranks and quantiles, etc. We include a simple testing framework for comparing the error with ReqSketch~\cite{ReqSketch}, using the proof-of-concept implementation by the authors\footnote{
	\url{https://github.com/edoliberty/streaming-quantiles/blob/master/relativeErrorSketch.py}}. 
Early experiments on sorted, reverse-sorted, and random streams of length up to $10^9$ gave similar results in the streaming model, while our algorithm provides approximately half the error when constructing the sketch via merging sketches summarizing a similar number of items. From these early results, we conclude that our algorithm is suitable for implementation and especially promising when merge operations are needed. After the internal parameters are fine-tuned for practice, we believe that our algorithm will actually outperform ReqSketch~\cite{ReqSketch} in practice; however, this is outside the scope of this primarily theoretical paper, and we leave it to future work.

\subsection{Further related work}\label{prior_work}
Here, we focus on state-of-the-art quantile sketches with various error guarantees, summarizing their space bounds and mergeability properties.
The space bounds are given in memory words with $\Theta(\log\abs{\mathcal{U}} + \log N)$ bits, where $N$ is the input size and $\mathcal{U}$ is the universe of input items.
For randomized algorithms, we present the space bounds assuming a constant probability $\delta$ of an error that is too large for any fixed query.

\paragraph{Additive error}

Besides relative error, there is a large body of work on quantile summaries with the \emph{additive (uniform) error} of $\pm \varepsilon N$,
i.e., without higher accuracy for the tails.
The state-of-the-art additive-error sketch is the randomized comparison-based KLL sketch by Karnin, Lang, and Liberty~\cite{KLL}, achieving full mergeability in the optimal space $\Theta(\varepsilon^{-1})$.
For deterministic comparison-based sketches, the optimal space of $\Theta(\varepsilon^{-1}\cdot \log(\varepsilon N))$ is achieved by the Greenwald--Khanna (GK) summary~\cite{GK}; its optimality has only been shown recently~\cite{add_lower} (for a simplified version of the GK summary, see~\cite{GK_new}). However, the Greenwald--Khanna summary is not known to be fully mergeable, only ``one-way mergeable'' as discussed in~\cite{agarwal2013mergeable}. An older deterministic algorithm by Manku, Rajagopalan, and Lindsay~\cite{MRL} achieves full mergeability in space $\O(\varepsilon^{-1}\cdot \log^2(\varepsilon N))$.

The non-comparison-based deterministic additive-error $q$-digest~\cite{q_digest} achieves full mergeability in space $\O(\varepsilon^{-1}\cdot \log(\abs{\mathcal{U}}))$, that is, depending on the universe size, which must be known in advance. In this model, Gupta, Singhal, and Wu~\cite{GuptaSW-noncomparison_optimal} recently achieved the optimal space $\Theta(\varepsilon^{-1})$ deterministically, using a compressed version of $q$-digest, but they did not prove that their sketch is fully mergeable.

\paragraph{Relative error}
Apart from the work in~\cite{zhang,ReqSketch,GribelyukSWY25} mentioned above,
there is a modified version of the aforementioned $q$-digest by Cormode, Korn, Muthukrishnan, and Srivastava~\cite{q_digest_new} achieving relative error in space $\O(\varepsilon^{-1}\cdot\log(\varepsilon N)\cdot \log(\abs{\mathcal{U}}))$. Like the original $q$-digest, it is deterministic, non-comparison-based, fully mergeable, and requires prior knowledge of the universe $\mathcal{U}$.

\paragraph{Dynamic streams}
When the input items may be deleted, the currently best quantile summary is Dyadic CountSketch~\cite{luo2016quantiles},
requiring space $\O(\varepsilon^{-1}\cdot\log^{1.5}(\abs{\mathcal{U}})\cdot \log^{1.5}(\log(\abs{\mathcal{U}}) / \varepsilon))$;
very recently, the algorithm has been slightly simplified and its space usage improved to $\O(\varepsilon^{-1}\cdot\log^{1.5}(\abs{\mathcal{U}}))$~\cite{bulanek2026matrixfactorizationapproachturnstile}.
This algorithm, as well as any other which can handle arbitrarily many deletions,
is non-comparison-based. Furthermore, no sublinear-space algorithm can provide relative error guarantees in dynamic streams, since otherwise
one can first insert $n$ arbitrary elements and then repeatedly query for the minimum and remove it,
which retrieves all of the insertions as the minimum must be returned exactly under relative error.

\paragraph{Other quantile sketches}

Finally, besides algorithms designed with theoretical guarantees in mind, it is worth mentioning that many practitioners actually use $t$-digest~\cite{tdigest}, which is usually highly accurate in practice and supports merging. While $t$-digest aims at uniform or relative error depending on its parameter setup, it provides no worst-case guarantees, and indeed, there are adversarially constructed inputs, or even samples from certain distributions, that make the error of $t$-digest almost arbitrarily large~\cite{ReqSketch_improvements}.
A recently proposed SplineSketch~\cite{splineSketch} achieves better accuracy in practice than $t$-digest while also providing close-to-optimal uniform error guarantees, based on a suitable parameterization of the input.

\subsection{Preliminaries}\label{notation}

Here, we outline the notation and terminology used throughout the paper:
\begin{itemize}
\item By \emph{item}, we always mean an arbitrary item from a universe $\mathcal{U}$ with a total order.
For simplicity, we assume throughout the paper that all the input items are different.
However, all the definitions can be extended to support equal items, and all the
algorithms and proofs are correct even without this assumption.

\item An input $\mathcal{I}$ consists of items from $\mathcal{U}$. We denote the input size by $\abs{\mathcal{I}} = N$.

\item For any set of items $M$, let $\rank(y, M)$ be the rank of item $y$ in $M$, and let $\rank(y) = \rank(y, \mathcal{I})$ be the rank of $y$ in the input.

\item We use $\estrank(y)$ for the answer returned by the sketch for a rank query $y$.

\item The error of a rank query $y$ is $\err(y) = \estrank(y) - \rank(y)$.

\item To avoid edge cases, by $\log x$ we always mean $\log_2(\max(2, x))$ and by $\ln x$ the natural logarithm $\log_e(\max(e, x))$.

\item We denote the ranges of arrays as mathematical intervals; thus, $)$ means exclusion and $]$ inclusion. For example, $A[3, 5]$ denotes the elements $A[3], A[4]$ and $A[5]$, while $A[3, 5)$ means only $A[3]$ and $A[4]$.

\end{itemize}

\subsection{AI methodology}

\begin{flushleft}
\renewcommand{\arraystretch}{1.3}
\begin{tabular}{@{}l@{\qquad}c@{}}
\textbf{Activity} & \textbf{AI used} \\
Asking the research question & No \\
Coming up with the approach & No \\
Proof development & No \\
Writing and exposition & No \\
Checking for bugs & Yes \\
Other supporting tasks & Yes \\
\end{tabular}
\end{flushleft}

\begin{sloppypar}
\paragraph{Details (elaborating on the points above).}
The main use of AI was for repeatedly finding bugs in the error analysis, where we originally violated the conditions of the probability bound we used. This forced us to come up with a better proof. We also used it to find some less critical technical erros in the analysis of the space bound, and we let AI repair some of them. This included proving some observations. We let AI check our grammar and generally polish the paper, refine some arguments, etc\dots We personally verified everything and we take full responsibility for the result.
\end{sloppypar}

\section{Technical overview and detailed comparison to prior work}\label{technical_overview}
In this section, we give some intuition behind our algorithm and compare it to ReqSketch~\cite{ReqSketch}
and the streaming algorithm from~\cite{GribelyukSWY25}.
This section can nevertheless be skipped as our algorithm's description and analysis do not rely on it.
To better present the core ideas, we will make many simplifications and allow for some imprecision, typically in constant factors.
In particular, we ignore the parameter $\delta$, which is needed for high-probability bounds. For the formal description and analysis, see \Cref{description,analysis}.

After a high-level overview of the basic sketch design in \Cref{overview_basic}, which stems from the KLL sketch~\cite{KLL} for the uniform error, we explain our sketch in two stages by comparing it to the two main results we build upon. In \Cref{overview_compactors}, we describe the \emph{relative compactors} from~\cite{ReqSketch} and explain our \emph{adaptive compactors}. In \Cref{overview_elastic}, we give a quick overview of the streaming algorithm from~\cite{GribelyukSWY25} and describe the rest of our algorithm, in particular, the notion of \emph{marking} and \emph{non-marking} compactions and the \emph{space allocation strategy}. 

\subsection{Basic sketch design}\label{overview_basic}
The basic sketch design originally comes from the KLL sketch~\cite{KLL}. KLL consists of a series of \emph{compactors}, which are essentially buffers, arranged into \emph{levels}. Items from the input stream are added to the level-0 compactor and, when any compactor reaches its \emph{capacity} (it is \emph{full}), we perform the \emph{compaction} operation.
The compaction sorts the items in the compactor, deletes odd-indexed or even-indexed items with equal probability, and moves the rest to the compactor one level higher. Therefore, each item at level $h$ ``represents'' $2^h$ items of the original stream.
Then the rank estimate for a query item $y$ is simply the sum of weights of all stored items $y'\le y$.
Moreover, as the items are sorted and we delete each odd/even-indexed item, each promoted item represents a deleted item of a similar rank.
Each compaction affects the error of rank estimates for some universe items, namely, by either increasing or decreasing their rank estimate by $2^h$ with equal probability, where $h$ is the level. This implies that the expected error change is zero, and then the error analysis for any item $y$ mainly boils down to bounding the number of compactions affecting the estimation error for $y$.

In our sketch, we use novel \emph{adaptive compactors} together with a new \emph{space allocation strategy} to determine the capacity of each compactor.

\subsection{Relative vs.\ adaptive compactors}\label{overview_compactors}

In this section, we briefly sketch the \emph{relative compactors} from~\cite{ReqSketch}, also used in~\cite{GribelyukSWY25}, and then explain the design of our \emph{adaptive compactors}.

\subsubsection{ReqSketch and relative compactors}
In the \emph{relative compactors} used in ReqSketch~\cite{ReqSketch} (and also in our \emph{adaptive compactors}), the compaction operation is performed only on the $X$ largest items of the compactor, for some even $X$; see \Cref{compaction_general}.
This is based on the observation that if $y$ is smaller than all the items participating in the compaction, the compaction does not affect the error for $y$.
The items participating in the compaction are determined in such a way that the smallest $\varepsilon^{-1}$ items are never removed from the compactor. This is related to the relative error guarantee, as the accuracy of the sketch must be higher for items of smaller rank. In particular, for the smallest $\varepsilon^{-1}$ items, the error must equal zero, and so we must store them all (and indeed, the level-0 compactor always stores the smallest $\varepsilon^{-1}$ inserted items).
In other words, the items of rank roughly $\varepsilon^{-1}2^h$ are summarized by the level $h$ of the sketch.

\begin{figure}
\centering
\includegraphics[width=\textwidth]{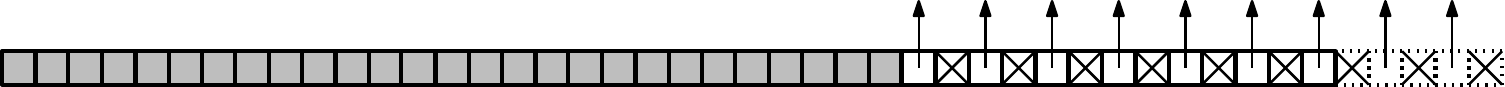}
\caption{
	The compaction operation of relative and adaptive compactors happens when the compactor is full (all the non-dashed positions contain an item). 
	Items are first sorted from the smallest to the largest.
	The compaction evicts the crossed items from the memory and ``promotes'' the items with arrows to the next level,
	while the gray ones do not participate in the compaction, so they remain in the buffer.
	The number of items participating in the compaction is, however, computed differently in relative and adaptive compactors.}\label{compaction_general}
\end{figure}

A crucial part of the algorithm is the choice of the number of items participating in the compaction. The authors in~\cite{ReqSketch} use a deterministic process called the \emph{compaction schedule}. The compactor is split into $S+1$ \emph{sections} of size $K$, where $(S+1)\cdot K$ is the \emph{capacity}, and the schedule is determined by the binary representation of the number $P$ of already performed compactions; namely, if $Z$ is the number of trailing ones of $P$, the sections $[S-Z+1, S+1]$ participate in the compaction.

Let us explain this in detail.
Given the number $P$ of compactions already performed on this level, the section size $K$, and the section count $S$, the compaction happens when there are at least $K\cdot (S+1)$ items in the buffer (the compactor is full).
We index the sections and bits of the binary representation of $P$ starting from index $1$ in such a way that the smallest items are in section $1$,
the most significant bit of $P$ has index $1$, and 
the least significant bit has index $S$.
Let the $i$-th bit of $P$ correspond to the $i$-th section of the compactor. We say that section 
$j \leq S$ is \emph{marked} if the $j$-th bit is set to 1. 
Letting $Z$ be the number of trailing ones of the binary representation of $P$,
the items from section $S+1$ and from the $Z$ marked sections $[S-Z+1, S]$ participate in the compaction. As in KLL, the odd- or even-indexed participating items (with equal probability) are promoted to the next level, and the rest are discarded. Then we increment $P$, which essentially means marking the previously unmarked section $s = S-Z$ and unmarking sections $[S-Z+1, S]$; see \Cref{compaction_relative}.

\begin{figure}
\centering
\includegraphics[width=0.9\textwidth]{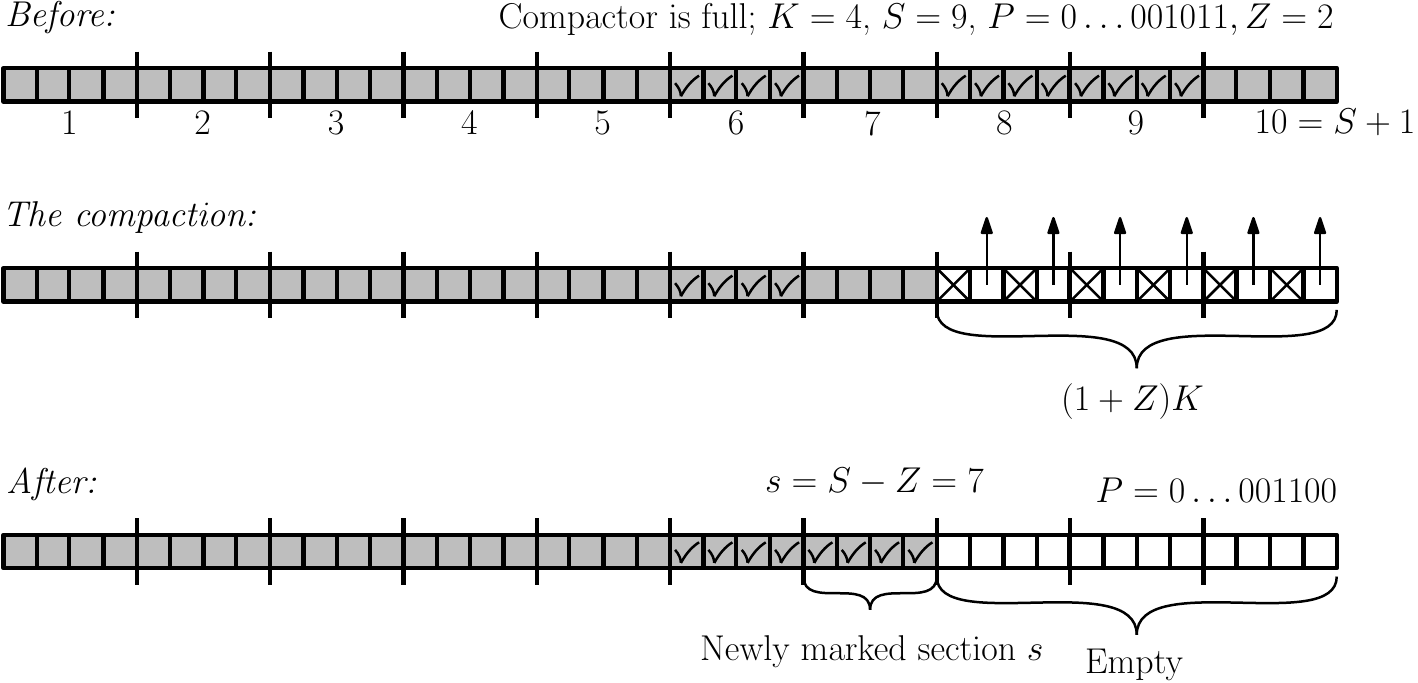}
\caption{The compaction operation in a relative compactor;
$P$ is the number of already performed compactions (written in binary), and $Z$ is the number of trailing ones of $P$.}\label{compaction_relative}
\end{figure}

To see the motivation behind the compaction schedule, observe that at the time $t_0$ when section $s$ is marked, all items in $s$ are of smaller rank than all items participating in the compaction. After another $2^Z$ compactions, at some time $t_1$, the marked section $s$ is compacted.
Generally, the items present in section $s$ at time $t_1$ can be different from the items present at time $t_0$.
However, due to the sorting before each compaction, all the new items have ranks smaller than or equal to those of all the old ones. This implies that \emph{for each compaction $p$, there is a set $U_p$ of $K$ distinct items of rank smaller than all items participating in $p$, which are either later removed by another compaction, or remain in the compactor at the end of the algorithm;
crucially, the sets $U_p$ are pairwise disjoint.} Let us call this the \emph{schedule property}.

Why is this so important? Let us recall the observation that if $y$ is smaller than all the items participating in compaction $p$, then $p$ does not affect the error for $y$. For a fixed item $y$ and level $h$, let $R_h$ be the number of items smaller than $y$ ever inserted into $h$. Then, by the schedule property, the number of level-$h$ compactions affecting the error of $y$ is at most $R_h/K$. This helps us bound the overall error for $y$.

As for the space complexity,
observe that for the schedule property, we need the section count to be large enough; in particular, we need $S \geq \log(P)$, so that there is always an unmarked section to be marked after the compaction and $P$ never overflows.
As the number $P$ of compactions is in $\O(N/K)$, it suffices to set $S \in \Omega(\log(N/K))$. 
In ReqSketch, they choose $K = \varepsilon^{-1} / \sqrt{\log (\varepsilon N)}$ and thus each compactor contains asymptotically $SK \leq \varepsilon^{-1}\sqrt{\log (\varepsilon N)}$ items, while the number of compactors is $\log(\varepsilon N)$, resulting in space complexity $\O(\varepsilon^{-1}\log^{1.5}(\varepsilon N))$. Both in~\cite{GribelyukSWY25} and in our result, we use $K = \varepsilon^{-1}$ and allow some of our compactors to contain $\varepsilon^{-1} \log(\varepsilon N)$ items.

Unfortunately, the notion of compaction schedule becomes less elegant under merge operations. The authors in~\cite{ReqSketch} developed a merging algorithm that has two deficiencies:
First, when merging two relative compactors, they take the binary OR of their compaction schedules,
which is efficient, but the nice intuition behind the compaction schedule is somehow lost, and the previously easy-to-see bounds on the number of compactions affecting the error require an involved charging argument (Lemma~6.4 in~\cite{ReqSketch}).
Second, the parameters of each sketch are recomputed multiple times, every time the summarized input size of the sketch gets squared,
and this yields a complicated tight analysis of the variance (e.g., Lemma~6.12 in~\cite{ReqSketch}).

\subsubsection{Adaptive compactors}

We propose an alternative to the compaction schedule, which is a bit less elegant in the streaming setting, but naturally covers full mergeability with much simpler proofs. Most importantly, the adaptive compactor has the property of improving its state after the addition of small enough items.
The core of our method is to tie the marks not to the sections of the compactor, but directly to the items.
Then we can simply merge two compactors by merging the two sorted arrays -- all the marked items simply stay marked. Moreover, when new small items arrive, they push the larger marked items to the right, improving the state of the compactor.

However, in this model, the marked items no longer form contiguous sections of the compactor due to future item additions.
It is possible to analyze the error without enforcing contiguous sections of marked items, but the resulting compaction algorithm and its analysis are more complicated, requiring the notions of dense and sparse sections with respect to the marking.

Instead, we further modify the algorithm to recover contiguous sections of marked and unmarked items, which yields a simpler analysis.
Suppose the lowest-ranked item removed during a particular compaction has rank $r_0$ and the items marked after the compaction have rank $r_K < r_{K-1} < \cdots < r_1$. Then a new item with rank $r_x$ arrives such that $r_K < r_x < r_0$. Such an item can surely be marked instead of the item of rank $r_K$. With such remarking, the $K$ items marked for a particular compaction always form a contiguous section in the compactor. To do this remarking efficiently, we do not actually mark the items, but remember for each compaction the smallest-ranked removed item, called a \emph{marker}. From this information, we can recover the marking of items in linear time. To further simplify the analysis, we perform the ``remarking'' in such a way that the contiguous sections of marked items stay aligned, i.e., they start at positions divisible by $K$, which allows us to have marked sections instead of marked items, similarly to relative compactors.

Given the marking of sections, determining the size of the compaction is conceptually the same as for relative compactors. We look for the rightmost unmarked section $s$, remove all items to the right of this section together with their markers, and then mark $s$, as depicted in \Cref{compaction_relative}.

Our strategy not only preserves the schedule property, but for adaptive compactors, this property is a simple observation, even under full mergeability, while the proof for relative compactors is quite involved.

Replacing relative compactors by adaptive compactors in ReqSketch~\cite{ReqSketch} yields the same space and error bounds with a much simpler proof of the error bound, while the proof of the space bound gets more involved. 
However, we use the additional property of improving the state by the addition of small items as a substitute for the resets introduced in~\cite{GribelyukSWY25} to obtain almost optimal space while preserving the full mergeability. To do this, we need an alternative to the \emph{space allocation strategy} from~\cite{GribelyukSWY25} to assign the right capacities to the compactors, and one additional trick which we call \emph{skipping levels}.

\subsection{Space allocation strategy \& friends}\label{overview_elastic}
In this section, we first give a high-level overview of~\cite{GribelyukSWY25}. Then, we explain the rest of our sketch in comparison to their streaming algorithm. For a quick comparison of the compactors and space allocation strategies used in the three sketches, see \Cref{comparison_compactors}.

\begin{table}
	\centering
	\small
	\renewcommand{\arraystretch}{1.3}
	\begin{tabularx}{\textwidth}{@{}l>{\raggedright\arraybackslash\hsize=0.6\hsize}X>{\raggedright\arraybackslash\hsize=1.4\hsize}X@{}}
		\toprule
		Algorithm & \multicolumn{1}{c}{Compactors} & \multicolumn{1}{c@{}}{Space allocation strategy} \\
		\midrule
		ReqSketch~\cite{ReqSketch} & Relative compactors & One compactor per level, all of the same size, growing in time \\
		\addlinespace
		\cite{GribelyukSWY25} & Elastic compactors (relative compactors that are dynamically resized) & Each level is a subsketch with multiple elastic compactors; levels satisfy the monotonicity property; dynamic space allocation based on resets and the schedule property \\
		\addlinespace
		Ours & Adaptive compactors & One compactor per level; dynamic space allocation based on counters that express the state of the compactors \\
		\bottomrule
	\end{tabularx}
	\caption{Comparison of the compactors and space allocation strategies used in the three main sketches.}
	\label{comparison_compactors}
\end{table}

\subsubsection{The algorithm by Gribelyuk, Sawettamalya, Wu, and Yu}
First, we stress that the following is a very simplified overview of~\cite{GribelyukSWY25}. Keeping this in mind, let us briefly outline their algorithm.

Their sketch again consists of $\O(\log(\varepsilon N))$ levels, each summarizing items of rank roughly $\rank_h \eqdef \varepsilon^{-1} 2^h$ w.r.t.\ the current input stream (as in ReqSketch and in our algorithm).
However, in their algorithm, each level contains a subsketch consisting of a sampler and $\log(\varepsilon^{-1})$ ``sublevels'', each containing one compactor. 
The items on sublevel $i$ of the subsketch on level $h$ of the sketch have weight $\lceil \varepsilon 2^{i+h} \rceil$.

\paragraph{The monotonicity}
For all levels $h>0$, all items in the subsketch on level $h$ are larger than all items on level $h-1$. Let us call this property the \emph{monotonicity} of the sketch. 
To keep the sketch monotone, they do not insert all new items into level $0$, but for each incoming item $y$, they choose the only level where monotonicity is preserved. 
Before an item is inserted into a level $h$, it is sampled by a uniform sampler with probability $\lceil \varepsilon 2^h \rceil^{-1} \approx \rank_h^{-1}\varepsilon^{-2}$, and if sampled, it is inserted into sublevel $0$ of the level-$h$ subsketch.
This gives us the correct weight after insertion. The sampling does not change the asymptotics of the error, as the error due to the sampling is approximately $\varepsilon \rank_h$; multiplying the sampling probability by $\log \delta^{-1}$ gives this error with high probability.

\paragraph{The reset}
When the cumulative weight of all items on some level $h$ is at least $3\cdot \varepsilon^{-1} 2^{h}$, the largest third of them are moved to the subsketch on level $h+1$ (more precisely, items are moved from the largest until the total weight on the level becomes at most $2R_h$). 
They call this situation a \emph{reset}.
The crucial observation is that due to the monotonicity of the sketch, after a constant number of resets, all the items originally on level $h+1$ are ``pushed away'' to the next level.  
To explain the motivation for resets in detail, we need to introduce the \emph{elastic compactors}, an extension of relative compactors.

\paragraph{The elastic compactors}
In the subsketches, they use a version of relative compactors, where the section size is fixed to $K \approx \varepsilon^{-1}$, and the section count $S$ (and consequently the capacity) is given by a \emph{space allocation strategy}. They call this version \emph{elastic compactors} as they can shrink or expand. 
They give a slightly different view of the compaction operation. Instead of the number of compactions, let the sections of the compactor represent a binary counter $T<1$ such that the $i$-th section represents its bit of weight $2^{-i}$. 
The compaction algorithm itself is the same -- given the section count $S$, if section $S+1$ is full, we find the rightmost unmarked section $s$, compact the sections to the right of $s$, and mark $s$. Thus, any compaction increases the counter $T$ by $2^{-S}$ where $S$ is the current section count. 
To preserve the schedule property, we need $\sum_{t \in X} 2^{-S_t} < 1$ where $X$ is the set of all times when a compaction happens, and $S_t$ is the section count at time $t$. This condition ensures that the counter $T$ does not overflow, and we always have an unmarked section to mark after each compaction. They say that such a choice of the values of $S_h$ is \emph{feasible}.

\paragraph{The schedule property}
Finally, the \emph{reset} explained above means that we can reset the counter $T$ to 0. 
Let us recall the schedule property: for each compaction $p$, we need a set $U_p$ of $K$ small items, and this can be the items removed from section $s$ marked by $p$ in the next compaction where $s$ participates. Thus, if after a constant number of resets of $T$, all items originally present in the compactor are pushed out, then the schedule property is preserved; even though each item can be used in constantly many different sets $U_p$, this does not hurt the asymptotics.
Then, instead of the sum in the definition of feasibility above, we only need $\forall i, j: \sum_{t \in W_{i, j}} 2^{-S_{i,t}} < 1$, where $W_{i, j}$ is the set of all times between the $j$-th and $(j+1)$-th reset on level $i$ when a compaction happens (let us consider time 0 and the current time also as times of reset) and $S_{i, t}$ is the section count on level $i$ at time $t$.

\paragraph{The space allocation strategy}
It remains to explain the space allocation strategy.
For each level $i$, let $t_{i, j}$ be the time of the $j$-th reset. For simplicity, considering level $i+1$, let us assume that each time interval $[t_{i+1, j}, t_{i+1, j+1})$ is contained in some interval $[t_{i, \mathrm{parent}(j)}, t_{i, \mathrm{parent}(j)+1})$ (the general case needs some technical changes as explained in the original paper). Then, the section count of any compactor of a subsketch on level $i$ at time $t\in [t_{i+1, j}, t_{i+1, j+1})$ is given as 
$$
S_{i, t} = \log\left(\frac{t_{i, \mathrm{parent}(j)+1} - t_{i, \mathrm{parent}(j)}}{t_{i+1, j+1} - t_{i+1, j}}\right) + 5\log \varepsilon^{-1},
$$
where the time is measured in changes to the memory of the whole sketch.
As time is polynomial in $N$, the sum over all levels telescopes to $\O(\log N \cdot \log \varepsilon^{-1})$ at any given time $t$, and as we have $\log \varepsilon^{-1}$ compactors per level, it yields near-optimal space. Moreover, within any interval $[t_{i+1, j}, t_{i+1, j+1}]$, there can be at most $3\varepsilon^{-2}$ insertions into the level $i$ by the definition of the reset. Thus, during each ``child'' interval on level $i+1$, we have at most $3\varepsilon^{-2}$ compactions on any fixed compactor of level $i$ of the sketch. It follows that for child intervals $[t_{i+1, m}, t_{i+1, m+1}], \dots, [t_{i+1, r-1}, t_{i+1, r}]$ of parent interval $[t_{i, k}, t_{i, k+1}]$ we have
$$
\sum_{t \in W_{i, k}} 2^{-S_{i, t}} \leq \sum_{q=m}^{r-1} 3\varepsilon^{-2} \cdot \frac{t_{i+1, q+1} - t_{i+1, q}}{t_{i, k+1} - t_{i, k}} \varepsilon^{5} 
\leq 0.25 \cdot  \frac{\sum_{q=m}^{r-1} (t_{i+1, q+1} - t_{i+1, q})}{t_{i, k+1} - t_{i, k}} \leq 0.25,
$$
which gives us feasibility.
Thus, their algorithm preserves both feasibility and the schedule property and consequently the relative error guarantee in space close to $\varepsilon^{-1} \log N$.

\paragraph{Limitations}
We conclude with shortcomings of the result in~\cite{GribelyukSWY25}.
Apart from the overall complexity of the algorithm's description, it comes with no merge operation, which further limits its practical usability.
Indeed, it is not clear how to merge two elastic compactors so that the error can be bounded, as even for simpler relative 
compactors, the analysis under arbitrary merging is highly involved~\cite{ReqSketch}. 

\subsubsection{Our algorithm}

Having explained~\cite{GribelyukSWY25}, let us now dive into our algorithm. 

\paragraph{Monotonicity} First, observe that our adaptive compactors naturally possess a property similar to the resets used in~\cite{GribelyukSWY25}. Insertion of new, small items into the buffer causes the marked items to move to the right. This yields an even more ``fine-grained'' property -- if we associate a binary counter $T<1$ with the marked sections as in the elastic compactors, inserting $K$ sufficiently small items reduces $T$ by half.

However, to leverage this property, we would need something similar to monotonicity. It turns out the idea of inserting items into the ``right'' level is not directly applicable to our sketch, as we have just one compactor per level, and thus the weight of items on level $h$ is just $2^h$ in our sketch and sampling with probability $2^{-h}$ yields an error that is too large for items of rank roughly $\varepsilon^{-1}2^h$. However, the solution is simple: We insert items of rank roughly $\varepsilon^{-1}2^h$ into level $h - \log \varepsilon^{-1}$ instead of level $h$. Now we can sample the items with probability $2^{-h}\varepsilon^{-1}$. Of course, we do not obtain the pure monotonicity property, but each item $y$ inserted into level $i$ is smaller than all items on levels $\approx i + \log \varepsilon^{-1}$ and higher at the time of insertion of $y$. It turns out this property is sufficient for us.

\paragraph{Skipping levels vs.\ adaptiveness}
Let us explain why it is beneficial to combine skipping low levels and our adaptive compactors. The adaptive compactors work best with a reverse-sorted stream, when each new item is smaller than the previous one. In this scenario, we only need constant space per compactor, obtaining optimal space $\O(\varepsilon^{-1}\log (\varepsilon N))$. On the other hand, for sorted streams, the adaptive compactors (without skipping levels) behave exactly as relative compactors. Using skipping levels but not adaptive compactors, it is the other way around: For a reverse-sorted stream, we always insert into level $0$, and thus no skipping happens. For a sorted stream, skipping levels is equivalent to iteratively building additive-error sketches for prefixes of the stream of length $\varepsilon^{-1}2^i$, for $i = 0, \dots, \log_2(\varepsilon N)$, each with additive error $\varepsilon$. This can be done in the optimal space $\O(\varepsilon^{-1}\log \varepsilon N)$. We see that the adaptivity of our compactors and the notion of skipping levels form a kind of duality, and it is best to let each approach the problem from the opposite direction.

\paragraph{Marking and non-marking compactions}
The use of samplers to skip some low levels works, but they are in fact not necessary, and removing them leads to a cleaner algorithm description.
Let us explain, from another angle, the reason behind using a sampler to skip relatively low levels of the sketch. We claimed above that this is so that the new items are smaller than the items on the next level, and help us to improve the state of the next level, eventually resetting it.

Now, one may ask why not insert all the items into level $0$, as they would eventually get to the appropriate level where they cause the reset/improvement.
This is because all the discussed variants of compactors have some state given by marked sections, which we can represent by some kind of counter. By doing compactions, the state gets somehow ``worse'' as the counter increases, and so does the space usage. Thus, by skipping some unnecessary levels, we save ourselves the damage done on those levels and save some space. This line of reasoning yields a technique of obtaining the same result without using samplers -- in the case of a compaction where only large items participate, we simply do not mark a new section/items and thus do not damage the compactor as we do not increment the counter (here, the items must be large relative to their level).

Let us explain this in more detail. Let $\min(p)$ be the smallest item participating in a compaction $p$ on some level $i$. We check the approximate rank of $\min(p)$, as we did in the case of insertion with samplers: We find the highest level $h$ such that $\min(p)$ is larger than the smallest $\varepsilon^{-1}$ items on $h$. If $h > i + \log \varepsilon^{-1}$, level $i$ is too low for $\min(p)$ and should be skipped. In this scenario, we \emph{do not mark a new section} after compaction $p$. We say that this compaction is \emph{non-marking}, in contrast to \emph{marking} compactions. Non-marking compactions do not increase the counter and do not degrade the state of the compactor. Then, for any item $y$, the error for $y$ from the compactions on the levels in which items $\le y$ participated in non-marking compactions is at most the error of the sampler -- the process itself is in fact analogous to reservoir sampling from the point of view of $y$. Note that this strategy has the additional nice property that if the rank of $y$ in the input stream changes significantly after its insertion, but while $y$ is still on some ``low'' level, the compactions are marking or non-marking with respect to the current rank. Of course, it can still happen that the rank of $y$ gradually increases while $y$ is close to its designated level, and thus, this does not help us in the worst case. Also note that a similar strategy of marking and non-marking compactions can be used in the sketch from~\cite{GribelyukSWY25}, only with level difference $1$ instead of $\log \varepsilon^{-1}$.
We call this level difference the \emph{gap} and denote it by $G$. In the actual algorithm, we need $G$ larger than $\log \varepsilon^{-1}$ due to the high-probability bounds and some technicalities in the proof of the space bound.

\paragraph{Space allocation strategy}
Finally, let us discuss our allocation strategy. The most basic idea is analogous to the one used in~\cite{GribelyukSWY25}. Recall that they give the section count as the difference of logarithms of the amount of time from the last reset for the current level and for the level above. The logarithm of the time from the last reset can be viewed as the state of the compactor -- the longer from the last reset, the worse the state. Thus, the section count is very roughly given by ``how much worse the state of this level is compared to the level above''. That is the general idea we want to use.

Let us start with a high-level overview of our strategy. Each compactor has some state given by its marked sections, represented by some real counter $T$. Our counter is technically more complicated than the previous versions, but still, a larger $T$ corresponds to a worse state. Non-marking compactions do not increase $T$. A marking compaction $p$ on level $h$, which marked a set of items $X$, increases $T_h$ by something like $2^{-S_h}$, where $S_h$ is the current section count on level $h$. For this increase, we would like to pay by decreasing $T_i$ for some level $i > h$ due to the items from the set $X$ being inserted into the level-$i$ compactor and decreasing the counter $T_i$ by shifting the marked items/sections to the right. We know that as $p$ is marking, almost all of the items on levels $h+G$ and higher are larger than the items in $X$ -- thus inserting them into any of those levels would cause a significant decrease in $T$. The core of our strategy is that, over all levels $i > h+G$ (where $G$ is the gap), we pick level $b(h)$ such that $T_{b(h)}$ is maximal; we call it the \emph{bounding level}. Then we set the section count of $h$ to something like $T_h - T_{b(h)}$. In other words, among the levels that would improve their state by inserting the items of $X$, we pick the one where the improvement is the largest (as the improvement is proportional to the counter). We set the section count on level $h$ such that the damage done by the compaction $p$ cancels with this improvement. The larger $T_{b(h)}$ is, the larger the improvement, the larger the damage we can allow, and the smaller $S_h$ can be used.

\paragraph{Key properties of the strategy}
How can we prove the space bound of this strategy? We use a telescoping property similar to the one in the strategy in~\cite{GribelyukSWY25}.
First, consider the case $G = 1$. Some of the levels have constant section count; those take cumulative space $\O(\varepsilon^{-1} \log {\varepsilon N})$ (as we have $K \approx \varepsilon^{-1}$). For the set $V$ of levels with non-constant section count, the sum $\sum_{h \in V} (T_h - T_{b(h)})$ telescopes and the cumulative space is $\O(\varepsilon^{-1} T_{\max})$. Thus, we have a good space bound assuming $T_h \in \O(\log(\varepsilon N))$ for any $h$. If we have $G>1$, the same telescoping argument gives us a $G$ times larger space. Thus, the proof boils down to having $T$ bounded, which is the main challenge.

Regarding the schedule property, we need the feasibility as before -- there must always be an unmarked section in the compactor so that $T$ does not ``overflow''. It again boils down to the properties of $T$, and thus it is time to properly define the counter $T$ and give a more precise formula for the section count.

\paragraph{Our counters}
Let us call the leftmost unmarked section the \emph{pivot} section and let $x$ be the number of sections to the left of the pivot. Let those sections to the left of the pivot have weights $2^0, 2^1, \dots, 2^{x-1}$ and let the rest of the sections, including the pivot, have weights $2^{x}, 2^{x-1}, \dots$. The counter $T$ is simply the sum of the weights of the marked sections. 
We define the section count $S$ as $2x+1 + \max(0, T_h - T_{b(h)})$.
Thus, the compactor always contains all the sections of weight at least 1 (there are exactly $2x+1$ of them) and on top of them another $\max(0, T_h - T_{b(h)})$ sections.

This representation of $T$ and $S$ has the properties we need. First, by definition, there is always an unmarked section -- the pivot section. This gives us feasibility. Second, the first ``mandatory'' part of the compactor has $2x + 1 \leq 2\log(T) + 1$ sections. This is not constant, but assuming $T \in \O(\log(\varepsilon N))$, it is $\O(\log\log(\varepsilon N))$ and thus it increases the space bound only by a factor of $\O(\log\log(\varepsilon N))$.

And third, a marking compaction increases the counter by $2^{-\max(0, T_h - T_{b(h)})}$, which is exactly what we need to formally prove the core idea above -- that for the increase of $T_h$ during a compaction we pay by the decrease of $T_{b(h)}$ later. We get this by defining a potential function $\Phi$ roughly as the sum of $\Phi_h = 2^{4h+T_h}$ over all levels $h$ and proving that the potential is polynomial in $\varepsilon N$. The proof is a bit technical, but the most basic idea is the one explained above -- the increase of $\Phi_h$ during a compaction is bounded by the decrease of $\Phi_{b(h)}$ later. The term exponential in $h$ in the definition of $\Phi$ is there to compensate for the fact that the number of small items moving up decreases approximately by half at each level (now, the number of items decreases by a factor of two, but their ability to destroy some $\Phi$ above increases $2^4$ times). 
The only case where we cannot charge the increase of $\Phi_h$ to a future decrease of $\Phi_{b(h)}$ is when all levels above $h+G$ are in a good state (in particular, all have $T<2$). In this case, we charge the increase to an imaginary level $\infty$, and we prove that the amount of potential charged to that level throughout the whole algorithm is polynomial in $\varepsilon N$. For a more detailed explanation, see \Cref{space_intuition}.

\paragraph{Mergeability}
Let us conclude this section with some notes regarding mergeability. Most of the ideas presented above also work in the mergeability setting; however, the definition of the counter $T$ does not. In particular, merging two compactors with counters $T_1, T_2$ can result in a compactor with counter $T \approx T_1 \cdot T_2$. Thus, in the actual algorithm, we use a different definition of $T$. Let us very briefly sketch the differences; see \Cref{counters} for the full definition. The most important change is that we always keep the leftmost two sections unmarked; they have no weight, and we redefine the pivot section as the third unmarked section from the left. This change alone gives us $T \leq 2\cdot \max(T_1, T_2)$, as the number $x$ of marked sections to the left of the pivot cannot increase by a merge. We make additional changes such that $T \leq \max(T_1, T_2)$ (in fact, we even have a slightly stronger property; see \Cref{obs:tmerge}). 
Our goal is to define $T$ in such a way that any suffix of the sections starting with two unmarked sections has weight zero -- this, together with the previous modification, implies $T \leq \max(T_1, T_2)$ after a merge. For this, in addition to positive weights for marked sections, we give negative weights to unmarked sections, and we define $T$ recursively by its suffixes -- the weight of each suffix is the maximum of 0 and the sum of the weight of its first section and the weight of the suffix one section shorter. For this idea to work, we need a final technical change -- the weights of the sections must increase by a factor $< 2$; in particular, the weight of a marked section is $2^{i/2}$ instead of $2^i$, and the weight of the same section if unmarked is $-\sqrt{2}\cdot 2^{i/2}$.

\section{Description of the algorithm}\label{description}
In this section, we give a full formal description of our algorithm.

The high-level design of the sketch is analogous to KLL~\cite{KLL} or ReqSketch~\cite{ReqSketch}, namely that it consists of compactors arranged into levels. Our main contribution lies in using adaptive compactors described in \Cref{the_compactor} together with an appropriate space allocation strategy described in \Cref{space_allocation}.

\subsection{Basic sketch design}
The sketch consists of a sequence of $H$ \emph{adaptive compactors} indexed from 0 to $H - 1$. We imagine the compactors as arranged in \emph{levels} -- the compactor at level 0 is at the bottom, and the compactor at level $H-1$ is at the top. To simplify some later definitions and proofs, we have one additional \emph{imaginary level} $\infty$ which has no compactor (so it does not contain any items), and it is in fact not represented in the algorithm in any way.

Apart from the sequence of compactors, the sketch consists of the error bound $0 < \varepsilon \leq 1$, the failure probability $0 < \delta \leq 2^{-3}$, the \emph{gap} $G$ (defined in \Cref{space_allocation}), and the section size
\begin{equation}\label{kdef}
K \eqdef 2^5 \cdot \lceil \varepsilon^{-1} \rceil \cdot \lceil \log \ln \delta^{-1}\rceil.
\end{equation}

An adaptive compactor contains a set $B$ of items -- \emph{the buffer} -- with some \emph{capacity} $C$.
Any item received from the input stream is simply added to the buffer of the level-$0$ compactor. Whenever for any compactor we have $\abs{B} \geq C$ (we say that the compactor is \emph{full}), it performs a \emph{compaction operation}, which removes some items from $B$; half of them are evicted from memory, and the other half are added to the buffer of the next-level compactor. If the next-level compactor does not exist, the number of levels $H$ is increased by one, and a new compactor on the highest level is created to receive the items.

\begin{rem}
	By definition, we have $\lceil \log\ln \delta^{-1} \rceil \geq 2$ and $K \geq 2^6$. We use this later in the analysis.
\end{rem}

\subsubsection{Merging two sketches and stream updates}
The merge operation is defined on two sketches $\mathcal{S}_1, \mathcal{S}_2$ with the same parameters $\varepsilon$ and $\delta$ (which we keep unchanged).
Without loss of generality, assume that $H_1 \geq H_2$. For each level $h < H_2$, the compactor on level $h$ of the new sketch is given by a merge of the two compactors on level $h$ from sketches $\mathcal{S}_1, \mathcal{S}_2$. For $H_1 > h \geq H_2$, the resulting compactor is simply the compactor on level $h$ from the sketch $\mathcal{S}_1$.
We describe the merge operation on compactors in \Cref{the_compactor}. After the merge, we go through the compactors bottom-up and perform a compaction on each full compactor.

The update operation takes the next item from the input stream of the algorithm and adds it to the buffer of the level-0 compactor. If the compactor becomes full, this triggers a compaction on level $0$ and possibly also on higher levels (triggered by the addition of items from the previous compaction). Note that the update operation can be viewed as a merge with a trivial sketch containing one item.

\subsubsection{Answering queries}
For the rank query $y$, the sketch returns the value $\estrank(y)$, which is defined as
$$
\estrank(y) \eqdef \sum_{h = 0}^{H-1} 2^h \rank(y, B_h).
$$
To implement queries more efficiently, we can sort the items from all the compactors together with their weights and precompute weighted prefix sums. Then we answer a rank query by a single binary search over the sorted items (and one lookup to the prefix sums) and a quantile query by a single binary search over the prefix sums (and one lookup to the items).

\subsection{Adaptive compactor}\label{the_compactor}
The adaptive compactor consists of the \emph{buffer} $B$ of items and a stack $M$ of \emph{markers}, where each marker is represented by a copy of some item previously removed from $B$. 
As each marker is just a copy of an item, we can naturally compare markers and items (thus we can say that ``marker $m$ is smaller than item $y$''). 

For the analysis, we assume that both $B$ and $M$ are always sorted\footnote{In practice, we sort $B$ and $M$ before every compaction.} with the smallest item at position $B[0]$ and the smallest marker at position $M[0]$. 
Let us denote the smallest marker in $M_h$ by $\min(M_h)$ and if $M_h$ is empty or $h\geq H$ we define $\min(M_h)$ as $+\infty$.

The input stream $I$ of a compactor is the set of all items ever inserted into the buffer $B$, and the output stream $O$ is the set of all items ever moved from $B$ to the next-level compactor. In particular, the input stream of the level-$0$ compactor is the input of the algorithm, and the input stream of a level-$h$ compactor for $h>0$ is the output stream of the level-$(h-1)$ compactor.

The buffer $B$ is divided into \emph{sections} of length $K$, where each section is a range $B[iK, (i+1)K)$ for some integer $i \geq 0$. We denote the $i$-th section by $\se[i] = B[iK, (i+1)K)$ and analogously for ranges we have $\se[i, j] = B[iK, (j+1)K)$. A section is \emph{full} if it contains $K$ items. 

The space allocation strategy (see \Cref{strategy}) gives each compactor a \emph{section count} $S$. We stress that the actual number of sections of the compactor is not bounded by $S$ or any other fixed value, but the first $S$ sections have special meaning in the compaction algorithm.
The capacity of the compactor is given as 
\begin{equation*}
C \eqdef 2KS.
\end{equation*}
Thus, the compactor is \emph{full} when the first $2S$ sections are full.

We often use a subscript to denote the level of the compactor; thus, the compactor on level $h$ has buffer $B_h$, stack of markers $M_h$, section count $S_h$, capacity $C_h$, input stream $I_h$, and output stream $O_h$.

\begin{rem}
We set $C = 2KS$ to obtain better amortized update time. By setting $C = K\cdot(S+1)$, we get a worse update time, but the algorithm stays correct, and the space usage decreases almost twofold, which can be significant in practice. 
One can also pick another factor $f>1$ and set $C = K\cdot \max(S+1, \lceil fS \rceil)$ to get a trade-off between time and space usage; currently, we have $f=2$.
\end{rem}

\subsubsection{Merge}

Merging two compactors is straightforward: We just merge the sorted lists $B$ and $M$. The input and output streams are naturally the union of the original input and output streams.

\subsubsection{Marking}
We use the markers from the definition of an adaptive compactor to mark the sections of $B$. 

\begin{defn}[Marking]~\label{def:marking}
\begin{itemize}
\item \emph{Marking} is an injective function $\mar$ from markers to full sections of $B$ such that 
	$\forall m \in M, \forall y \in \mar(m): y < m$.

\item It is \emph{well-ordered} if for any two markers $m < m'$ we have $\mar(m) < \mar(m')$.

\item We say that the sections in the image of $\mar$ and their items are \emph{marked}; the rest of the sections and items are \emph{unmarked}.

\item For a marker $m$ that marks section $\se[i]$ in a given marking $\mar$, let its rank be defined as 
$\rank_{\mar}(m) = i+1$ (thus it is the number of sections with smaller items, including $\se[i]$). We usually drop the subscript if clear from context.
\end{itemize}

\end{defn}
We will ensure that such a marking always exists for any adaptive compactor (see \Cref{moving_markers}).
We will primarily use a canonical marking.

\begin{defn}[Canonical marking]
	For a sorted buffer $B$ and a sorted stack of markers $M$, the canonical marking is obtained by the following greedy procedure: Take the largest marker $m$ from $M$, find the rightmost full section $\se[i]$ such that $\se[i]$ is unmarked (so far) and all items in $\se[i]$ are smaller than $m$, and then set $\mar(m) = \se[i]$. Repeat while unused markers remain in $M$.
\end{defn}

\begin{obs}[Maximality of a canonical marking]\label{marking_properties}
	The canonical marking is the only well-ordered marking such that each marker has the maximum possible rank over all well-ordered markings.
\end{obs}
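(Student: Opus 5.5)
We prove Observation~\ref{marking_properties} by an exchange argument against any well-ordered marking. Sort the markers as $m_1 > m_2 > \dots > m_k$, so $m_1$ is the largest and is processed first by the greedy procedure. Let $\mar^*$ be the canonical marking. By induction on $j$ we show that for every well-ordered marking $\mar$ we have $\rank_{\mar}(m_j) \le \rank_{\mar^*}(m_j)$.

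Two facts drive the argument. Since $B$ is sorted, the condition ``all items of $\se[i]$ are smaller than $m$'' depends only on the largest item of $\se[i]$, and these largest items increase with $i$. Hence the set of full sections admissible for $m$ is a prefix $\se[0],\dots,\se[a(m)]$, with $a(m)$ nondecreasing in $m$.

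For $j=1$, the greedy procedure assigns $m_1$ the section $\se[a(m_1)]$, and no marking can give $m_1$ a larger rank. For the inductive step, let $\mar$ be any well-ordered marking. Well-orderedness gives
\[
\rank_{\mar}(m_j) < \rank_{\mar}(m_{j-1}) \le \rank_{\mar^*}(m_{j-1}).
\]
Admissibility gives $\rank_{\mar}(m_j) \le a(m_j)+1$. The greedy choice for $m_j$ is the rightmost admissible section not yet used. This is the crucial point: by induction, the greedy sections already used by $m_1,\dots,m_{j-1}$ form a strictly decreasing sequence. So the rightmost unused admissible section is exactly
\[
\min\bigl(a(m_j),\ \rank_{\mar^*}(m_{j-1})-2\bigr)
\]
in index form. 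This holds once we know greedy is itself well-ordered; that follows inductively from the same formula, since $a(m_j) \le a(m_{j-1})$. Thus
\[
\rank_{\mar^*}(m_j) = \min\bigl(a(m_j)+1,\ \rank_{\mar^*}(m_{j-1})-1\bigr) \ge \rank_{\mar}(m_j).
\]
This also shows that greedy never fails while $\mar$ exists: if some well-ordered marking exists, the minimum above is at least $\rank_{\mar}(m_j) \ge 1$.

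For uniqueness, let $\mar'$ be any well-ordered marking in which every marker attains the maximum rank over all well-ordered markings. Then $\rank_{\mar'}(m_j) = \rank_{\mar^*}(m_j)$ for all $j$, since $\mar^*$ attains these maxima. Because a marking is determined by the ranks of its markers, $\mar' = \mar^*$. One subtlety is the existence hypothesis: if no marking exists the statement is vacuous, and the paper guarantees existence later (\Cref{moving_markers}).

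The main obstacle is verifying that the greedy's ``rightmost unmarked admissible'' choice really equals the formula above. That is, we must rule out a gap between greedy sections to the left of the current one that could be exploited. This follows because all previously used sections lie strictly to the right of the candidate, which in turn holds by the monotonicity of $a(\cdot)$ in the marker.
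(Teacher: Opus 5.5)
Your outline is right, and the paper itself states this observation without proof. You characterize the greedy choice recursively and compare it with an arbitrary well-ordered marking by induction, using $\rank_{\mar}(m_j)\le a(m_j)+1$ and $\rank_{\mar}(m_j)<\rank_{\mar}(m_{j-1})\le\rank_{\mar^*}(m_{j-1})$. Uniqueness then follows because the ranks determine a marking.

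The gap is in the step you yourself call the crux: that greedy places $m_j$ at index $\min(a(m_j),\rank_{\mar^*}(m_{j-1})-2)$. Your reason is that the sections used by $m_1,\dots,m_{j-1}$ are strictly decreasing, so they lie strictly to the right of the candidate. That only shows the candidate is \emph{free}. It does not show that no free admissible section lies to its right.

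The facts you invoke are consistent with $m_1,m_2$ occupying indices $10$ and $7$ while $a(m_3)=9$. In that case the rightmost free admissible section for $m_3$ is index $9$, not $6$, and greedy would not even be well-ordered. Your remark that well-orderedness of greedy ``follows inductively from the same formula'' is circular as written. Monotonicity of $a$ alone does not exclude this configuration either.

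What excludes it is the greedy rule applied at the \emph{previous} step. When $m_{j-1}$ was placed, greedy took the rightmost free admissible section, so every section with index in $(\rank_{\mar^*}(m_{j-1})-1,\,a(m_{j-1})]$ was already occupied. Since $a(m_j)\le a(m_{j-1})$, every admissible section for $m_j$ with index at least $\rank_{\mar^*}(m_{j-1})-1$ is occupied. In the example, index $9\le a(m_2)$ was free and admissible when $m_2$ was placed, so greedy would not have chosen $7$. With this sentence added, your proof is complete.

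Alternatively, you can bypass the formula entirely with an exchange argument, by induction on $j$:
\begin{itemize}
\item Suppose $\rank_{\mar}(m_j)>\rank_{\mar^*}(m_j)$. Then $\mar(m_j)$ is admissible for $m_j$ and lies to the right of greedy's choice. By the greedy rule it was already occupied, so it equals $\mar^*(m_l)$ for some $l<j$.
\item Well-orderedness of $\mar$ and the induction hypothesis then give $\rank_{\mar}(m_j)<\rank_{\mar}(m_l)\le\rank_{\mar^*}(m_l)=\rank_{\mar}(m_j)$, a contradiction.
\item $\mar^*$ is well-ordered by the same kind of one-line argument: a section to the right of $\mar^*(m_l)$ that is later taken by $m_j$ would have been free and admissible for $m_l$.
\end{itemize}
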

In the algorithm and in most of the analysis, we use only the canonical marking. Thus, for simplicity, we will only use the term \emph{marking} instead of the \emph{canonical marking}, and we explicitly state if we encounter a marking that may not be canonical.

\subsubsection{Compaction}
For the compaction, we will need a parameter $G$, the \emph{gap}, which we define in \Cref{space_allocation}.

The compaction operation is performed on a single level-$h$ compactor when it is full, i.e., $\abs{B} \geq C$ (thus, sections $\se[0, 2S-1]$ are full). If the number of items in $B$ is odd, we temporarily remove the largest item from $B$, and we add it back to $B$ after the compaction is finished.

First, we find the (canonical) marking of $B$ by $M$. Let $i < S$ be the largest index such that $\se[i]$ is unmarked (we prove the existence of such a section in \Cref{obs:unmarked_head}). The items from sections $\se(i,\infty]$ participate in the compaction (and there are at least $SK$ of them). We remove from $M$ all the markers that mark some of the sections $\se(i,\infty]$.

Next, we determine whether the compaction is \emph{marking} or not.
Let $y$ be the smallest of the participating items,
and recall that $\min(M_\ell)$ is the smallest marker in $M_\ell$ or $+\infty$ if $M_\ell = \emptyset$ or $\ell\ge H$.
If $y \leq \min(M_{h+G})$, we create a new marker $m = y$ (which is a copy of $y$) and push it to $M_h$. We say that this compaction is \emph{marking}. Otherwise, we do not create a new marker, and the compaction is \emph{non-marking}.

The participating items are removed from $B$.
We also toss a fair coin, and with probability $1/2$, the odd-indexed removed items are added to the next-level buffer; we say that those items are \emph{promoted}. In the other case, the even-indexed items are promoted. The remaining participating items are evicted from memory.

\subsection{The space allocation strategy}\label{space_allocation}

The space allocation strategy is a function that, for any state of all the compactors (present items and markers), gives the section count $S$ for each compactor. We explain it in two stages. First, for any compactor, we define a non-negative real counter $T_h$ which is a function of $B_h$ and $M_h$. Then, we define the integer section count $S_h$ as a function of all the counters $T_{h'}$ for $h' \geq h$.

\subsubsection{The counters}\label{counters}
Let us define the counter $T_h$ for a given level $h < H$. We define it with respect to a marking, and we use the canonical marking unless stated otherwise.

Given $B_h$ and its marking (by $M_h$), we define an integer weight of each section. Given the weights, we define a real value of each section. We use these values to define a non-negative real value of each suffix of the sections, and this finally gives us the value of the counter $T_h$. The process is depicted in \Cref{fig:ctr}.

\begin{figure}
	\includegraphics[width=\textwidth]{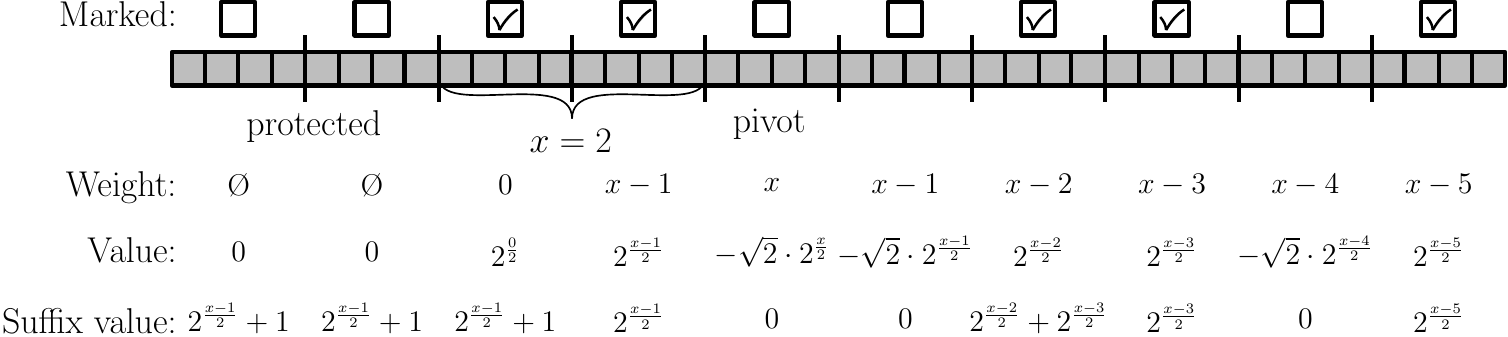}
	\caption{An example of a derivation of the value of $T$ from the marking. We have $x=2$ and $T = \sqrt{2}+1$ in the picture.}\label{fig:ctr}
\end{figure}

Let us call the leftmost two sections $\se[0, 1]$ (with the smallest items) the \emph{protected sections}. These sections have no weight, and we later prove that they are never marked (see \Cref{obs:unmarked_head}).
Let the \emph{pivot section} be the leftmost unmarked section which is not protected, and let $x \geq 0$ be the number of marked sections between the protected sections and the pivot (thus the pivot section is $\se[x+2]$). 

Let the pivot section have weight $w(x+2) = x$, 
let the $x$ marked sections between the protected sections and the pivot have weights from $0$ to $x-1$, increasing by one from the left to the right, and
let the rest of the sections to the right of the pivot have weights $x-1$ to $-\infty$, decreasing by one from the left to the right. More formally, we have $\forall i \leq x+2: w(i) = i-2$ and $\forall i \geq x+2: w(i) = 2x+2-i$.

For a section $\se[i]$ of weight $w(i)$ we define its value $t(i) \eqdef 2^{w(i)/2}$ if it is marked and $t(i) \eqdef -\sqrt{2}\cdot 2^{w(i)/2}$ if it is unmarked. We define the value of the protected sections as 0.

For any suffix $s$ of the sections containing no full section, we define its value as $T(s) = 0$, and
for any suffix $\se[i, \infty)$ where $\se[i]$ is full, we have $T[i, \infty) \eqdef \max(0, T[i+1, \infty) + t(i))$. The final counter $T_h$ of the level-$h$ compactor is naturally the value of the maximal suffix $T_h = T[0, \infty)$. 

Having defined $T_h$ for all non-imaginary levels, let us finally set $T_{\infty} = 2$.

\subsubsection{The strategy}\label{strategy}
Let $S_{\max}$ be the maximum value of $S$ ever achieved by any compactor.
For the whole sketch, we define the \emph{gap} $G$. Note that the value of $G_0$ is fixed, and the value of $G$ can only increase over time.
\begin{align}
G_0 &\eqdef \lceil \log \varepsilon^{-1} + \log \ln \delta^{-1} \rceil + 10, \label{g0def}\\
G &\eqdef \max\left(G_0, \ceil{\frac{1}{2}\log S_{\max} + 3\log\log S_{\max}}\right). \label{gdef}
\end{align}

We say that level $h$ is \emph{bounding} if and only if $\forall h' > h: T_{h'} < T_h$. In particular, the imaginary level $\infty$ is always bounding. For any level $\ell < H$, we define its bounding level $b(\ell)$ as the lowest bounding level $\ell'$ such that $\ell' \geq \ell + G$ (note that $b(\ell)$ is always defined due to the existence of the bounding level $\infty$). Finally, the section count $S_h$ of a level-$h$ compactor is given as
\begin{equation}
S_h \eqdef 6 + \lceil 4\log(T_h) \rceil + 2\cdot \max(0, \lceil T_h - T_{b(h)} \rceil) \label{sdef}\,.
\end{equation}

\begin{rem}
Note that $G$ depends on $S_{\max}$, which depends on $G$ through the definition of bounding levels. Thus, an increase of $S_{\max}$ can trigger an increase of $G$, which can trigger another increase of $S_{\max}$, and so on. 
For now, it is not clear that this process terminates; however, in \Cref{space_bound} we prove that $S_{\max} < 16H+64$ (\Cref{cor:smax}), which gives an upper bound on $S_{\max}$.
\end{rem}

\begin{rem}
Using $\log x = \log_2(\max(2, x))$ and $\log\ln \delta^{-1} > 1$ (as $\delta \leq 2^{-3}$), we have that $13 \leq G_0 \leq G$ and $10 \leq S_h$.
\end{rem}

\begin{rem}
In the actual implementation, we let the user pick the parameters $G_0$ and $K$ instead of $\varepsilon$ and $\delta$.
\end{rem}

\subsection{Basic observations}
Before we dive into the analysis, we need to state a few useful observations. 
In particular, we must prove that the algorithm is correct in the sense that the compaction operation as described in \Cref{the_compactor} can always be performed.
Even though it takes some work to prove the observations formally, we believe the statements themselves are mostly intuitive.

\Cref{moving_markers} captures a basic property of adaptive compactors: Under all operations, the markers either stay in place or move to the right. 

\begin{obs}[Markers only move to the right]\label{moving_markers}
The canonical marking of $B$ by $M$ always exists, and 
after any compaction operation, merge operation, or addition of new items to $B$,
the rank of each marker is at least its rank before.
\end{obs}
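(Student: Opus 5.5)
The plan is to prove both claims together by induction over the sequence of operations. The invariant is that a canonical marking exists. The key tool is a comparison lemma: if some (not necessarily canonical) well-ordered marking exists in the new state, then a canonical marking exists, and by \Cref{marking_properties} every marker's canonical rank is at least its rank in that well-ordered marking. So it suffices, for each operation, to exhibit a well-ordered marking of the new state in which every surviving marker has rank at least its old canonical rank.

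To justify the comparison lemma, run the greedy procedure from the largest marker downward. Suppose we are given a well-ordered marking $\mar'$. By induction on the markers in decreasing order, the greedy choice for marker $m$ is a section at least as far right as $\mar'(m)$. Indeed, $\mar'(m)$ is a full section whose items are all below $m$. It lies strictly left of $\mar'(m'')$ for every larger marker $m''$, and hence, by the inductive hypothesis, strictly left of the greedy section assigned to $m''$. Therefore $\mar'(m)$ is still an available candidate when the greedy procedure processes $m$. This shows existence and the rank domination in one step. The base case is the empty sketch, where $M=\emptyset$.

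For the old canonical marking, let $m$ mark $\se[i]$, so the $K(i+1)$ smallest items of $B$ are all $<m$.

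\textbf{Adding items.} After inserting new items into the sorted $B$, the set of items $<m$ only grows. Therefore, the section of index $i$ in the new buffer, or a section to its right, consists entirely of items $<m$. Keeping the old indices gives a valid well-ordered marking: the section $\se[i]$ is full and all its items are below $m$, because the $K(i+1)$ smallest items of the new buffer are each at most the corresponding old ones.

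\textbf{Merge.} We treat the merge as adding the items of $B_2$ to $B_1$ and then adding the markers of $M_2$. For the markers we need a marking of the union in which each marker $m$ from either side keeps a rank at least its old one. Define $r(m)$ to be the old rank from its own compactor. The number of items $<m$ in the merged buffer is at least $K\,r(m)$ plus the number of items $<m$ in the other buffer. I would process the markers in decreasing order and assign to each the position $\max$ of its old rank and one below the rank of the next larger marker. The main obstacle is showing this never pushes a marker below its old rank, or below zero, and that sections stay disjoint. For this I plan a counting (Hall-type) argument: for any threshold $m$, the markers $\ge$ some value from both stacks correspond to disjoint marked sections in their respective buffers. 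Hence the number of markers in $[m,\infty)$ from both stacks is at most the number of merged full sections with items $<$ the largest of them, minus the number of sections below $m$. This is the one step I expect to need care. If it fails directly, I would fall back on sequentially inserting the markers of $M_2$ one at a time and using the counting bound to show the greedy marking still exists.

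\textbf{Compaction.} The removed items are exactly the sections $\se(i,\infty]$, and the markers that marked them are deleted. The remaining markers marked sections $\le i-1$, which are untouched, so their old marking remains valid with unchanged ranks. A new marker $y$, if it is created, is the smallest participating item, so it is larger than all items in $\se[i]$. The section $\se[i]$ is unmarked and full because $i<S\le 2S-1$. Hence setting $\mar(y)=\se[i]$ extends the marking while keeping it well-ordered, since every other remaining marker marks a section left of $i$. A temporarily removed odd item only enlarges $B$ afterwards, which is covered by the adding case.
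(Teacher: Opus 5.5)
Your comparison lemma (a greedy proof of the part of \Cref{marking_properties} the paper relies on), your direct treatment of adding items, and the compaction case are sound. The paper handles adding items as a merge with a marker-free compactor, but your direct argument is fine. However, the merge case, which is the only substantive part of the observation, is not proved.

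The rule you state fails. If $a\in M_1$ has old rank $5$ and $b\in M_2$ with $b>a$ has old rank $1$, processing in decreasing order gives $b$ rank $1$ and then $a$ rank $\max(5,0)=5$, which violates well-orderedness. When the second option of the maximum is active, the rule can instead put a small marker directly below a much larger one, into a section whose items need not be smaller than it. The real issue is that the old ranks of markers from $M_1$ and $M_2$ are measured in different buffers. Any new position therefore has to be checked against a bound counting the full sections below the marker contributed by \emph{both} buffers. Your Hall-type inequality is too vague to supply this. The fallback of inserting the markers of $M_2$ one at a time only gives existence: inserting a marker can push smaller markers to the left, so you would still need to show that each marker of $M_1$ ends with rank at least its rank in $Q_1$.

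A direct repair: for $m\in M_1$ with rank $r_1(m)$ in $Q_1$, let $p$ be the largest marker of $M_2$ below $m$ and put $R(m)=r_1(m)+r_2(p)$ (with $r_2(p)=0$ if there is no such $p$), and define $R$ symmetrically for $m\in M_2$.
\begin{itemize}
\item Since $B_1,B_2$ are sorted, at least $K\cdot R(m)$ items of $B_1\cup B_2$ are smaller than $m$, so $\se[R(m)-1]$ is a full section of items below $m$.
\item The canonical markings of $Q_1,Q_2$ are well-ordered, and checking the two mixed cases shows that $R$ is strictly increasing.
\item $R(m)$ is at least the old rank, so your comparison lemma finishes the proof.
\end{itemize}
Processing in \emph{increasing} order and assigning the $\max$ of the old rank and one more than the previous assignment also works, since that assignment is dominated by $R$.

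The paper instead inducts on $\abs{M}$. It removes the smallest marker $m$ (say from $M_1$, of rank $r$) together with the $rK$ smallest items of $B_1$, and merges the rest by induction. It then adds the $rK$ items back, which shifts every remaining marker right by $r$ sections, and lets $m$ mark $\se[r-1]$.

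One minor point in your compaction case: well-orderedness of the extended marking also requires every surviving marker to be smaller than the new marker $y$. This holds by canonicity, since a surviving marker larger than all items of $\se[i]$ would have been placed by the greedy procedure at $\se[i]$ or further right.
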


\begin{proof}
When $M$ is empty, the (empty) canonical marking surely exists. Let us show that no operation breaks this invariant, while the ranks of the markers do not decrease.

First, let us consider the compaction operation. During the compaction, we remove items from some suffix of the sections together with markers that mark those sections. Thus, if the canonical marking existed before, it did not cease to exist, and all the remaining markers mark exactly the same sections as before. 
Moreover, after the compaction, the rightmost full section is unmarked (by definition), and if the compaction is marking, we create a new marker that marks this previously unmarked section. This does not change the rest of the marking.

Second, let us consider the merge of compactors $Q_1, Q_2$ with parameters $B_1, M_1$ and $B_2, M_2$ respectively, resulting in a compactor $Q$ with parameters $B, M$. We use induction over $\abs{M}$. For $\abs{M}=0$, we have an empty marking and the statement holds.

For the induction step, let $m$ be the smallest marker in $M$, and without loss of generality, let $m$ be originally from $M_1$ and let $r$ be the rank of $m$ in $Q_1$. 
We modify $Q_1$ by removing $m$ from $M_1$ and the smallest $rK$ items from $B_1$. This decreases the rank of each marker in the canonical marking of $Q_1$ exactly by $r$, and the canonical marking still exists. Let $Q', B', M'$ be the result of the merge of this modified $Q_1$ with $Q_2$, and let $\mathcal{C}'$ be the canonical marking of $Q'$. By induction, $\mathcal{C}'$ exists, and the rank of each marker is at least its rank in its original buffer $B_2$ or in the modified $B_1$. 
Let us add the removed $rK$ items to $B'$, obtaining $B$. This increases the rank of each marker of $M'$ by $r$ (as their rank is in the number of sections), and thus it is at least its rank in the original buffer $B_1$ or $B_2$. Moreover, we add $m$ to $M'$ (obtaining $M$), and we let it mark the unmarked section $\se[r-1]$, obtaining some (not necessarily canonical) well-ordered marking where the rank of each marker is at least its rank before the merge. By \Cref{marking_properties}, the canonical marking with the same or higher rank of each marker exists.

Finally, the addition of new items to $B$ is equivalent to a merge with a compactor with no markers, and this situation is already covered by the argument above.
\end{proof}

Next, \Cref{ctr_range} gives us basic properties of the counter. Essentially, it shows that 
the possible ranges of the counter $T$ for different values of $x$ are disjoint and increasing in $x$.

\begin{obs}\label{ctr_range}
Let $x$ denote the number of marked sections between the protected sections and the pivot. We have
$$
2^{\frac{x}{2}}(1+\sqrt{2}) - (1 + \sqrt{2}) \leq T < 2^{\frac{x+1}{2}}(1+\sqrt{2}) - (1 + \sqrt{2}).
$$
\end{obs}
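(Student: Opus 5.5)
The plan is to compute $T = T[0,\infty)$ by splitting the sections into three blocks: the protected sections $\se[0,1]$, the $x$ marked sections $\se[2, x+1]$, and the suffix $\se[x+2,\infty)$ starting at the pivot. I will first show that $T$ equals an explicit geometric sum plus $T[x+2,\infty)$. This reduces both inequalities to bounds on the pivot suffix alone.

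\emph{Prefix part.} The protected sections have value $0$, and the suffix values are non-negative by definition. Hence $T[0,\infty) = T[2,\infty)$. Each section $\se[i]$ with $2 \le i \le x+1$ is marked, so it is full, and it has weight $i-2$ and positive value $2^{(i-2)/2}$. Adding a positive number to the non-negative quantity $T[i+1,\infty)$ never activates the $\max(0,\cdot)$. Unrolling the recursion therefore gives
$$
T = \sum_{k=0}^{x-1} 2^{k/2} + T[x+2,\infty) = (2^{x/2}-1)(1+\sqrt2) + T[x+2,\infty),
$$
using $1/(\sqrt2-1) = 1+\sqrt2$. The lower bound follows at once from $T[x+2,\infty)\ge 0$. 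For the upper bound it suffices to show $T[x+2,\infty) < 2^{(x+1)/2}(1+\sqrt2) - 2^{x/2}(1+\sqrt2) = 2^{x/2}$.

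\emph{Suffix beyond the pivot.} For $i \ge x+3$, section $\se[i]$ has weight $w(i) = 2x+2-i$. I will prove by induction from the right that $T[i,\infty) < (2+\sqrt2)\,2^{w(i)/2}$ for every $i \ge x+3$. The base case is the first non-full section, where the suffix value is $0$; only finitely many sections are full, so the induction is well-founded.

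For the step, if $\se[i]$ is unmarked its value is negative, so $T[i,\infty) \le T[i+1,\infty)$. If it is marked, $T[i,\infty) \le T[i+1,\infty) + 2^{w(i)/2}$. In either case the induction hypothesis at weight $w(i)-1$ gives
$$
T[i,\infty) < (2+\sqrt2)\,2^{(w(i)-1)/2} + 2^{w(i)/2} = (2+\sqrt2)\,2^{w(i)/2},
$$
because $(2+\sqrt2)/\sqrt2 + 1 = 2+\sqrt2$. The strict inequality propagates through the $\max(0,\cdot)$ since the right-hand side is positive.

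\emph{The pivot.} The pivot is unmarked with weight $x$. If it is not full, $T[x+2,\infty)=0 < 2^{x/2}$ directly. Otherwise
$$
T[x+2,\infty) = \max\bigl(0,\, T[x+3,\infty) - \sqrt2\, 2^{x/2}\bigr) < \max\bigl(0,\,(2+\sqrt2)\,2^{(x-1)/2} - \sqrt2\, 2^{x/2}\bigr) = 2^{x/2},
$$
since $(2+\sqrt2)/\sqrt2 = \sqrt2+1$. This completes the upper bound.

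The only step needing care is choosing the right invariant constant $2+\sqrt2$ for the suffix to the right of the pivot. It is exactly the fixed point of $c \mapsto c/\sqrt2 + 1$, and it makes the pivot's negative value $-\sqrt2\, 2^{x/2}$ cancel precisely to $2^{x/2}$. Strictness relies on there being only finitely many full sections.
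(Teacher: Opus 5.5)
Your proof is correct and takes essentially the same route as the paper. You split off the marked block $\se[2,x+1]$, which contributes $(2^{x/2}-1)(1+\sqrt2)$, and then bound the suffix starting at the pivot by the geometric series $2^{x/2}(1+\sqrt2)$ minus the pivot's $\sqrt2\,2^{x/2}$. Your right-to-left induction with the constant $2+\sqrt2$ makes rigorous the paper's informal claim that the maximum is attained when all sections right of the pivot are marked, and it correctly handles the $\max(0,\cdot)$ clipping.
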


\begin{proof}
	This is by the definition of the counter. The marked sections to the left of the pivot have a cumulative value of $\sum_{i=0}^{x-1} 2^{i/2} = 2^{x/2}(1+\sqrt{2})-(1+\sqrt{2})$. The minimum value of $T$ is achieved when all the other sections are unmarked and contribute zero value. The maximum value of $T$ is achieved when all the sections to the right of the pivot are marked. Those sections contribute at most $\sum_{i=-\infty}^{x-1}2^{i/2} = 2^{x/2}(1+\sqrt{2})$ value, the pivot contributes $-2^{(x+1)/2}$ value and the marked sections to the left of the pivot again contribute $2^{x/2}(1+\sqrt{2})-(1+\sqrt{2})$. Summing these three values gives us the statement of the observation. 
\end{proof}

The following corollary is due to our definition of the logarithm $\log(x) \eqdef \log_2(\max(2, x))$.

\begin{cor}\label{xbound}
	$x \leq 2\log T$
\end{cor}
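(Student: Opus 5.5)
The plan is to read off the bound directly from the lower estimate in \Cref{ctr_range}, $T \geq (1+\sqrt{2})(2^{x/2}-1)$, and then take logarithms, keeping track of the convention $\log z = \log_2(\max(2,z))$.

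First I dispose of the small cases. If $x \leq 2$, then $2\log T \geq 2\log_2 2 = 2 \geq x$ holds for any $T \geq 0$, because the logarithm is always at least $1$. So we may assume $x \geq 3$.

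For $x \geq 3$ the goal is to show $T \geq 2^{x/2}$. Then $T \geq 2^{3/2} > 2$, so the convention plays no role, and $\log T = \log_2 T \geq x/2$, which is exactly the claim. By \Cref{ctr_range} it suffices to prove
$$(1+\sqrt{2})(2^{x/2}-1) \geq 2^{x/2}.$$
This rearranges to $\sqrt{2}\cdot 2^{x/2} \geq 1+\sqrt{2}$, that is, $2^{(x+1)/2} \geq 1+\sqrt{2} \approx 2.414$. That inequality already holds for every $x \geq 2$, since $2^{3/2} \approx 2.83$.

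So the whole argument is a short calculation. The only step needing care is the threshold: for $x=1$ the lower bound from \Cref{ctr_range} gives only $T \geq 1$, and there the statement holds only because of the $\max(2,\cdot)$ convention, which is why the case split above is needed.
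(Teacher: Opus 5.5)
Your proof is correct and follows the paper's route: the paper states only that the corollary follows from the lower bound in \Cref{ctr_range} together with the convention $\log z = \log_2(\max(2,z))$. You spell this out explicitly, with the convention handling the small values of $x$ exactly as your case split does.
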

\Cref{obs:unmarked_head} shows the correctness of the compaction algorithm and justifies the definition of protected sections.

\begin{obs}\label{obs:unmarked_head}
	The compaction operation can always be performed, and the protected sections are never marked.
\end{obs}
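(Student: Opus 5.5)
The plan is an induction over the sequence of operations (compactions, merges, item additions), with the invariant that the protected sections $\se[0,1]$ are unmarked in the canonical marking. Both claims of the statement then follow from the pivot section. Initially there are no markers, so the invariant holds trivially.

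\emph{Existence of the compaction section.} Consider a full level-$h$ compactor with $|B_h|\ge C_h = 2KS_h$, so sections $\se[0,2S_h-1]$ are full. By the invariant, the pivot is well defined. It is the leftmost unmarked non-protected section, $\se[x+2]$, where $x$ is the number of marked sections strictly between the protected ones and the pivot. By \Cref{xbound} we have $x\le 2\log T_h$. By \eqref{sdef}, $S_h \ge 6+\lceil 4\log T_h\rceil \ge 6+2x$. Hence $x+2 < S_h$, so $\se[x+2]$ is an unmarked section with index below $S_h$, and it is full (a full section is needed for it to be a candidate in the definition of the marking).

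Therefore the largest index $i<S_h$ with $\se[i]$ unmarked exists, and it satisfies $i\ge x+2\ge 2$. The participating sections $\se(i,\infty]$ contain all of $\se[S_h,2S_h-1]$, which gives at least $S_hK$ items. Temporarily removing one item when $|B|$ is odd does not affect this, since it is the largest item and the sections $\se[0,2S_h-1]$ remain full because $|B|-1\ge 2KS_h$.

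\emph{Preservation of the invariant.} The compaction removes only sections with index greater than $i\ge 2$, together with their markers; the remaining markers keep their sections, as in the proof of \Cref{moving_markers}. If the compaction is marking, the new marker marks exactly $\se[i]$, and $i\ge 2$, so no protected section becomes marked.

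For merges and for additions of items, \Cref{moving_markers} says that the rank of every marker in the new canonical marking is at least its rank before. A marker on a protected section would have rank $1$ or $2$. Every marker had rank at least $3$ before the operation, by the invariant for each merged compactor, so it keeps rank at least $3$. This closes the induction.

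\emph{Main obstacle.} The step I expect to need care is that $S_h$ (and hence $T_h$, $x$, and the pivot) are evaluated on the same state at which the compaction is triggered. The bound $x+2<S_h$ must be applied with the marking current at that moment, and the canonical marking must be recomputed after merges. The latter is handled by \Cref{moving_markers} together with the maximality in \Cref{marking_properties}. Once this is in place, the arithmetic $6+4\log T\ge 6+2x>x+2$ is immediate.
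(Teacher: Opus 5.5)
Your proposal is correct and follows the paper's argument. From \eqref{sdef} and \Cref{xbound} you get $x+2<S_h$, so the compaction stops at or to the right of the pivot $\se[x+2]$ and can only mark a section of index at least $2$, while \Cref{moving_markers} keeps existing markers off the protected sections under merges and insertions; your explicit induction and the count of at least $S_hK$ participating items just spell out what the paper states in a few lines.
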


\begin{proof}
	Recall that by definition we have $S \geq 4\log T + 6$ (see~\eqref{sdef}). Together with \Cref{xbound}, this gives us that $x+2 < S$. As the pivot is $\se[x+2]$, the compaction always stops to the right of the pivot and marks the pivot or some section with a higher index. Thus, the compaction can be performed, the protected sections cannot be marked by the compaction, and they stay unmarked by \Cref{moving_markers}.
\end{proof}

\Cref{obs:marking_compaction} may seem straightforward, but we state it as a standalone observation as it is critical in bounding the error of the sketch. In fact, proving this for relative compactors under full mergeability is very challenging~\cite{ReqSketch}. Note that the canonical marking always exists (by \Cref{moving_markers}) and that the compaction can always be performed (by \Cref{obs:unmarked_head}).

\begin{obs}\label{obs:marking_compaction}
	For any level-$h$ marking compaction $P$ with the smallest participating item $\min(P)$, there exists a set $U_P$ of $K$ items from the input stream $I_h$ of the compactor such that all of them are smaller than $\min(P)$.
	Crucially, for a fixed level $h$, the sets $U_P$ are pairwise disjoint.
\end{obs}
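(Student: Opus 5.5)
We will build the sets $U_P$ by a charging scheme that follows each marker through its lifetime. A marking compaction $P$ creates exactly one marker $m_P$, a copy of $\min(P)$. By construction $m_P$ then marks the section $\se[i]$, the rightmost unmarked section among the first $S$ ones before $P$. Every item in a section marked by $m_P$ is smaller than $m_P=\min(P)$, by \Cref{def:marking}. Since $P$ is the compaction that created $m_P$, from that point on $m_P$ marks some full section of the buffer until the marker is removed, or forever if it survives to the end. The section changes over time as items are added, merged, and compacted, but by \Cref{moving_markers} the canonical marking always exists.

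We define $U_P$ by cases on the fate of $m_P$.
\begin{itemize}
\item If $m_P$ is removed by a later compaction $P'$, it is removed because the section it marks participates in $P'$. We let $U_P$ be the $K$ items of that section at the moment of $P'$.
\item If $m_P$ is never removed, we let $U_P$ be the $K$ items of the section it marks in the final state.
\end{itemize}
In both cases the items lie in a section marked by $m_P$, so they are smaller than $m_P=\min(P)$. They are also items of $B_h$, hence elements of the input stream $I_h$; under merges, $I_h$ is the union of the input streams, so it still contains them.

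For pairwise disjointness, we observe that each item is assigned at most once. In the first case the section's items participate in $P'$ and are removed from $B_h$ at that moment, either promoted out of level $h$ or evicted. An item is never reinserted into $I_h$ at level $h$, since the input stream $I_h$ is a set and all items are distinct. Distinct markers assigned at the same compaction $P'$ mark distinct sections, because the marking is injective, so their sets are disjoint. The same injectivity argument handles the markers that survive to the final state. Final-state items are never removed, so they cannot coincide with items assigned at an earlier compaction. Finally, each marker corresponds to exactly one marking compaction, so the map $P\mapsto U_P$ is well defined. Merges only unite buffers and marker stacks and never duplicate an item, because inputs are distinct and each item resides in exactly one sketch.

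The main subtlety, and the step I expect to need the most care, is the claim that when $m_P$ is removed during a compaction, the section it marks is full and fully participating. This holds because participation is a suffix of whole sections $\se(i,\infty]$ and markers only mark full sections. One must also handle the temporarily removed largest item when $|B|$ is odd: it lies in the last, possibly partial, region, which is never marked because marked sections are full. The last point to check is that marked sections at removal time are taken in the canonical marking just before the compaction, which is precisely how the algorithm decides which markers to remove.
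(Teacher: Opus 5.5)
Your proposal is correct and uses the same construction as the paper. $U_P$ is the set of $K$ items in the section marked by the marker created by $P$. It is taken either at the compaction that deletes that marker or at the end of the algorithm. Disjointness follows from the injectivity of the marking together with the fact that items participating in a level-$h$ compaction leave level $h$ for good; you spell this out in more detail than the paper, which only cites the definition of a marking.
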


\begin{proof}
	Compaction $P$ creates marker $m$. If the marker is later deleted, the unique $K$ items are the items marked by $m$ during its deletion. If it is never deleted, we take the $K$ items marked by $m$ at the end of the algorithm.
	The disjointness property for a fixed level $h$ follows from the properties of a marking (Definition~\ref{def:marking}).
\end{proof}

\section{Analysis}\label{analysis}

In this section, we give an analysis of the sketch under full mergeability. We prove the space bound in \Cref{space_bound} (\Cref{lem:space}), the error guarantee in \Cref{sec:error_bound} (\Cref{error_bound}), and the time complexity in \Cref{time} (\Cref{l:time}). These together give the main theorem.

\begin{makethm}{theorem}{thm:mainthm} 
For any parameters $0 < \delta \leq 2^{-3}$ and $0 < \varepsilon \leq 1$, there is a randomized, comparison-based, fully mergeable streaming algorithm that, when processing an input consisting of $N$ items from a totally ordered universe $\mathcal{U}$, produces a summary $\mathcal{S}$ satisfying the following property.
Given $\mathcal{S}$, for any $y \in \mathcal{U}$ one can derive an estimate $\estrank(y)$ of $\rank(y)$ such that
$$
\pr \biggl[ \abs{\estrank(y) - \rank(y)} > \varepsilon \rank(y) \biggr] < \delta\,,
$$
where the probability is over the internal randomness of the algorithm.
The size of $\mathcal{S}$ is
$$
\O\left(\varepsilon^{-1} \cdot \log(\varepsilon N)\cdot  \log \log \delta^{-1} \cdot (\log \log \delta^{-1} + \log \varepsilon^{-1} + \log \log(\varepsilon N))\right)
$$
memory words, and the cumulative running time of update and merge operations is
$$
\O\left( N \cdot ( \log\log\log\delta^{-1} + \log\log(\varepsilon N) + \log\varepsilon^{-1} )\right).
$$
\end{makethm}

\begin{rem}
A straightforward implementation exactly following the description in \Cref{description} does not achieve the stated time complexity. In \Cref{time}, we briefly explain how to implement the algorithm efficiently, while preserving all the other properties.
\end{rem}

\subsection{The space bound}\label{space_bound}
To bound the sketch size, we define potentials $\Phi$ and $\Psi$. The definition of the potentials, particularly potential $\Psi$, is quite involved (\Crefrange{r:create}{r:merge}), but it is the cornerstone of our analysis of the space bound, together with \Crefrange{i:behind_maker}{i:final} regarding the potentials.
With the definitions and invariants in place,
the individual proofs of the invariants are not very long or involved. The last invariant gives us $T_h \leq 4(H-h) + \log S_{\max}$ for each $h$, which implies $S_{\max} \in \O(H)$ (\Cref{cor:smax}) and suffices to prove the space bound.

We start by stating a few more observations regarding marking, then we give some general intuition behind the potentials $\Phi$ and $\Psi$, and we state the formal definition of the potentials together with a more detailed explanation of their individual parts. After that, we formulate all the invariants, prove them one by one, and finish the analysis of the space bound.

\subsubsection{Observations about marking for the space bound}

Let us now state further observations regarding marking, which are useful in formulating the \Crefrange{r:create}{r:merge} and in proving \Crefrange{i:behind_maker}{i:final}.

\Cref{remove_smallest_marker} tells us that the smallest marker cannot be removed unless the compactor is in an extremely good state. It inspired \Cref{r:reset} in the choice of potential $\Psi$ and is used in the proof of \Cref{i:behind_maker}.

\begin{obs}\label{remove_smallest_marker}
	A compaction can remove the smallest marker from $M$ only if $T<1$ before the compaction.
\end{obs}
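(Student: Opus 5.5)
The plan is to read off what removing the smallest marker forces on the canonical marking, and then bound $T$ directly from the definition of the counter in \Cref{counters}.

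\emph{Step 1: the marking is almost empty on the left.} Let $i<S$ be the largest index with $\se[i]$ unmarked, so the compaction removes exactly the markers of sections in $\se(i,\infty]$. The canonical marking is well-ordered by \Cref{marking_properties}, so the smallest marker $m_0$ marks the leftmost marked section. Hence if $m_0$ is removed, every section $\se[j]$ with $j\le i$ is unmarked.

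By \Cref{obs:unmarked_head} the pivot $\se[x+2]$ satisfies $x+2<S$. Being unmarked, it gives $i\ge x+2$. If $x>0$, then $\se[2]$ would be a marked section with index $\le i$, a contradiction. So $x=0$: the pivot is $\se[2]$ with weight $0$, and for $j\ge 2$ we have $w(j)=2-j$. Moreover, all of $\se[2],\dots,\se[i]$ are unmarked.

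\emph{Step 2: bound the tail beyond $i$.} Each suffix satisfies $T[j,\infty)=\max(0,T[j+1,\infty)+t(j))$, so a suffix value never exceeds the sum of the positive values of the sections it contains. Only sections with index $\ge i+1$ can be marked, and section $j$ contributes at most $2^{(2-j)/2}$. Summing this geometric series, and noting the sum is finite so the inequality is strict, gives
$$
T[i+1,\infty) < \sum_{j\ge i+1}2^{(2-j)/2}=2^{(2-i)/2}\cdot\frac{2^{-1/2}}{1-2^{-1/2}} = 2^{(2-i)/2}(\sqrt2+1).
$$

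\emph{Step 3: absorb the unmarked section $\se[i]$.} Section $\se[i]$ is unmarked with weight $2-i$, so $t(i)=-\sqrt2\cdot 2^{(2-i)/2}$. If the suffix value is positive, this gives $T[i,\infty)<2^{(2-i)/2}\le 1$, using $i\ge2$; otherwise $T[i,\infty)=0$.

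For every $j<i$, the value $t(j)$ is either negative (unmarked section) or $0$ (protected section). Since $a\mapsto\max(0,a+t)$ with $t\le 0$ does not increase nonnegative values, the bound propagates down to $T=T[0,\infty)<1$.

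The only delicate point is Step 1. It must use well-orderedness of the canonical marking, so that the smallest marker sits on the leftmost marked section, together with $i\ge x+2$, to force $x=0$. After that, the geometric-sum estimate is exactly where the choice of weight ratio $\sqrt2$ and the factor $-\sqrt2$ for unmarked sections yields the threshold $1$.
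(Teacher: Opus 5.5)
Your proof is correct and follows essentially the same route as the paper. Removing the smallest marker forces $x=0$, using well-orderedness and the fact that the pivot lies to the left of the compacted region, and $x=0$ then gives $T<1$. The only difference is that the paper cites the upper bound of \Cref{ctr_range} at $x=0$, namely $2^{1/2}(1+\sqrt{2})-(1+\sqrt{2})=1$, whereas your Steps~2--3 recompute this bound via the geometric sum.
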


\begin{proof}
	A compaction can, by definition, remove only the markers marking sections to the right of the pivot. Thus, the smallest marker can be removed only if the number $x$ of marked sections to the left of the pivot is zero, and by \Cref{ctr_range} this is exactly when $T<1$.
\end{proof}

\Cref{obs:canonical_minimal} is only useful for proving \Cref{obs:tmerge,comp_increase}.

\begin{obs}\label{obs:canonical_minimal}
	Let $\mar_1, \mar_2$ be two well-ordered markings of $B$ by $M$ such that in both markings the protected sections are unmarked and $\forall m\in M: \rank_{\mar_1}(m) \geq \rank_{\mar_2}(m)$. Then, the value of $T$ under $\mar_1$ is at most the value of $T$ under $\mar_2$.
\end{obs}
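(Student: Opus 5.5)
The plan is to reduce the comparison of $\mar_1$ and $\mar_2$ to a sequence of \emph{elementary moves}, each shifting a single marker by one section to the right, and to show that no elementary move increases $T$. Since both markings are well-ordered and assign the same set $M$ to sections, list the markers in increasing order as $m_1<\dots<m_k$. Then $\rank_{\mar_1}(m_j)\ge\rank_{\mar_2}(m_j)$ for all $j$, and in both markings the ranks are strictly increasing in $j$. Starting from $\mar_2$, we process the markers from the largest $m_k$ down to $m_1$. We move $m_j$ one section to the right at a time until it reaches its $\mar_1$-position. Because the larger markers have already been placed at their $\mar_1$-positions, which lie strictly to the right of $\rank_{\mar_1}(m_j)$, each intermediate target section is unmarked. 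Every intermediate marking is therefore well-ordered, and the protected sections stay unmarked because markers only move right. The marking condition $y<m$ is not needed for the value of $T$, which depends only on which full sections are marked, so we may ignore it for the intermediate markings. It then suffices to show that $T$ does not increase when a marked section $\se[j]$ becomes unmarked and the unmarked full section $\se[j+1]$ becomes marked.

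We use two monotonicity facts from the definition of $T$. First, $T[i,\infty)=\max(0,T[i+1,\infty)+t(i))$ is nondecreasing in the suffix value, so a decrease of some suffix value propagates to the left as long as the values $t(i)$ to the left are unchanged. Second, if all weights in a suffix are lowered by $2$, every $t(i)$ in that suffix is halved, and by induction the suffix value is exactly halved. Since $\se[j+1]$ is unmarked, $\se[j]$ cannot lie strictly left of the last marked section before the pivot. Hence only two cases arise.

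\emph{Case (a): $j>x+2$, i.e.\ both sections lie to the right of the pivot.} Then $x$, the pivot and all weights are unchanged, and only $t(j)$ and $t(j+1)$ change. Put $a=2^{w(j)/2}$, so that $w(j+1)=w(j)-1$, and let $A=T[j+2,\infty)\ge 0$. Before the move $t(j)=a$ and $t(j+1)=-a$, which gives $T[j,\infty)=\max(a,A)$. After the move $t(j)=-\sqrt2\,a$ and $t(j+1)=a/\sqrt2$, which gives $T[j,\infty)=\max(0,A-a/\sqrt2)\le\max(a,A)$. By the first monotonicity fact, $T$ does not increase.

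\emph{Case (b): $j=x+1$, i.e.\ the last marked section before the pivot moves onto the pivot $\se[x+2]$.} Now $x$ drops to $x-1$ and the pivot becomes $\se[x+1]$, so the weights of all sections from $\se[x+1]$ onward change. Sections $\se[x+3],\dots$ get weight $2x-i$ instead of $2x+2-i$, so the suffix from $x+3$ is exactly halved, from $A$ to $A/2$. The prefix $\se[2,x]$ keeps its values. Write $p=2^{(x-1)/2}$. Before the move, $\se[x+1]$ is marked with value $p$ and the pivot is unmarked with value $-\sqrt2\cdot 2^{x/2}=-2p$. This gives $T[x+1,\infty)=\max(0,\max(0,A-2p)+p)=\max(p,A-p)$. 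After the move, the new pivot $\se[x+1]$ has weight $x-1$ and value $-\sqrt2\,p$, and the now-marked $\se[x+2]$ has weight $x-2$ and value $p/\sqrt2$. This gives $T[x+1,\infty)=\max(0,A/2+p/\sqrt2-\sqrt2\,p)=\max(0,A/2-p/\sqrt2)$. Since $A/2-p/\sqrt2\le\max(p,A-p)$ holds in both regimes $A\le 2p$ and $A\ge 2p$, the suffix value does not increase, and the first monotonicity fact finishes the case.

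The main obstacle is Case (b), where the pivot shifts and every weight to its right is re-indexed. The key step there is the halving identity for suffixes under a weight shift by $2$, after which the comparison reduces to the two-variable inequality above. I would verify this inequality carefully, including the boundary cases where $\se[x+3]$ is not full or where $x=1$. I would also check the interaction with protected sections: the new pivot $\se[x+1]$ is never protected, because $x\ge1$ holds whenever a marked section precedes the pivot.
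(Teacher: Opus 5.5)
Your proof is correct, but it is organized differently from the paper's. The paper makes one global case split. If the two pivots coincide, it appeals to the weights to the right of the pivot decreasing with the index. If the pivot of $\mar_1$ lies further left, it invokes \Cref{ctr_range}: the ranges of $T$ for different $x$ are disjoint and increasing in $x$. This tacitly uses that rank domination plus well-orderedness prevents the pivot of $\mar_1$ from lying to the right of that of $\mar_2$.

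You instead interpolate from $\mar_2$ to $\mar_1$ by adjacent one-section moves and check each move by a local computation. This buys two things. First, your route never compares the pivots globally. Second, it is rigorous in the same-pivot case. There a pointwise comparison fails: your move raises the value of $\se[j+1]$ from $-a$ to $a/\sqrt2$, and $T$ is built by the nonlinear $\max(0,\cdot)$ recursion. So the paper's one-line justification really needs a check like your Case (a). An equivalent check is to write $T$ as the maximum of $0$ and the prefix sums of the $t(i)$ and observe that none of these sums increases.

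The paper's route buys brevity. Your Case (b) can be dispatched in one line by \Cref{ctr_range}, which applies to any marking with unmarked protected sections. That move lowers $x$ by one and so puts $T$ in a strictly lower range, making the halving identity unnecessary.

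One detail you should state explicitly: each intermediate target section is full, because it lies to the left of the full section $\mar_1(m_j)$ and the full sections form a prefix of the buffer.
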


\begin{proof}
	If the pivot is the same under both markings, the observation follows as the weights to the right of the pivot are decreasing with index. If the pivot under $\mar_1$ has a smaller index $x$ than under $\mar_2$, the observation follows from \Cref{ctr_range} (as the ranges for different values of $x$ are disjoint and increasing in $x$).
\end{proof}

\Cref{obs:tmerge} describes the behavior of our counter under the merge operation. It corresponds to \Cref{r:merge} and is used in the proof of \Cref{i:phibound}.

\begin{obs}\label{obs:tmerge}
	Let $Q$ be the result of a merge of two level-$h$ compactors $Q_1, Q_2$ with marker lists $M_1, M_2$ respectively, and without loss of generality let $\min(M_1) \leq \min(M_2)$.
	For the new counter $T$ after the merge, we have $T \leq T_1$, where $T_1$ is the counter of $Q_1$.
\end{obs}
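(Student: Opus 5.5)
The plan is to reduce the claim to a comparison between one explicitly constructed, not necessarily canonical, marking of the merged compactor $Q$ and the canonical marking of $Q_1$. By \Cref{obs:canonical_minimal}, it suffices to exhibit some well-ordered marking $\mar$ of $Q$ with these properties:
\begin{itemize}
\item the protected sections are unmarked;
\item every marker has rank under $\mar$ at most its rank under the canonical marking of $Q$;
\item the value of $T$ under $\mar$ is at most $T_1$.
\end{itemize}
The canonical marking maximizes all ranks (\Cref{marking_properties}), so the second condition is automatic for any well-ordered marking. For the first condition, note that any well-ordered marking can be pushed right toward the canonical one, which by \Cref{obs:unmarked_head} leaves the protected sections unmarked.

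The first step is an auxiliary fact about the counter: \emph{any suffix of sections that starts with two consecutive unmarked non-protected sections has value $0$.} To prove it, let the first of the two unmarked sections have weight $w$ and lie to the right of the pivot, so the weights keep decreasing by one. Those two sections contribute $-\sqrt2\,(2^{w/2}+2^{(w-1)/2}) = -(1+\sqrt2)\,2^{w/2}$. Everything further right contributes at most $\sum_{j<w-1}2^{j/2} < (1+\sqrt2)\,2^{w/2}$. By the recursive $\max(0,\cdot)$ definition, the suffix value is therefore $0$, and one checks the analogous statement when the pair straddles the pivot. As a consequence, $T$ depends only on the prefix of sections up to the first such pair of unmarked sections. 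The same computation shows that inserting a block of unmarked sections, or pushing marked sections to larger indices (to smaller weights), can only decrease every suffix value.

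The second step is the construction, which follows the inductive proof of \Cref{moving_markers}. Let $m=\min(M_1)$; it is the overall smallest marker, and let $r$ be its rank in $Q_1$. All items of $B_1$ below $m$'s section are still present in $Q$, and the items of $B_2$ merged in can only shift sections right. Since every marker of $M_2$ is at least $m$, the markers of $M_2$ cannot create marked sections to the left of $m$'s section. I would then show that the number $x$ of marked sections between the protected sections and the pivot of $Q$ is at most $x_1$. The run of marked sections starting right after the protected sections in $Q$ is formed by markers of $M_1$ that keep their relative order and the ranks from the proof of \Cref{moving_markers}. Any $B_2$ items lying below that run enlarge the low sections only by unmarked material, which can only stop the run earlier.

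The third step splits into two cases. If $x<x_1$, then \Cref{ctr_range} gives $T<T_1$ directly, because the ranges for different $x$ are disjoint and increasing. If $x=x_1$, the pivots carry the same weight. To the right of the pivot, I would compare section by section: each marked section of $Q$ beyond the pivot is either the image of a marker of $M_1$ shifted right (smaller weight, smaller value), or a marker of $M_2$ sitting in a region that the first step shows is already "cut off" by two unmarked sections. The latter holds because the $B_2$ items shift things right and add unmarked sections. The monotonicity part of the first step then yields $T\le T_1$.

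I expect the main obstacle to be the case $x=x_1$. There, $M_2$ may contribute markers interleaved with the shifted $M_1$ markers to the right of the pivot. Bounding their contribution needs a careful charging: each $M_2$ marker must be paired either with a $B_2$-induced unmarked section that cancels it, or with a marker of $Q_1$ of at least the same weight. This is where the choice of section weights growing by a factor $\sqrt2<2$ and the negative value $-\sqrt2\cdot 2^{w/2}$ of unmarked sections must be used. My first attempt would be an induction on $|M_2|$, adding the markers of $Q_2$ one at a time from the largest, and showing that each addition, together with its $K$ items from $B_2$, does not increase any suffix value.
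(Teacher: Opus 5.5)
Your reduction via \Cref{obs:canonical_minimal}, the forgetting property, and the case $x<x_1$ are fine. The proof is not complete, though. The case $x=x_1$ rests on the claim that the $M_2$ markers sit in a region ``cut off'' by two unmarked sections, and you never construct a marking in which this holds. The same applies to your argument for $x\le x_1$: saying that $B_2$ items ``can only stop the run earlier'' does not exclude $\min(M_2)$ itself extending the run.

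Your fallback plan aims at a harder statement than necessary, and its inductive step is false as stated. You propose that adding one $M_2$ marker together with its $K$ items ``does not increase any suffix value''. But inserting a marked section into a stretch of unmarked sections raises that suffix from $0$ to a positive value. Moreover, a single marker with its own $K$ items cannot open a gap of two unmarked sections; that gap has to come from the protected sections of $B_2$.

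The missing idea is that no charging of interleaved $M_2$ markers is needed. Let $\mu=\min(M_2)$ mark $\se[j]$. Every marker of $M_2$ is at least $\mu$, so in any well-ordered marking they all lie at or right of $\se[j]$. It therefore suffices to exhibit one well-ordered marking $\mar$ of $Q$ with three properties:
\begin{itemize}
\item every marker has rank at least its rank in its original compactor, so the protected sections are unmarked;
\item every marker left of $\se[j]$ (necessarily from $M_1$) has rank at least its rank in $Q_1$;
\item $\se[j-2]$ and $\se[j-1]$ are unmarked.
\end{itemize}
The forgetting property then gives $T[j-2,\infty)=0$ under $\mar$. This kills all of $M_2$ at once, together with any $M_1$ markers interleaved with it. What remains is a prefix marked only by $M_1$ markers pushed weakly right relative to $Q_1$, which gives $T\le T_1$ with no case split on $x$.

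The paper builds such a $\mar$ as follows. Remove the $2K$ protected items from each of $B_1$ and $B_2$, merge, take the canonical marking $\mathcal{C}'$, and re-add the $4K$ items. Every $M_2$ marker then gains exactly $4$ in rank, since all re-added items lie below $\mu$. Each $M_1$ marker left of $\se[j]$ gains between $2$ and $4$. Shift those $M_1$ markers back left so that each gains exactly $2$. Their ranks stay at least their ranks in $Q_1$, by \Cref{moving_markers} applied to the stripped merge, and the two sections immediately before $\se[j]$ become unmarked.
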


\begin{proof}
	If at least one of the marker lists is empty, this is a direct consequence of \Cref{moving_markers,obs:canonical_minimal}. For the rest of the proof, let $\min(M_1) \leq \min(M_2)<+\infty$. Let $M = M_1 \cup M_2$.

	First, let us observe that if for some $i > 1$, sections $\se[i, i+1]$ are both unmarked, the value of the suffix $\se[i, \infty)$ equals zero. This is true as $T[i+2, \infty) < \sum_{j=i+2}^{\infty} 2^{w(j)/2} = 2^{w(i)/2} + 2^{w(i+1)/2}$ and the value of sections $\se[i, i+1]$ is $t(i) + t(i+1) = -\sqrt{2}\cdot (2^{w(i)/2} + 2^{w(i+1)/2})$. Let us call this fact the \emph{forgetting property}.
	
	Let us temporarily remove the smallest $2K$ items from the protected sections of both $B_1$ and $B_2$, and let us call these modified buffers $B_1', B_2'$. As the items were unmarked by \Cref{obs:unmarked_head}, both canonical markings still exist, and the rank of each marker decreased exactly by 2 compared to the original buffers. Let us merge the two modified compactors into $Q'$ with parameters $B', M$ and consider the canonical marking $\mathcal{C}'$ of $B'$ by $M$. By \Cref{moving_markers}, the ranks of the markers did not decrease. 
	After the merge, we add the $4K$ removed items to $B'$, obtaining $B$, and consider the canonical marking $\mathcal{C}$ of $B$ by $M$.
	
	Between $\mathcal{C}'$ and $\mathcal{C}$, the rank of each marker of $M$ that originated from $M_2$ increased exactly by 4. Of the markers originally from $M_1$, some of them increase their rank by 2, some by 3, and some by 4 compared to $\mathcal{C}'$. Let $\se[j]$ be the section of $B$ marked by $\min(M_2)$, and let us move all markers marking the sections to the left of $j$ (those are all originally from $M_1$) by up to two sections to the left in such a way that their rank is exactly two higher than in $\mathcal{C}'$. This gives us some (not necessarily canonical) well-ordered marking $\mar$ where the ranks of all markers are at least their ranks in $Q_1, Q_2$. 
	Moreover, as the rank of the marker marking $\se[j]$ is larger by 4 than in $\mathcal{C'}$
	and the rank of the markers marking $\se[0, j-1]$ is larger by 2 than in $\mathcal{C'}$, the sections $\se[j-2, j-1]$ are unmarked in $\mar$.
	
	By \Cref{obs:canonical_minimal}, the value of $T$ under $\mar$ is at least its value under the canonical marking, so it suffices to bound $T$ under $\mar$ by $T_1$. 
	As the sections $\se[j-2, j-1]$ are both unmarked
	and not protected, the value $T[j-2, \infty)$ equals 0 by the forgetting property.
	As the ranks of all markers are at least their ranks in their original buffers and all the markers marking sections to the left of $\se[j]$ originated from $M_1$, we have by \Cref{obs:canonical_minimal} that the value of $T$ under $\mar$ is at most $T_1$.
	
\end{proof}

\Cref{comp_increase} connects the increase of $T_h$ during a marking compaction to the value of $T_{b(h)}$. It is used to prove \Cref{i:psibound_local,i:psibound}.

\begin{obs}\label{comp_increase}
$T_h$ is increased only by a marking compaction, and in such a case, by at most
$$
\Delta T_h \leq 2^{- \max(0, T_h - T_{b(h)})}.
$$
\end{obs}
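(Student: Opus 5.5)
The plan is a case analysis over the operations that can change $B_h$ or $M_h$: addition of items, merge, and compaction. The first two are handled by earlier observations; the compaction case is where the real work lies.

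\textbf{Additions and merges.} Adding items to $B_h$ is a merge with a compactor that has no markers. By \Cref{moving_markers} no marker loses rank. By \Cref{obs:canonical_minimal}, applied with the old canonical marking viewed as a (non-canonical) well-ordered marking of the new buffer, $T_h$ does not increase. For a merge of two compactors, \Cref{obs:tmerge} gives $T\le T_1$, where $T_1$ is the counter of the compactor with the smaller minimal marker. So, interpreting ``increase'' relative to that constituent, merges never increase the counter either.

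\textbf{Non-marking compactions.} Let $i<S$ be the chosen unmarked section. By \Cref{obs:unmarked_head}, $i\ge x+2$. The compaction deletes the suffix $\se(i,\infty)$ and the markers on it. Every section with index at most $i$ keeps its marking status, so the pivot, and hence every weight $w(j)$ for $j\le i$, is unchanged. The only effect is that $T[i+1,\infty)$, which was $\ge 0$, becomes $0$. The recursion $T[j,\infty)=\max(0,T[j+1,\infty)+t(j))$ is monotone and $1$-Lipschitz in the suffix value, so $T_h$ cannot increase. This proves the first claim: only marking compactions can increase $T_h$.

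\textbf{Marking compactions.} These are the non-marking step followed by marking $\se[i]$. There are two sub-cases.
\begin{itemize}
\item If $i>x+2$, the pivot is unchanged. The value of $\se[i]$ changes from $-\sqrt2\,2^{w(i)/2}$ to $2^{w(i)/2}$. Since the suffix behind it is now $0$, $T[i,\infty)$ rises from $0$ to $2^{w(i)/2}$. By the Lipschitz property this gives $\Delta T_h\le 2^{w(i)/2}$, with $w(i)=2x+2-i$.
\item If $i=x+2$, the pivot is marked and $x$ grows. By maximality of $i$, the sections $\se[x+3,S-1]$ were all marked. I would compute $T$ before and after explicitly from \Cref{ctr_range} and the definition of the weights, aiming to show that the jump is still at most the value attributable to the new marker.
\end{itemize}
In both sub-cases I expect the bound to reduce to $\Delta T_h\le 2^{(2x+2-i)/2}$.

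\textbf{Main obstacle.} It remains to show $i\ge 2x+2+2\max(0,\lceil T_h-T_{b(h)}\rceil)$, i.e.\ $w(i)\le -2\max(0,T_h-T_{b(h)})$. This must come from the section count~\eqref{sdef} together with $x\le 2\log T$ from \Cref{xbound}. Counting alone only gives $S-2x-2-2\max(0,\lceil T_h-T_{b(h)}\rceil)\ge 4+4\log T-2x\ge 4$, so I would argue by contradiction. Suppose $i$ is smaller than required. Then the run $\se[i+1,S-1]$ of at least four consecutive marked sections, whose weights start at $w(i)-1>-2\max(0,T_h-T_{b(h)})-1$, must contribute enough value to force $T_h$ into the range of a larger $x$ via \Cref{ctr_range}, contradicting the definition of $x$. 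If this contradiction does not close directly, the fallback is to use the slack in the term $\lceil 4\log T_h\rceil$ of~\eqref{sdef} to absorb the constant-size discrepancy.
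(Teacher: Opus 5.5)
Your treatment of additions, merges and non-marking compactions is fine and matches the paper. The marking-compaction case, however, has a genuine gap.

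You view this case as ``delete the suffix, then mark $\se[i]$'' and bound only the second step. That gives $\Delta T_h\le 2^{w(i)/2}$, so you then need $i\ge 2x+2+2\max(0,\lceil T_h-T_{b(h)}\rceil)$. This inequality is false, and no contradiction with \Cref{ctr_range} can be extracted from it. A marking compaction with $i=x+2$ and $\se[x+3,S_h-1]$ all marked is perfectly consistent with $T_h<(1+\sqrt2)(2^{(x+1)/2}-1)$. It is exactly the configuration at the top of the range for $x$, and it is the only way $x$ can ever increase.

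Concretely, take $x=2$, so $\se[2],\se[3]$ are marked and the pivot is $\se[4]$. Let $\se[5,20]$ be marked, all sections from $\se[21]$ on unmarked, and $T_{b(h)}=2$. Then $T_h\approx 4.395$, so $S_h=21$ and $i=4$, whereas your target is $i\ge 12$. Your bound $2^{w(4)/2}=2$ is far above $2^{-\max(0,T_h-T_{b(h)})}\approx 0.19$. The shortfall between $i$ and your target is of order $x+2\max(0,\lceil T_h-T_{b(h)}\rceil)$, not a constant, so the slack in $\lceil 4\log T_h\rceil$ cannot absorb it. The same failure occurs for other $i$ close to the pivot, so it is not confined to your second sub-case.

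The missing idea is the compensation the paper uses: the decrease caused by removing the marked run must be netted against the increase. By the maximality of $i$, every section of $\se[i+1,S_h-1]$ is marked before the compaction. Since only positive values are added there, the $\max(0,\cdot)$ never truncates, and
\[
T[i+1,\infty)\ \ge\ \sum_{j=i+1}^{S_h-1}2^{w(j)/2}=(1+\sqrt2)\bigl(2^{w(i)/2}-2^{w(S_h-1)/2}\bigr).
\]
Hence $T[i,\infty)\ge 2^{w(i)/2}-(1+\sqrt2)\,2^{w(S_h-1)/2}$ already before the compaction.

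For $i>x+2$ this suffix value becomes exactly $2^{w(i)/2}$ afterwards, and your Lipschitz argument transfers the change to $T_h$. For $i=x+2$, the same estimate compared against the new value $(1+\sqrt2)(2^{(x+1)/2}-1)$ yields the same bound. So the gain $(1+\sqrt2)2^{w(i)/2}$ from flipping $\se[i]$ is almost cancelled by the destroyed marked run, and $\Delta T_h\le(1+\sqrt2)\,2^{w(S_h-1)/2}$ with $w(S_h-1)=2x+3-S_h$.

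Then \Cref{xbound} and \eqref{sdef} give $\Delta T_h\le 2^{2\log T_h+3-S_h/2}\le 2^{-\max(0,T_h-T_{b(h)})}$. The bound is controlled by $S_h$, not by the position $i$, so no lower bound on $i$ is needed, and none is true.
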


\begin{proof}
	The addition of new items and non-marking compactions do not increase $T_h$ by 
	\Cref{moving_markers,obs:canonical_minimal}, while for merge we have 
	\Cref{obs:tmerge}. It remains to analyze the case of a marking compaction.
	
	Let $\se[i]$ be the rightmost section not participating in the compaction (here, sections are only at level $h$). Before the compaction, $\se[i]$ was unmarked, and the compaction creates a marker $m$ that marks $\se[i]$. The rest of the full sections are marked exactly as before the compaction. 
		
	The increase of $T_h$ due to marking $\se[i]$ is at most $2^{w(i)/2}(1 + \sqrt{2})$ and the decrease of $T_h$ due to removing items and markers from $\se[i+1,S_h-1]$ is at least $2^{w(i)/2}(1+\sqrt{2}) - 2^{w(S_h-1)/2}(1+\sqrt{2})$.
	Thus the overall increase in $T_h$ is at most $2^{w(S_h-1)/2}(1+\sqrt{2})$.
	
	Using $w(S_h-1) = 2x + 2 - (S_h-1)$ by definition, $x \leq 2\log T_h$ by \Cref{xbound}, and  
	$S_h = 6 + \lceil 4\log T_h \rceil + 2\cdot \max\left(0, \ceil{T_h - T_{b(h)}}\right)$ by definition (see~\eqref{sdef}), we get
	$$
	\Delta T_h \leq 2^{w(S_h-1)/2}(1+\sqrt{2}) 
	\leq 2^{\frac{1}{2} (2x + 3 - S_h) + \frac{3}{2}}
	\leq 2^{2\log T_h + 3 - S_h/2}
	\leq 2^{-\max(0, T_h - T_{b(h)})}.
	$$
\end{proof}

Finally, \Cref{obs:monotonous_markers} tells us that the smallest markers are monotone. It is useful in proving \Cref{i:behind_maker,i:high_zero}.

\begin{obs}\label{obs:monotonous_markers}
	For each level $h$, $\min(M_h)$ is smaller than or equal to all items and markers on level $h+1$. In particular, this means that $\min(M_h) \leq \min(M_{h+1})$.
\end{obs}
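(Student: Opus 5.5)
I will prove \Cref{obs:monotonous_markers} by induction over the sequence of operations, with the invariant that $\min(M_h)$ is at most every item currently in $B_{h+1}$ and every marker in $M_{h+1}$, for every level $h$. The ``in particular'' part follows at once, since then $\min(M_h)\le \min(M_{h+1})$ (with the convention $\min = +\infty$ for empty stacks and for $h\ge H$). Initially all stacks are empty and the invariant holds trivially.

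The key steps follow the three operation types.

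\textbf{Compaction on level $h$.} Let $i$ be the largest unmarked index among the first $S$ sections. The participating items $\se(i,\infty]$ are all larger than every item in $\se[0,i]$. The markers removed are exactly those marking sections $\se(i,\infty]$. A surviving marker $m'$ marks a section $\se[j]$ with $j<i$, and the canonical marking is well-ordered, so removed markers are larger than surviving ones. I want to show that the promoted items (a subset of the participating items) are at least $\min(M_h)$ after the compaction.
\begin{itemize}
\item If a marker survives, then since the section $\se[i]$ lies to the right of $\mar(\min(M_h))$, I need $\min(M_h)\le y$ for all participating $y$. The clean route is to argue via the greedy definition of the canonical marking: were some participating $y<\min(M_h)$, the section containing $y$ would be a full section all of whose items are smaller than $\min(M_h)$, to the right of the section marked by $\min(M_h)$. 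This contradicts the choice of rightmost section, after applying well-ordering to the larger markers.
\item If the compaction is marking, the new marker is $\min(P)$, which is at most every promoted item.
\end{itemize}
The effect of a level-$h$ compaction on the pair $(h-1,h)$ must also be checked. Removing items or markers from level $h$ only shrinks the sets that $\min(M_{h-1})$ must be below. A newly created marker $y=\min(P)$ is an item of $B_h$, hence already $\ge \min(M_{h-1})$ by induction. Finally, a compaction on level $h-1$ adds items to $B_h$, which is covered by the case above with index shifted.

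\textbf{Merge and addition of items.} Merging unions buffers and stacks levelwise. The new $\min(M_h)$ equals $\min(M_h^{(1)})$ or $\min(M_h^{(2)})$. Each item or marker on level $h+1$ comes from one of the two sketches, and there it was at least that sketch's level-$h$ minimum, which is at least the merged minimum. A level copied only from $\mathcal{S}_1$ is handled the same way. The subsequent compactions are handled by the compaction case.

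The main obstacle is the compaction case when the smallest marker survives: showing rigorously, from the greedy canonical marking, that no participating item lies below $\min(M_h)$. Intuitively this holds because $\min(M_h)$ ``dominates'' the section it marks and everything to its left. Carried out carefully, however, the argument must use that every section to the right of $\mar(\min(M_h))$ up to $\se[i]$ either is marked by a larger marker or contains an item exceeding $\min(M_h)$. Sortedness of $B$ then propagates this bound to all participating items. I expect this part to require the most care.
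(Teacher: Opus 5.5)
\textbf{The gap.} Your case split for a level-$h$ compaction is not exhaustive, and the missing case is the heart of the statement. You treat (a) $\min(M_h)$ survives, and (b) the compaction is marking.

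Suppose instead that $\min(M_h)$ is removed. By well-ordering, surviving markers mark sections left of $\se[i]$ and removed ones mark sections right of it, so then \emph{every} marker of $M_h$ is removed. If the compaction were also non-marking, $M_h$ would become empty. Then $\min(M_h)=+\infty$ while $B_{h+1}$ has just received the promoted items, and your invariant fails. The same happens if $M_h=\emptyset$ before the compaction, e.g.\ at the first compaction on level $h$.

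So you must show that such a compaction is necessarily marking. Nothing local to level $h$ gives this, because the marking test compares $\min(P)$ with $\min(M_{h+G})$, which lives $G$ levels higher. The missing idea is to apply your invariant, as it holds just before the compaction, at all levels $h,h+1,\dots,h+G-1$ and chain it:
$\min(M_h)\le\min(M_{h+1})\le\dots\le\min(M_{h+G})$.
The section marked by $\min(M_h)$ participates, so $\min(P)<\min(M_h)\le\min(M_{h+G})$, and the compaction is marking. The new marker $\min(P)$ then lies below all promoted items. It also lies below the old $\min(M_h)$, hence below all pre-existing items and markers on level $h+1$; your case (b) should state this too. If $M_h=\emptyset$, the chain forces $\min(M_{h+G})=+\infty$, so again the compaction is marking.

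This is exactly what the paper's proof rests on: ``whenever a compaction removes $\min(M_h)$, it is marking by the induction hypothesis, and it immediately creates an even smaller marker.'' The step you flagged as the main obstacle is actually the easy, local part.

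\textbf{Case (a).} With the fix above, your route works and differs from the paper only in case (a). The paper uses the same chained hypothesis there as well: a promoted item that is a new minimum of level $h+1$ lies below $\min(M_{h+1})\le\min(M_{h+G})$, so the compaction is marking and creates a marker below it. You instead argue directly from the greedy canonical marking, which is fine.

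However, phrase that argument via $\se[i]$, not via the section containing $y$. The section containing $y$ need not be full, nor lie entirely below $\min(M_h)$. By contrast, $\se[i]$ is full (as $i<S$), unmarked, and to the right of $\mar(\min(M_h))$. The greedy procedure processes $\min(M_h)$ last, so $\se[i]$ must contain an item larger than $\min(M_h)$. Sortedness of $B$ then places every participating item above $\min(M_h)$.
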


\begin{proof}
	The proof is by a double induction, first over the levels top-down, then over the number of compactions on a given level $h$. 
	For $h \geq H-1$, the claim holds as there are no items on level $h+1$.
	
	For a given level $h<H-1$,
	the claim holds both before and after the first level-$h$ compaction, which forced us to create level $h+1$. 
	Whenever a new minimum item $y$ arrives at level $h+1$ due to a level-$h$ compaction $\mathcal{C}$, the item $y$ is smaller than $\min(M_{h+1})$ (by \Cref{obs:unmarked_head}),
	and by the induction hypothesis, it is at most $\min(M_{h+G})$ and thus $\mathcal{C}$ is marking. 
	The compaction $\mathcal{C}$ creates a marker which is at most $y$ but no smaller than $\min(M_h)$ as it is a marker in $M_h$ itself; it follows that $\min(M_h)$ is at most $y$.
	Whenever a compaction removes $\min(M_h)$, it is marking by the induction hypothesis, and it immediately creates an even smaller marker. The merge operation does not change the validity of the claim.
	The second part follows from the existence of a canonical marking (\Cref{moving_markers}).
\end{proof}

\subsubsection{The general intuition}\label{space_intuition}
Before we dive into the definitions, let us explain the intuition behind $\Phi$ and $\Psi$. Potential $\Phi$ is a sum of contributions $\Phi_h$ of all compactors, and the contribution of each compactor is exponential both in its counter $T$ and its level. Large $T$ means that the compactor is in a ``bad'' condition due to being subject to a lot of marking compactions, and it may require a large space; see the definition of $S_h$ in~\eqref{sdef}. Thus, $\Phi_h$ intuitively corresponds to the condition of the level-$h$ compactor. 

Whenever more $\Phi_h$ is created, it is due to a level-$h$ marking compaction that marked some $K$ small items. In this situation, we create an equal amount of potential $\Psi_{b(h)}$ and charge it to the small marked items (unless $b(h) = \infty$; see below). We maintain an invariant that the overall amount of $\Phi$ is at most the amount of $\Psi$.
The idea is that as those small items move up the levels of the sketch, they eventually reach $b(h)$ (which is the level in the worst condition above $h$), and as they are small, they move the markers on level $b(h)$ to the right and improve the condition of $b(h)$, decreasing $\Phi_{b(h)}$. When this happens, we delete the appropriate amount of $\Psi_{b(h)}$.
This is why $\Phi$ is exponential also in the level -- as the small items move up the sketch, their amount decreases exponentially, but the amount by which they can decrease $\Phi$ on a given compactor grows exponentially with the level.
Thus, for the increase of $\Phi_h$, we are eventually compensated by the decrease of $\Phi_{b(h)}$. 
The potential $\Psi$ that we assign to the items and markers using a number of carefully crafted rules is the medium through which this exchange happens, and we always have $\Psi_i \leq \frac{1}{4}\Phi_i$ for all $i<H$. 

The claims that $\Phi \leq \Psi$ and at the same time $\Psi_i \leq \frac{1}{4}\Phi_i\; \forall i<H$ may seem contradictory. 
However, there is a situation when we cannot get compensated by a decrease of $\Phi_{b(h)}$ for a marking compaction on level $h$. This is when all the compactors on higher levels are already in good condition, and this is the case where $b(h) = \infty$. In this case, we do not charge the new potential to any items, but we simply add it to a global counter $\Psi_\infty$.
Thus, all the potential $\Phi$ is asymptotically bounded by the amount of $\Psi_{\infty}$, since any $\Psi_h$ for $h<\infty$ is coupled with $\Phi_h$ and there is no $\Phi_{\infty}$. As $\Psi_\infty$ only grows during compactions, we can bound it by counting the compactions, which gives $\Psi_{\infty} \leq S_{\max}\cdot 2^{4H-4}$. This bounds $\Phi$ by $S_{\max}\cdot 2^{4H-4}$ up to a constant factor, which implies $T_h \leq 4(H-h)+\log S_{\max}$ for each $h$ and, in turn, $S_{\max} \in \O(H)$. With this knowledge in hand, we can finally bound the space using a telescoping property of the bounding levels.

\subsubsection{Definition of the potentials}
Let us formally define the potentials $\Phi$ and $\Psi$. The definition of $\Phi$ is straightforward.
For any level $h < H$, let 
\begin{equation}
\Phi_h = 2^{4h+T_h} - 2^{4h}\,. \label{eqn:def_Phi_h}
\end{equation}
In particular, this means that before the first compaction on level $h$, we have $\Phi_h = 0$. 
Let $\Phi = \sum_{h<H} \Phi_h$.

While $\Phi$ is just a function of the counters $T_h$, $\Psi$ depends on the history of the data structure in a more complex way. 
For any two levels $g<H$ and $h<H$, any item $y$ or marker $m$ present in the sketch is assigned a non-negative amount of potential $\Psi_{g, h}$. We denote this amount by $\Psi_{g, h}(y)$ or $\Psi_{g, h}(m)$.
Moreover, the sketch has a non-negative global counter $\Psi_\infty$, which is not assigned to any item or marker.
The counter $\Psi_\infty$ will never be decreased.
When two sketches are merged, the new value of $\Psi_\infty$ is the sum of the values of $\Psi_\infty$ in the two sketches.

For each $h<H$, $\Psi_{h}$ denotes the sum of $\Psi_{g, h}$ over all values of $g$ assigned to all items and markers in the sketch, and
$\Psi_{g, h}^i$ denotes the sum of $\Psi_{g, h}$ assigned to the items and markers on level $i$.
Finally, let $\Psi = \sum_{h<H} \Psi_h + \Psi_\infty$ be the total potential.

\begin{itshape}
We will maintain the convention of using the letter $g$ for the ``source'' level of the potential (i.e., at which level $g$ the increase of $\Phi_g$ created $\Psi_{g, h}^i$),
$i$ for the current level where the considered item is present, and
the letter $h$ for the ``target'' level, which is the bounding level $b(g)$ at the moment when $\Psi_{g,h}$ was created by \Cref{r:create}.
\end{itshape}

All the potential $\Psi$ (including the counter $\Psi_\infty$) is initially zero. Next, we define the rules for how it is created and moved around the sketch. Before stating the rules, we need a few more definitions.

By saying that all potential $\Psi$ from an item or marker $w$ \emph{moves} to an item or marker $w'$, or that $w'$ \emph{receives} the potential $\Psi$ of $w$ (or similarly), we mean that for each $g$ and $h$, we set $\Psi_{g, h}(w') \,+\!\!= \Psi_{g, h}(w)$ and $\Psi_{g, h}(w) = 0$.

For the purpose of analyzing the behavior of a particular level $h$ under a merge operation, let $\mathcal{S}^{(+h)}$ be the sketch where $\min(M_h)$ is larger and $\mathcal{S}^{(-h)}$ the sketch where $\min(M_h)$ is smaller (if they are equal, choose arbitrarily).
Further let us use superscripts $(+h)$ and $(-h)$ to denote parameters and potentials of the two sketches $\mathcal{S}^{(+h)}$ and $\mathcal{S}^{(-h)}$. Thus we have $\min(M_h^{(-h)}) \leq \min(M_h^{(+h)})$.

Let us finally state the rules. 
After each rule, we give a short intuitive explanation in \emph{italics}. 
This explanation is not part of the formal definition -- it is a simplified intuitive view, and like any simplification, it may be partially incorrect.

\begin{enumerate}[label = (R\arabic*)]
	\item\label[rule]{r:create}
	Let us denote by $\Delta \Phi_g \geq 0$ the amount by which $\Phi_g$ increased after a level-$g$ marking compaction. If $b(g) < \infty$, the new marker $m$ created by the compaction is charged with an amount $\Delta \Phi_g$ of potential $\Psi_{g, b(g)}$; formally, let $\Psi_{g, b(g)}(m) = \Delta\Phi_g$. If $b(g) = \infty$, we instead increase the global counter $\Psi_\infty$ by $\Delta\Phi_g$.

	\begin{itshape}
	This is the only way to create $\Psi$, and it is defined such that $\Phi$ is bounded by $\Psi$, as we later prove in \Cref{i:phibound}. If $b(g)=\infty$, there is no level above $g$ whose improvement could later compensate for the increase of $\Phi_g$, so we do not need to assign the potential to any items.
	\end{itshape}

	\item\label[rule]{r:deleted_marker}
	For each marker $m$ deleted during a compaction, all potential $\Psi$ assigned to $m$ is distributed evenly among the $K/2$ promoted items marked by $m$ during the compaction.
	
	\begin{itshape}
	This corresponds to the idea that each increase in $\Phi$ is charged to some small items -- through the marker and potential $\Psi$.
	\end{itshape}

	\item\label[rule]{r:deleted_item}
	During a marking compaction, each promoted item keeps all its $\Psi$ and each discarded item transfers all its $\Psi$ to its left promoted neighbor. If its left neighbor was not promoted, \Cref{r:last_item} applies instead.
	
	\begin{itshape}
	As the items move up the levels, the promoted items are charged by the potential of their neighbors in such a way that potential moves only to smaller items.
	Note that if the promoted items are marked, they also receive potential by \Cref{r:deleted_marker}.
	\end{itshape}

	\item\label[rule]{r:last_item}
	If the smallest item $t$ participating in a marking compaction was not promoted to the next level (i.e., it was evicted from memory), the marker $m$ created by the compaction receives all $\Psi$ that was assigned to $t$ (formally, for each $g$ and $h$ we set $\Psi_{g, h}(m) \,+\!\!= \Psi_{g, h}(t)$ and $\Psi_{g, h}(t) = 0$).
	
	\begin{itshape}
	If the smallest of the participating items was not promoted, we need some way to move its potential before the item is deleted; thus, we use the newly created marker, which has the same value (as it is a copy of the item).
	\end{itshape}

	\item\label[rule]{r:marker_change}
	For any level $i$, if the smallest marker in $M_{i+G}$ is about to change from $\min^{(\mathrm{old})}(M_{i+G})$ to $\min^{(\mathrm{new})}(M_{i+G})$ (either due to being removed by a compaction or due to a merge with another compactor with a smaller marker), then for each $g$ and each $h \geq i+G$, transfer all $\Psi_{g, h}$ from all items of $B_i$ and markers of $M_i$ to the marker $\min^{(\mathrm{old})}(M_{i+G})$. In the case of a merge, $B_i$ and $M_i$ here denote the buffer and the markers on level $i$ of the sketch $\mathcal{S}^{(+(i+G))}$, i.e., of the sketch whose smallest marker in $M_{i+G}$ changes, and this rule is applied after \Cref{r:merge}, for the levels $i$ in increasing order. (In the proof of \Cref{i:behind_maker}, we show that there is a nonzero $\Psi_{g,h}$ with $h \geq i+G$ on level $i$ only if $M_{i+G} \neq \emptyset$, so the marker $\min^{(\mathrm{old})}(M_{i+G})$ always exists when there is something to transfer.)
	
	\begin{itshape}
	This rule solves a certain problem with marking and non-marking compactions. An item $y$ can ``induce'' a marking compaction, but at some later time, another compaction in which $y$ participates is not marking due to a change in the minimum markers on some high levels. However, $y$ already caused a marking compaction which increased $\Phi$, and a corresponding amount of $\Psi$ was created. 
	
	To avoid such situations, in the case of a change of the smallest marker, we move all relevant $\Psi$ to the old smallest marker. This is possible since each marker is charged at most once and since $G$ is large enough so that the sum of $\Psi$ over a level $h$ is approximately the amount of $\Psi$ on one marker on level $h+G$ (as $\Phi$ is exponential in levels).
	\end{itshape}

	\item\label[rule]{r:factor}
	Consider any situation when $\Phi_h$ decreases from $\Phi_h^{\mathrm{old}}$ to $\Phi_h^{\mathrm{new}}$.
	In particular, in the case of a merge, for each level $h$, let $\Phi_h^{\mathrm{old}}$ denote the value $\Phi_h^{(-h)}$ and $\Phi_h^{\mathrm{new}}$ the value after a merge (thus we ignore $\Phi_h^{(+h)}$). 
	Let $f = \frac{2^{\max(2, T_h^{\mathrm{new}})}-1}{2^{\max(2, T_h^{\mathrm{old}})}-1}\leq 1$.
	For each $g$, for all items and markers in the sketch, multiply their assigned $\Psi_{g, h}$ by $f$.
	
	\begin{itshape}
	Conceptually, this rule says that if $\Phi_h$ decreases, we can decrease $\Psi_{g, h}$ by the same factor (and $\Phi$ is still bounded by $\Psi$). However, there are two caveats. First, if $T_h<2$, we do not do this, but we wait until $T_h<1$ and then use the decrease of $\Phi$ to eliminate all corresponding $\Psi_{g,h}$ by \Cref{r:reset}. Second, in the case of a merge we use $\Phi^{(+h)}$ to eliminate all $\Psi_{g, h}^{(+h)}$ by \Cref{r:merge}.
	\end{itshape}

	\item\label[rule]{r:reset}
	When $T_h < 1$, set $\Psi_{g, h}$ for all values of $g$ and for all items and markers in the sketch to 0.
	
	\begin{itshape}
	This corresponds to the case when some very small items reached level $h$ and moved the smallest marker. The compactor is in an excellent state, and there should be no $\Psi$ charged to it. We pay for this decrease of $\Psi_h$ by the decrease of $T_h$ (which corresponds to a decrease in $\Phi_h$) from 2 (when some $\Psi_{g, h}$ could have been created for the last time) to 1.
	\end{itshape}

	\item\label[rule]{r:merge}
	Before merging two sketches, for each level $h$, for all items and markers in $\mathcal{S}^{(+h)}$, we set their assigned $\Psi_{g, h}$ to 0 for all $g$. (This rule does not affect the counter $\Psi_\infty$.)

	\begin{itshape}
	By \Cref{obs:tmerge} we have $T_h \leq T_h^{(-h)}$; thus, all $\Phi_h^{(+h)}$ is released and we use it to eliminate all $\Psi_{g, h}^{(+h)}$.
	\end{itshape}
\end{enumerate}

\subsubsection{The invariants}
With the definition of potentials $\Phi$ and $\Psi$ in place, we state seven invariants regarding the potentials. All the invariants trivially hold for an empty sketch. Subsequently, we prove for each invariant that if it holds before a merge or compaction operation, it holds also after the operation (however, some of the invariants may not hold during the execution of the operation).
This usually requires going through the rules and checking that each of them preserves the invariant. Again, we give some intuition in italics, which is not part of the formal wording of the invariant.

\begin{enumerate}[label = (I\arabic*)]
	\item\label[inv]{i:behind_maker}
	Let $w$ be either a marker from $M_i$ or an item from $B_i$. If $\Psi_{g, h}(w) > 0$ for some $g$ and $h$, then we have 
	$w < \min(\min(M_h), \min(M_{i+G}))$.
	
	\begin{itshape}
	We have nonzero $\Psi$ only for small items, which can later decrease some counter on a higher level.
	\end{itshape}
	
	\item\label[inv]{i:non-marking} 
	All items $y$ and markers $m$ participating in a non-marking compaction have $\Psi_{g, h}(y) = \Psi_{g, h}(m) = 0$ for each $g$ and $h$.
	
	\begin{itshape}
	This is important as we do not create a marker and thus have no way to move potential $\Psi$ from the participating items.
	\end{itshape}
	
	\item\label[inv]{i:high_zero} 
	For any two levels $i > h$, for each $g$ we have $\Psi_{g, h}^i = 0$.
	
	\begin{itshape}
	The small items with nonzero $\Psi_{g, h}$ have no way to move above level $h$. If inserted, they ``reset'' $T_h$ to a value less than 1, and all $\Psi_{g, h}$ are eliminated by \Cref{r:reset}.
	\end{itshape}
	
	\item\label[inv]{i:psibound_local} 
	For any fixed levels $g, h$, let $p_{g,h}(i) = 2^{2(i + g) + \max(2,T_h) + 1}$.
	For any item $y \in B_i$ we have $\Psi_{g, h}(y) \leq p_{g,h}(i)/K$ and for any marker $m \in M_i$ we have $\Psi_{g, h}(m) \leq p_{g,h}(i)$.
	
	\begin{itshape}
	This is a key invariant. We bound $\Psi_{g, h}$ of any item or marker by $T_h$, allowing us to bound $\Psi_{h}$ by $\Phi_h$ in the next invariant.
	\end{itshape}
	
	\item\label[inv]{i:psibound} 
	For any level $h < H$ we have $\Psi_h \leq \frac{1}{4}\Phi_h$, and we have $\Psi_{\infty} \leq S_{\max}\cdot 2^{4H-4}$.
	
	\begin{itshape}
	This gives us the first half needed for the last invariant, bounding $\Psi$ by $\Phi$.
	\end{itshape}
	
	\item\label[inv]{i:phibound} 
	We have $\Phi \leq \Psi$.
	
	\begin{itshape}
	This gives us the second half of the proof -- that $\Phi$ is at most $\Psi$. This is probably easy to see just from the definition of the rules.
	\end{itshape}
	
	\item\label[inv]{i:final}
	For any level $h<H$, we always have $T_h \leq 4(H-h) + \log S_{\max}$.
	
	\begin{itshape}
	And this is the final claim. It implies that $T_h \leq 4H + \log H + 7$ for each $h$ (\Cref{cor:smax}), which allows us to bound the space.
	For the explanation of the importance of this, see \Cref{overview_elastic}.
	\end{itshape}
\end{enumerate}

In the proofs of the invariants, we use the following simple bound on the sizes of the buffers.

\begin{obs}\label{obs:buffer_size}
At any moment, including during the execution of an update or merge operation, every buffer $B_i$ contains fewer than $8KS_{\max}$ items and every $M_i$ contains fewer than $8S_{\max}$ markers.
\end{obs}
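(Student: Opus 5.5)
We need to prove Observation: every buffer has fewer than $8KS_{\max}$ items and every $M_i$ fewer than $8S_{\max}$ markers, at any moment, including during an update or merge.

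The plan is to track how large a buffer can get between the moment it was last non-full and the moment it is next compacted. First, the markers follow from the items. Each marker marks a distinct full section of $B_i$ by \Cref{moving_markers}, and the two protected sections are never marked by \Cref{obs:unmarked_head}. Hence $\abs{M_i} \le \abs{B_i}/K$, so the marker bound is immediate once we have $\abs{B_i} < 8KS_{\max}$. This is valid during operations as well, because the canonical marking exists after every elementary step.

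For the items, I would argue that outside of operations every buffer is non-full, so $\abs{B_i} < C_i = 2KS_i \le 2KS_{\max}$. This holds because after an update or merge we sweep the levels bottom-up and compact every full compactor. One compaction leaves at most $iK + K$ items, which is at most $S_iK$ and below capacity, though I would double-check this. The delicate point is that $S_i$ may change when other levels change, so the capacity at the current time may differ from the capacity used in the check. Since every $S$ is at most $S_{\max}$, I would use $2KS_{\max}$ as a uniform bound on any buffer that was non-full at some earlier moment.

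Next, I bound the intermediate sizes. During a merge, level $i$ first becomes the union of two buffers, each of size below $2KS_{\max}$, so it holds fewer than $4KS_{\max}$ items. Before level $i$ is compacted in the sweep, it then receives the promoted items from one level-$(i-1)$ compaction. That compaction promotes half of at most $\abs{B_{i-1}} < 4KS_{\max}$ items, that is, fewer than $2KS_{\max}$ items, plus possibly one extra item from the odd-size rule. This gives fewer than $6KS_{\max}$ items on level $i$ before its own compaction, which is below the $8KS_{\max}$ threshold.

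The main obstacle is whether level $i$ can receive promoted items more than once before it is compacted, or whether a compaction on level $i$ can leave it still full, so that it would need repeated compactions. I would argue that after one compaction the remaining items fill at most $\se[0,i]$ with $i<S_i$, plus the odd item, so the level drops below capacity. Since the sweep is bottom-up, each level receives items from at most one compaction of the level below per sweep. The update case is a special case of the merge with a one-item sketch.

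A remaining subtlety is that $S_{\max}$ and $G$ could increase mid-operation. This only makes the bound weaker, so monotonicity of $S_{\max}$ suffices. The constant 8 leaves slack for the $+1$ odd item and for rounding.
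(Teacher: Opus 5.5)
You have the same structure as the paper. You bound every buffer by $2KS_{\max}$ at the start of an operation, using the capacity at the time the level was last processed, since $S_i$ may shrink later. During a merge, level $i$ first becomes the union of two such buffers. Because levels are processed bottom-up, it then receives at most one batch of promoted items before its own compaction. Your marker bound via distinct full sections is also the paper's.

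The gap is in the quantitative step. You bound the promoted batch by $\frac{1}{2}\abs{B_{i-1}}$ and take $\abs{B_{i-1}} < 4KS_{\max}$. But when level $i-1$ is compacted, $B_{i-1}$ has already received its own batch of promoted items from level $i-2$. So $4KS_{\max}$ is valid only for $i-1=0$. Your own estimate lets $B_1$ reach almost $6KS_{\max}$ items before its compaction, so $B_2$ can receive almost $3KS_{\max}$ promoted items. The bounds then grow as $6, 7, 7.5, \dots$ times $KS_{\max}$, and your conclusion of fewer than $6KS_{\max}$ items on every level does not follow for $i \geq 2$.

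The repair is an induction over levels, which is what the paper does. Assume $\abs{B_{i-1}} < 8KS_{\max}$ throughout the operation. Then level $i$ holds fewer than $4KS_{\max} + \frac{1}{2}\cdot 8KS_{\max} = 8KS_{\max}$ items. The base case is level $0$, which holds fewer than $4KS_{\max}$ items in a merge and at most $2KS_{\max}$ in an update. Note that $8KS_{\max}$ is exactly the fixed point of $a \mapsto 4KS_{\max} + a/2$, so the constant $8$ is what the recursion forces, not slack.

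Also, the odd-size rule does not promote an extra item. The largest item is set aside and returned to the same buffer, so the number of promoted items is at most $\frac{1}{2}\abs{B_{i-1}}$ and no ``$+1$'' is needed.
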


\begin{proof}
We say that level $i$ is \emph{processed} when the algorithm checks whether the level-$i$ compactor is full and, if so, performs a compaction on it. Whenever $B_i$ receives new items during an operation, level $i$ is processed later in the same operation (during an update, the levels are processed bottom-up as long as they receive new items, and during a merge, all levels are processed bottom-up). After that, $B_i$ contains fewer than $2KS_i \leq 2KS_{\max}$ items (for the value of $S_i$ at that time), and it does not receive any more items during the operation. Thus, at the beginning of an operation, every buffer contains fewer than $2KS_{\max}$ items. During an update, only $B_0$ receives an item from outside, and during a merge, every buffer receives fewer than $2KS_{\max}$ items from the other sketch. Apart from this, $B_i$ receives promoted items only during a compaction on level $i-1$, which promotes at most half of the items of $B_{i-1}$. As the compactions are performed bottom-up, this happens at most once during an operation and before the compaction on level $i$ (if any), after which $B_i$ receives no more items. Thus, by induction over $i$, $B_i$ always contains fewer than $4KS_{\max} + \frac{1}{2}\cdot 8KS_{\max} = 8KS_{\max}$ items. The bound on the number of markers follows, as every marker marks a different full section of $B_i$.
\end{proof}

\begin{proof}[Proof of \Cref{i:behind_maker}]
First let us recall that by \Cref{obs:monotonous_markers}, for any two levels $h < h'$ we have $\min(M_h) \leq \min(M_{h'})$. In particular, this means that $\min(\min(M_h), \min(M_{i+G})) = \min(M_{\min(h, i+G)})$ and that an increase in $G$ cannot violate the invariant.

When $\Psi_{g, h}$ is created on level $i$ by \Cref{r:create}, we have $h = b(i) \geq i+G$, and the statement holds by the definition of a marking compaction.

When moving $\Psi_{g, h}$ between items and markers by \Cref{r:last_item,r:deleted_item,r:deleted_marker}, the invariant is not broken as we move $\Psi_{g, h}$ only to smaller items and markers and only to higher levels.

Note that this invariant is not affected by \Cref{r:factor}.

When $\min(M_h)$ is about to stop being the smallest marker due to being deleted by a compaction, we have $T_h < 1$ by \Cref{remove_smallest_marker}, which implies $\Psi_{g, h} = 0$ for all $g$ by \Cref{r:reset}. When $\min(M_h)$ is about to stop being the smallest due to a merge, we have $\Psi_{g, h} = 0$ for all $g$ by \Cref{r:merge}.

Finally, suppose that $\min(M_{i+G})$ changes while some item or marker $w$ on level $i$ has $\Psi_{g,h}(w) > 0$. If $h < i+G$, the invariant follows from $w < \min(M_h) \leq \min(M_{i+G})$. If $h \geq i+G$, then $M_{i+G} \neq \emptyset$: level $h$ was bounding when $\Psi_{g,h}$ was created, so we had $T_h > T_\infty = 2$ and thus level $h$ had markers at that time; a level never loses all its markers (when a compaction removes $\min(M_h)$, it creates a new, smaller marker; see the proof of \Cref{obs:monotonous_markers}); and by \Cref{obs:monotonous_markers}, every level below a level with markers has markers too. If $h = i+G$, this is the case of $\min(M_h)$ changing, which we have already treated. If $h > i+G$, then $\min(M_{i+G})$ can change only by being removed by a compaction or due to a merge (creating a new marker cannot decrease $\min(M_{i+G})$, since by the definition of the canonical marking, each marker that is not removed by the compaction is at most the largest item of the rightmost non-participating section, and thus smaller than the new marker), and in both cases the invariant is restored by \Cref{r:marker_change}.
\end{proof}

\begin{proof}[Proof of \Cref{i:non-marking}]
It follows from \Cref{i:behind_maker} and the definition of a marking compaction.
\end{proof}

\begin{proof}[Proof of \Cref{i:high_zero}]
It follows from \Cref{i:behind_maker} and the fact that $\min(M_h)$ is smaller than or equal to all items and markers on level $h+1$ (\Cref{obs:monotonous_markers}).
\end{proof}

\begin{proof}[Proof of \Cref{i:psibound_local}]
First, a change of $T_h$ cannot break the invariant, as for any $g$ and any item or marker, the value of $p_{g, h}$ decreases at most by a factor of $f$ (due to the definition of $f$ and $\Phi$), while  its assigned $\Psi_{g, h}$ decreases exactly by the factor of $f$ by \Cref{r:factor}. Finally, \Cref{r:reset,r:merge} clearly cannot violate the invariant, as they just delete some $\Psi$.

It remains to show that \Cref{r:create,r:last_item,r:deleted_marker,r:deleted_item,r:marker_change} do not break the invariant. Let us prove it by induction over $i$. 
By \Cref{comp_increase}, a marking compaction increases $T_i$ by at most 
$$
\Delta T_i = 2^{-\max(0,T_i - T_{b(i)})} \leq 1.
$$ 
Using the fact that for $0<x\leq1$ we have $2^x-1\leq x$, we bound $\Delta\Phi_i$ using the definition of \Cref{r:create}:
$$
\Delta\Phi_i \leq 2^{4i+T_i}\cdot (2^{\Delta T_i}-1) \leq 2^{4i+T_i}\cdot \Delta T_i \leq 2^{4i+T_{b(i)}}\,,
$$
where $T_i$ is the original value of the counter; the last inequality uses $\Delta T_i \le 2^{-T_i + T_{b(i)}}$.
This gives us that by \Cref{r:create} the new marker acquires at most $p_{g,h}(i)/2$ of potential $\Psi_{g, h}$ for $h=b(i)$ (and no potential if $b(i)=\infty$), as we have $i = g$ at the time of applying \Cref{r:create}.
This suffices to prove the base case $i=0$, as markers on level 0 do not acquire any more $\Psi_{g, h}$ and all items on level 0 have $\Psi_{g, h} = 0$ for all $g$. Now let us assume the invariant holds for all $j<i$.

Any item on level $i$ has three sources of $\Psi_{g, h}$. It got at most 
$\frac{1}{K}p_{g,h}(i-1) \leq 2^{-2}\frac{1}{K}p_{g,h}(i)$ from the previous level, the same amount from its deleted neighbor (both by \Cref{r:deleted_item}), and $\frac{2}{K}p_{g,h}(i-1) \leq 2^{-1}\frac{1}{K}p_{g,h}(i)$ from some marker by \Cref{r:deleted_marker}. The sum is at most $p_{g,h}(i)/K$ and the claim holds by induction.

Any marker on level $i$ also has three sources of $\Psi_{g, h}$. 
First, by \Cref{r:create}, it gets at most $p_{g,h}(i)/2$, as we have $i = g$ while applying \Cref{r:create}.
Second, it gets at most $p_{g,h}(i) / K$ by \Cref{r:last_item}, and finally, by \Cref{r:marker_change} and \Cref{obs:buffer_size}, at most
\begin{align*}
\sum_{m \in M_{i-G}} \Psi_{g, h}(m) + \sum_{y \in B_{i-G}} \Psi_{g, h}(y) &\leq
8S_{\max}\cdot p_{g,h}(i-G) + 8S_{\max}K\frac{1}{K}\cdot p_{g,h}(i-G)\\
&= 16\, p_{g,h}(i-G)\cdot S_{\max} = 2^{4-2G}\cdot p_{g,h}(i) \cdot S_{\max}\\
&\leq p_{g,h}(i) \cdot 2^{-2}
\end{align*}
using $G \geq \frac{1}{2}\log S_{\max} + 3$ in the last inequality (see the definition in \eqref{gdef}). A marker receives potential by \Cref{r:marker_change} at most once, as it never becomes the smallest marker again after it stops being the smallest one. Again, the sum is at most $p_{g,h}(i)$ (as $K\geq 2^6$). This concludes the proof.
\end{proof}

\begin{proof}[Proof of \Cref{i:psibound}]
First, consider a level $h < H$.
Let us start by proving
\begin{equation}\label{phigh_bound}
\Psi_{g, h} \leq 2^{\max(2, T_h) + 4h + 8}/(h-g)^4.
\end{equation}

We claim that it suffices to prove~\eqref{phigh_bound} for the case when $h-g \geq G$. This is because $\Psi_{g, h}$ is created by \Cref{r:create} only if $h = b(g) \geq g+G$, and the other rules never change the source level $g$ or the target level $h$ of potential; thus, when $h-g < G$, no new $\Psi_{g, h}$ can be created. Moreover, when $T_h$ increases, the right-hand side of~\eqref{phigh_bound} does not decrease, and when $T_h$ decreases, \Cref{r:factor} multiplies $\Psi_{g,h}$ by a factor that is at most the factor by which the right-hand side of~\eqref{phigh_bound} decreases. Finally, \Cref{r:reset,r:merge} only decrease $\Psi_{g,h}$. Thus, if the bound holds for $h-g \geq G$, it also holds later when $h-g < G$ (if this ever happens).

For the proof of \eqref{phigh_bound}, we need that
\begin{equation}\label{eqn:log_S_max}
	\log(S_{\max}) \leq 2(h-g) - 4 \log(h-g) + 2.
\end{equation}
For this we use $h-g \geq G \geq \frac{1}{2}\log S_{\max} + 3 \log \log S_{\max} \geq 4$ (see the definition of $G$ in~\eqref{gdef}). As $h-g \geq 4$, the right side of~\eqref{eqn:log_S_max} grows with $h-g$ and thus it suffices to prove
\begin{align*}
\log(S_{\max}) \leq 2\cdot \left(\frac{1}{2}\log S_{\max} + 3 \log \log S_{\max}\right) &- 4\log\left(\frac{1}{2}\log S_{\max} + 3 \log \log S_{\max}\right)+2\,,
\end{align*}
or equivalently
\begin{align*}
\log\left(\frac{1}{2}\log S_{\max} + 3 \log \log S_{\max}\right)
&\leq \frac{3}{2}\log\log S_{\max}+\frac{1}{2}\\
\frac{1}{2}\log S_{\max} + 3 \log \log S_{\max}
&\leq \sqrt{2}\cdot \log^{\frac{3}{2}} S_{\max}\,,
\end{align*}
and this is true as $S_{\max} \geq 6$.

In the proof of~\eqref{phigh_bound} we further use \Cref{i:psibound_local} and the fact that by \Cref{i:high_zero} we have $\Psi_{g,h}^i = 0$ for $i > h$. By \Cref{obs:buffer_size}, every level contains fewer than $8KS_{\max}$ items and fewer than $8S_{\max}$ markers, and thus $\Psi^i_{g,h} \leq 16S_{\max}\,p_{g,h}(i)$ by \Cref{i:psibound_local}.
\begin{align*}
\Psi_{g, h} &\leq \sum_{i=0}^{h}\Psi_{g, h}^i \leq
\sum_{i=0}^{h} 16 S_{\max}\, p_{g,h}(i) =
16 S_{\max} \sum_{i=0}^{h} 2^{2(i+g)+\max(2, T_h) + 1}\\
&\leq 2^{\max(2, T_h) + 2g + 5} S_{\max} \sum_{i=0}^{h} 2^{2i} \\
&\leq 2^{\max(2, T_h) + 4h - 2(h - g) + 6 + \log S_{\max}}\\
&\leq 2^{\max(2, T_h) + 4h + 8} / (h-g)^4\,,
\end{align*}
where the last inequality uses~\eqref{eqn:log_S_max}.
Finally, we sum over all possible values of $g$ assuming $T_h \geq 1$ (otherwise we have $\Psi_h =0$ by \Cref{r:reset}). We use that every $g$ with $\Psi_{g,h} > 0$ satisfies $h-g \geq G_0 \geq 13$, as $\Psi_{g,h}$ is created only if $h = b(g) \geq g+G$ (again see \eqref{g0def}):
$$
\Psi_h \leq \sum_{g = 0}^{h-13} 2^{\max(2, T_h) + 4h + 8} / (h-g)^4 \leq
2^{\max(2, T_h) + 4h + 8} \sum_{i=13}^{\infty}i^{-4}
\leq 2^{\max(2, T_h) + 4h + 8 - 12}
\leq \frac{1}{4}\Phi_h\,,
$$
where the last inequality follows from $\Phi_h = 2^{4h}(2^{T_h} - 1) \geq 2^{4h+\max(2,T_h)-2}$ for $T_h \geq 1$.

Now, let us bound $\Psi_\infty$. By \Cref{r:create}, the counter $\Psi_\infty$ is increased only by marking compactions on levels $g$ with $b(g) = \infty$ (and in a merge, the counters of the two sketches are added), and it never decreases. For a marking compaction on level $g$ with $b(g) = \infty$, the computation in the proof of \Cref{i:psibound_local} gives $\Delta\Phi_g \leq 2^{4g+T_{\infty}} = 2^{4g+2}$.
Let $N_g$ be the number of compactions performed on level $g$ so far, including the compactions performed in all the sketches that were merged into the current sketch. We have $N_{H-1} = 0$, as there would be a level $H$ if level $H-1$ had performed a compaction.
Each compaction on level $g$ promotes at least $S_gK/2 \geq 5K$ items, and thus $N_g \leq \abs{I_{g+1}}/(5K)$.
Furthermore, every item of the input stream $I_j$ of a level-$j$ compactor is either still in $B_j$, or it participated in exactly one compaction, and every compaction promotes exactly half of the participating items (recall that the number of participating items is even). This gives us $\abs{I_j} = \abs{B_j} + 2\abs{I_{j+1}}$ for every $j<H$, where $I_H = \emptyset$. Using $\abs{B_j} < 8KS_{\max}$ (\Cref{obs:buffer_size}), we obtain
$$
\abs{I_{g+1}} = \sum_{j=g+1}^{H-1} 2^{j-g-1}\abs{B_j} < 8KS_{\max}\cdot 2^{H-g-1}
$$
and thus $N_g < \frac{4}{5}\,S_{\max}\cdot 2^{H-g}$. Putting this together, we get
$$
\Psi_\infty \leq \sum_{g=0}^{H-2} 2^{4g+2}N_g \leq \frac{16}{5}\,S_{\max}\, 2^{H}\sum_{g=0}^{H-2}8^g \leq \frac{16}{35}\,S_{\max}\,2^{H}\,8^{H-1} = \frac{2}{35}\,S_{\max}\cdot 2^{4H} \leq S_{\max}\cdot 2^{4H-4}\,.
$$
\end{proof}

\begin{proof}[Proof of \Cref{i:phibound}]
A non-marking compaction does not break the invariant as it does not increase $\Phi$ and does not decrease $\Psi$ by \Cref{i:non-marking}. A marking compaction does not break the invariant by \Cref{r:create}, as the created potential is either assigned to the new marker or added to the counter $\Psi_\infty$.

\Cref{r:last_item,r:deleted_marker,r:deleted_item,r:marker_change} do not change the value of $\Psi$, nor do the update and merge operations (in the sense of adding an item into the level-0 compactor or merging the compactors and adding up their counters $\Psi_\infty$, but without the triggered compaction operations).

In \Cref{r:factor} we multiply $\Psi_h$ by a factor $f$, whereas $\Phi_h$ is multiplied by some factor $f' \leq f$. As $\Psi_h \leq \frac{1}{4}\Phi_h$ by \Cref{i:psibound}, $\Phi_h$ is decreased by a larger amount than $\Psi_h$.

In \Cref{r:reset}, consider the decrease of $\Phi_h$ from the value $2^{4h+2} - 2^{4h}$ to the current value which is at most $2^{4h+1}-2^{4h}$. Due to the definition of \Cref{r:factor}, this decrease of $\Phi_h$ was not charged to any decrease of $\Psi_h$ yet. This decrease amounts to at least $2^{4h+1}$, while the current value of $\Psi_h$ is at most $2^{4h-2}$ by \Cref{i:psibound}, and thus we can charge the removal of $\Psi_h$ to the decrease of $\Phi_h$. Note that we never charge any potential twice, as $h$ needs $T_h \geq 2$ to become a bounding level again for some $\Psi_h$ to be created.

Finally, to analyze \Cref{r:merge},
we use the fact that \Cref{obs:tmerge} and the definition of $\Phi_h$ imply that the new potential $\Phi_h$ is at most $\Phi_h^{(-h)}$; thus, the overall decrease of $\Phi_h$ is at least $\Phi_h^{(+h)}$ and we can charge the removal of $\Psi_h^{(+h)}$ to this decrease as we have $\Psi_h^{(+h)} \leq \frac{1}{4} \Phi_h^{(+h)}$ by \Cref{i:psibound}. Note that we never charge any potential twice, as in \Cref{r:factor}, we ignore $\Phi_h^{(+h)}$. 
\end{proof}

\begin{proof}[Proof of \Cref{i:final}]
Using \Cref{i:psibound,i:phibound} and the definition of $\Phi$ we have
\begin{align*}
\Phi &\leq \Psi = \sum_{h<H} \Psi_h + \Psi_{\infty}
\leq\frac{1}{4}\sum_{h<H}\Phi_h + \Psi_{\infty} = \frac{1}{4}\Phi + \Psi_{\infty}\,,\\
\frac{3}{4}\Phi &\leq \Psi_{\infty} \leq S_{\max}\cdot 2^{4H-4}.
\end{align*}
For any level $h<H$, if $T_h < 1$, the invariant holds. Otherwise, we have $\Phi_h = 2^{4h}(2^{T_h}-1) \geq 2^{4h+T_h-1}$, and thus
\begin{align*}
2^{4h+T_h-1} &\leq \frac{4}{3}\cdot S_{\max}\cdot 2^{4H-4} < S_{\max}\cdot 2^{4H-3}\,,\\
T_h &< 4(H - h) + \log S_{\max} - 2\,.
\end{align*}
\end{proof}

The last invariant together with the definition of the section count gives us an upper bound on $S_{\max}$.

\begin{cor}\label{cor:smax}
We have $S_{\max} < 16H+64$, and thus $T_h \leq 4H + \log H + 7$ for every level $h$.
\end{cor}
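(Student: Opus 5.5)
The plan is to combine \Cref{i:final} with the definition of the section count in~\eqref{sdef}, and then close the bound with a self-referential inequality in $S_{\max}$. Fix a moment at which some level $h$ attains the value $S_{\max}$, i.e., $S_h = S_{\max}$.

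First, I will bound the additive part of~\eqref{sdef}. Since $T_{b(h)} \geq 0$, we have $\max(0, \lceil T_h - T_{b(h)}\rceil) \leq \lceil T_h \rceil \leq T_h + 1$. By \Cref{i:final}, $T_h \leq 4(H-h) + \log S_{\max} \leq 4H + \log S_{\max}$. For the logarithmic term, I use $\lceil 4\log T_h\rceil \leq 4\log T_h + 1$. This gives
$$
S_{\max} = S_h \leq 9 + 4\log T_h + 2T_h \leq 9 + 4\log(4H + \log S_{\max}) + 8H + 2\log S_{\max}.
$$

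The key step, and the only real obstacle, is solving this implicit inequality. The right-hand side is $8H$ plus terms that are logarithmic in $H$ and $S_{\max}$, so I will argue by contradiction. Suppose $S_{\max} \geq 16H + 64$. Then $H \leq S_{\max}/16$, so $4H + \log S_{\max} \leq S_{\max}/4 + \log S_{\max} \leq S_{\max}$. The right-hand side is therefore at most
$$
S_{\max}/2 + 9 + 6\log S_{\max}.
$$
We would then need $S_{\max}/2 \leq 9 + 6\log S_{\max}$. This holds only for $S_{\max}$ below roughly $90$. That is not immediately contradictory with $S_{\max} \geq 64$, so the constants need care. I would sharpen the estimate by using $H \leq (S_{\max}-64)/16$ directly rather than $H \leq S_{\max}/16$. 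This gives $8H \leq S_{\max}/2 - 32$, and the inequality becomes
$$
S_{\max}/2 + 32 \leq 9 + 6\log S_{\max}\,,
$$
which fails for all $S_{\max} \geq 64$ since $6\log S_{\max} \leq S_{\max}/2 + 23$ there. Hence $S_{\max} < 16H + 64$. I expect this bookkeeping of constants to be the delicate part. If it were tight, I would fall back on the finer estimate $4\log T_h \leq 4\log(4H+\log S_{\max})$ with $H \geq 1$.

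Finally, I substitute the bound back into \Cref{i:final}: $T_h \leq 4H + \log S_{\max} < 4H + \log(16H+64) \leq 4H + \log H + 7$. The last step uses $16H + 64 \leq 2^7 H$ for $H \geq 1$, so that $\log(16H+64) \leq 7 + \log H$. This settles the second claim.

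A subtlety to address is the remark that $G$, $S_{\max}$ and the bounding levels are mutually dependent. The invariants were proven as holding after every operation with the current $S_{\max}$, so the inequality above holds at every time for the current value. In particular, any cascade of increases of $S_{\max}$ must stay below $16H+64$, and it therefore terminates.
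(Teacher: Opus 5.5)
Your proof is correct and follows the paper's argument: you bound $S_h \le 2T_h + 4\log T_h + 9$ via \eqref{sdef}, apply \Cref{i:final} at the moment $S_{\max}$ is attained to get $S_{\max} \le 8H + 2\log S_{\max} + 4\log(4H+\log S_{\max}) + 9$, rule out $S_{\max} \ge 16H+64$, and substitute back using $\log(16H+64) \le \log H + 7$. Your explicit check via $8H \le S_{\max}/2 - 32$ and $6\log S_{\max} < S_{\max}/2 + 23$ (the inequality is strict, as the contradiction needs) fills in the step the paper only asserts; the one detail you use implicitly and should state is that $H$ never decreases, so the value of $H$ at that earlier moment is at most the current one.
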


\begin{proof}
By the definition of $S_h$ in~\eqref{sdef} and as $T_{b(h)} \geq 0$, we have
$S_h \leq 6 + (4\log T_h + 1) + 2(T_h + 1) = 2T_h + 4\log T_h + 9$ for every level $h$.
Consider the level $h$ and the moment at which the value $S_{\max}$ was attained. As $H$ and $S_{\max}$ never decrease, \Cref{i:final} applied at that moment gives $T_h \leq 4H + \log S_{\max}$, where $H$ and $S_{\max}$ denote the current values. Hence
$$
S_{\max} \leq 8H + 2\log S_{\max} + 4\log(4H + \log S_{\max}) + 9\,.
$$
The right-hand side is smaller than $S_{\max}$ whenever $S_{\max} \geq 16H+64$, and thus $S_{\max} < 16H+64$. Finally, \Cref{i:final} and $\log(16H+64) \leq \log H + 7$ give $T_h \leq 4H + \log H + 7$.
\end{proof}

\subsubsection{The proof of the space bound}
To finish the analysis, we apply \Cref{cor:smax} together with a telescoping argument regarding the section counts of bounding levels. For this we need another simple technical observation.

\begin{obs}\label{buffer_bound}
Between operations, $B_h$ always contains fewer than $2K(S_h+2)$ items.
\end{obs}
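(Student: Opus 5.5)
The plan is to use that, between operations, every level $h$ was processed at some point during the last operation, and then to bound how much $S_h$ can drop afterwards. When level $h$ is processed, we perform at most one compaction on it. After this, $\abs{B_h} < 2KS_h$, where $S_h$ is the value at that moment.

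First I will check that a single compaction really suffices, so that the processing step ends with $\abs{B_h} < 2KS_h$. A compaction removes all sections to the right of some $\se[i]$ with $i<S_h$, so at most $(i+1)K \le S_hK < 2KS_h$ items remain, plus possibly the one temporarily removed odd item. I will handle that item with a one-line parity argument, or simply absorb it into the slack below.

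From that moment until the end of the operation, $B_h$ and $M_h$ are not touched. Levels are processed bottom-up, and no items flow downward. Hence $T_h$ and the first two terms $6+\lceil 4\log T_h\rceil$ of \eqref{sdef} stay fixed. The only term that can decrease is $2\max(0,\lceil T_h-T_{b(h)}\rceil)$, and it decreases only if $T_{b(h)}$ grows.

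Since $2K(S_h+2) = 2KS_h + 4K$, it suffices to show that this term drops by at most $2$. Equivalently, $T_{b(h)}$ increases by at most $1$, because $\max(0,\lceil a-b-1\rceil)\ge \max(0,\lceil a-b\rceil)-1$.

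Next I will identify $T_{b(h)}$ with a maximum. Unwinding the definition of bounding levels, $T_{b(h)} = \max\bigl(T_\infty, \max_{\ell\ge h+G} T_\ell\bigr)$ with $T_\infty=2$. Indeed, the lowest bounding level at or above $h+G$ is the one holding the largest counter among levels $\ge h+G$, or $\infty$ if none exceeds $2$; I will verify this carefully, including ties.

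An increase of $G$ during the operation only shrinks the set over which the maximum is taken, so it cannot increase $T_{b(h)}$. Every level $\ell>h$ is processed at most once after $h$ and undergoes at most one compaction. Additions of items, merges and non-marking compactions do not increase $T_\ell$, by \Cref{moving_markers}, \Cref{obs:canonical_minimal} and \Cref{obs:tmerge}. A marking compaction increases $T_\ell$ by at most $2^{-\max(0,\cdot)}\le 1$, by \Cref{comp_increase}. Hence each individual $T_\ell$ grows by at most $1$ after $h$ is processed, and so does the maximum. This gives the bound on $T_{b(h)}$ and concludes the argument.

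The step I expect to need the most care is the claim that each level undergoes at most one compaction per operation after having been processed. For merges, all levels are processed once, bottom-up, and no level receives items after it is processed. For updates, the cascade goes upward only, so the same holds. Any residual slack, such as the odd item, fits within the extra $4K$.
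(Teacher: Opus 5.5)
\textbf{Gap: the opening premise fails for updates.} Your outline matches the paper's proof: bound $|B_h|$ right after $h$ is processed, write $T_{b(h)}=\max(T_\infty,\max_{\ell\ge h+G}T_\ell)$, get from \Cref{comp_increase} that each higher counter grows by at most $1$, and note that a larger $G$ only shrinks the range of the maximum. But your first sentence is false. During an update, levels are processed only as long as they receive new items, so typically only level $0$ (or a short prefix of levels) is processed. A level $h$ may therefore have last been processed many operations earlier.

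Your counting of compactions above $h$ is explicitly ``per operation''. It covers only the operation in which $h$ was processed and says nothing about the operations since. If several later operations could each compact a level $\ge h+G$, the increases would add up and $T_{b(h)}$ could grow by more than $1$.

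The missing step, which the paper states explicitly, is this. Fix the \emph{last} time level $h$ was processed. Every later operation must be an update, since a merge processes all levels. Such an update performed no compaction on any level $\ge h$: its cascade runs upward from level $0$, so compacting a level above $h$ would require processing $h$ first. Hence $B_h$, $T_h$ and all levels above $h$ are untouched since that moment, and ``at most one compaction per higher level'' holds over the whole interval. You already have the ingredient (``the cascade goes upward only''); you need to apply it across operations, not only within one.

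\textbf{Smaller point: which $S_h$ you hold fixed.} In your first step you bound the leftover by $(i+1)K\le S_hK<2KS_h$ using the value of $S_h$ \emph{before} the compaction. The compaction itself changes $T_h$. In particular, a non-marking compaction that removes marked sections right of the pivot can lower $T_h$, and with it $S_h$. So the post-compaction $S_h$, which is the value your later reasoning holds fixed, can be smaller than the one you used. The inequality $|B_h|<2KS_h$ for that value does hold, but it needs its own short argument; the paper also takes this step for granted.
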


\begin{proof}
Right after level $h$ is processed (as defined in the proof of \Cref{obs:buffer_size}), $B_h$ contains fewer than $2KS_h$ items, but $S_h$ may decrease later, so $B_h$ may contain more than $2KS_h$ items between operations. Consider the last time when level $h$ was processed. Since then, $B_h$ and $T_h$ have not changed, and every level above $h$ has performed at most one compaction: this holds for the rest of the operation during which level $h$ was processed, and every later operation has been an update that performed no compaction above level $h$ (as a merge processes all levels and an update processes level $h$ before performing a compaction on a higher level). By \Cref{comp_increase}, a counter increases only by a compaction on its level and by at most 1, and thus every counter $T_{h'}$ with $h' \geq h+G$ has increased by at most 1. As $T_{b(h)} = \max(T_\infty, \max_{h+G \leq h' < H} T_{h'})$ (the highest level attaining this maximum is bounding, and no level between $h+G$ and this level is bounding) and as $G$ never decreases, $T_{b(h)}$ has increased by at most 1, and thus $S_h$ has decreased by at most 2. 
\end{proof}

\begin{lemma}[The space bound]\label{lem:space}
The size of the sketch is 
$$
\O(\varepsilon^{-1}\cdot \log(\varepsilon N)\cdot \log \log \delta^{-1} \cdot (\log \log(\varepsilon N) + \log \varepsilon^{-1} + \log \log \delta^{-1} ))
$$
memory words.
\end{lemma}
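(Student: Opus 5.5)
By \Cref{buffer_bound}, the total number of stored items is less than $\sum_{h<H} 2K(S_h+2)$. The markers add only $\O(\sum_h S_h)$ words, since each marks a distinct full section. The plan is therefore to bound $\sum_h S_h$ and multiply by $K = \Theta(\varepsilon^{-1}\log\log\delta^{-1})$. I first bound $H$. Each level-$(h-1)$ compaction promotes at least $5K$ items, and the input streams halve per level, so $\abs{I_{h}} \le N/2^{h}$. A level exists only if the level below performed a compaction, which needs at least $2KS \geq 20K$ items. Hence $H \le \log(N/K)+\O(1) = \O(\log(\varepsilon N))$. By \Cref{cor:smax}, this gives $T_h = \O(\log(\varepsilon N))$ for every $h$, $S_{\max} = \O(\log(\varepsilon N))$, and consequently $G = \O(\log\varepsilon^{-1} + \log\log\delta^{-1} + \log\log(\varepsilon N))$ by~\eqref{gdef}.

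I then split~\eqref{sdef} into three parts. The constant part contributes $6H$. The logarithmic part contributes $\sum_h \lceil 4\log T_h\rceil = \O(H \log\log(\varepsilon N))$, using the bound on $T_h$. Both fit in the target, since $H\cdot \log\log(\varepsilon N)\cdot K$ is within the claimed bound. The remaining part is $2\sum_h \max(0,\lceil T_h - T_{b(h)}\rceil)$. It is at most $2H + 2\sum_{h\in V}(T_h - T_{b(h)})$, where $V$ is the set of levels with $T_h > T_{b(h)}$. This sum is the term needing the telescoping argument.

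For the telescoping, I partition the levels into $G$ residue classes modulo $G$ and aim to show that within each class the sum of positive differences is $\O(\max_h T_h)$. For $h\in V$, let $h^* \ge h$ be the highest level with $T_{h^*} \ge T_h$ among levels $<h+G$. Then $T_{b(h)} = \max(2,\max_{h'\ge h+G} T_{h'})$, as noted in the proof of \Cref{buffer_bound}. So $T_h - T_{b(h)} \le M(h) - M(h+G)$, where $M(j) \eqdef \max(2,\max_{h'\ge j} T_{h'})$ is non-increasing in $j$. Summing over $h \equiv r \pmod G$, the terms $M(r+kG)-M(r+(k+1)G)$ telescope to at most $M(0) \le \max_h T_h = \O(\log(\varepsilon N))$. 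Summing over all $G$ classes gives $\sum_{h\in V}(T_h-T_{b(h)}) = \O(G \log(\varepsilon N))$. The main care is in verifying $T_h \le M(h)$ and $T_{b(h)} = M(h+G)$ directly from the definition of bounding levels; the latter holds because the lowest bounding level at or above $h+G$ attains the suffix maximum (ties are resolved toward higher levels, consistent with the strict inequality). This is the step I would check most carefully.

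Combining the pieces, $\sum_h S_h = \O(\log(\varepsilon N)\cdot(\log\log(\varepsilon N) + G)) = \O(\log(\varepsilon N)(\log\log(\varepsilon N)+\log\varepsilon^{-1}+\log\log\delta^{-1}))$. Multiplying by $K = \O(\varepsilon^{-1}\log\log\delta^{-1})$, using $\log\ln\delta^{-1} = \O(\log\log\delta^{-1})$, gives the claimed bound. The stored $\Psi$ quantities are analysis-only and occupy no memory.
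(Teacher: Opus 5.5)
Your proof is correct and follows the paper's route. You bound the stored items via \Cref{buffer_bound}, split $S_h$ into a part controlled by \Cref{cor:smax} and the part $2\max(0,\lceil T_h - T_{b(h)}\rceil)$, and telescope the latter within the residue classes modulo $G$. The differences are only bookkeeping, and both versions are valid. You telescope $M(h)-M(h+G)$ for the suffix maximum $M(j)=\max(2,\max_{h'\ge j}T_{h'})$ rather than introducing the modified bounding levels $b'(h)$, which is slightly more direct. You also bound $H$ via the halving of the input streams $\abs{I_h}$ rather than the weight argument.
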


\begin{proof}
First, let us note that all the additional space required by the sketch is asymptotically dominated by the number of items stored in the buffers $B_i$ of the compactors. Thus, it suffices to bound the number of stored items. By \Cref{buffer_bound} this is at most $2K(S_h+2)$ at any level $h$, and as $S_h \geq 10$, the number of stored items is at most $2K\cdot \sum_{h=0}^{H-1}(S_h+2) \leq 3K\cdot \sum_{h=0}^{H-1}S_h$.

Let us divide each $S_h$ into two parts (to recall the definition, see~\eqref{sdef}). The first part has size $4\log T_h + 6 + 3$ and the second part has size at most $2\cdot \max(0, T_h - T_{b(h)})$ (we got rid of both of the ceiling functions by adding $+3$ to the first part). By \Cref{cor:smax}, we have $T_h \leq 4H + \log H + 7 \leq 12H$, and thus the first part has size at most $4\log(12H) + 9 \leq 4\log H + 24$. Hence, the sum of all the first parts is at most $H\cdot (4\log H + 24)$.

To bound the sum of the second parts, let us consider a different definition of the bounding level. Level $h$ is bounding if $T_h > 2$ and for all integers $i>0$ and levels $h' = h+iG$ we have $T_h > T_{h'}$. Further, let us define $b'(h)$ as the smallest bounding level $h'$ such that $h' = h + iG$ for some integer $i>0$, or $\infty$ if such a level does not exist. Observe that with these new definitions, the section count of each level is greater than or equal to the section count with the original definitions, since for each $h$, the levels taken into consideration as $b'(h)$ are a subset of the levels taken into consideration as $b(h)$ (and we pick the level with the largest $T$ out of them). Thus, it suffices to bound the sum of $S_h$'s for this new definition. 

For a level that is not bounding, the second part of $S$ is 0, and for a level $h$ that is bounding, it is at most $2\cdot(T_h - T_{b'(h)})$. 
For fixed $j \in [G]$, consider all bounding levels $h$ such that $h = j + iG$ for some integer $i$. The sum of all second parts for those bounding levels telescopes to 
$$
\sum_{\substack{h\equiv j \;(\mathrm{mod}\; G) \\ h\; \mathrm{is\; bounding}}} \hspace{-5mm}2 \cdot (T_h - T_{b'(h)}) \leq 2\cdot (T_{\mathrm{first}} - T_{\mathrm{last}}) \leq 8H + 2\log H + 14
$$
using \Cref{cor:smax}. Taking the sum over all $j \in [G]$, we can bound the sum of all second parts of $S_h$'s by $G\cdot(8H + 2\log H + 14)$.

Finally, we observe that $H \leq \log(\varepsilon N)$ as for any item on level $h$ we have deleted $2^h$ items of the original stream and we have at least $2/\varepsilon$ items on level $H-1$.

Next, we bound $G$. By \Cref{cor:smax}, we have $\log S_{\max} < \log(16H+64) \leq \log H + 7$, and thus
$$
\ceil{\frac{1}{2}\log S_{\max} + 3\log\log S_{\max}} \leq \frac{1}{2}(\log H + 7) + 3\log(\log H + 7) + 1 \leq 4\log H + 12\,.
$$
Together with $G_0 \leq \log \varepsilon^{-1} + \log \ln \delta^{-1} + 11$, the definition of $G$ in~\eqref{gdef} gives us $G \leq 4\log H + \log \varepsilon^{-1} + \log \ln \delta^{-1} + 12$.

To put all this together, we use $8H + 2\log H + 14 \leq 24H$ and $3K \leq 2^9\varepsilon^{-1}\log\ln\delta^{-1}$ (as $\ceil{x} \leq 2x$ for $x \geq 1$). The number of all stored items is at most
\begin{align*}
3K \sum_{h=0}^{H-1}S_h &\leq 3K\cdot\left(H\cdot(4\log H + 24) + G\cdot(8H + 2\log H + 14)\right) \\
&\leq 3K\cdot H\cdot (4\log H + 24 + 24G) \\
&\leq 2^9 \varepsilon^{-1} \log \ln \delta^{-1} \log(\varepsilon N) \cdot
(100\log \log (\varepsilon N) + 24\log \varepsilon^{-1} + 24 \log \ln \delta^{-1} + 312) \\
&\in \O\left(\varepsilon^{-1} \log \ln \delta^{-1} \log(\varepsilon N) \cdot
(\log \log (\varepsilon N) + \log \varepsilon^{-1} + \log \ln \delta^{-1})\right).
\end{align*}

\end{proof}

\subsection{The error bound}\label{sec:error_bound}

In this section, we prove the error guarantee of the sketch, considering the most general setting of full mergeability. However, we assume that the input does not depend on the internal randomness of the algorithm (thus our analysis does not cover adversarial inputs). We first give a few tools and definitions, and then we use them to prove the error bound (\Cref{error_bound}).

For the whole error analysis, let us fix an arbitrary item $y$,
and to simplify the notation, we write $\rank_{h}(y)$ instead of $\rank(y, I_h)$. The goal of this section is to bound the error of the sketch for a rank query $y$.

Recall that the error of an item $y$ is defined as $\err(y) \eqdef \estrank(y) - \rank(y)$ where $\rank(y)$ is the true rank of $y$ in the input stream and $\estrank(y) \eqdef \sum_{h=0}^{H-1}2^h \rank(y, B_h)$.
Thus, after a compaction on level $h$, there are two possibilities.
If the rank of $y$ among the compacted items is even, the error does not change, as, regardless of the random choice, half of the items smaller than or equal to $y$ get removed, and half are promoted to the next level.
On the other hand, if the rank of $y$ among the compacted items is odd, the error either increases or decreases by $2^h$ with equal probability.
The compactions that do not change the error are not important for the analysis, which leads us to the following definition.

\begin{defn}[Important items and compactions]\label{def:imp}~\\
\begin{itemize}
	\item \emph{Important items} are all items $y' \leq y$.
	\item \emph{Important compactions} are the compactions where the rank of $y$ among the compacted items is odd. Those are exactly the compactions affecting $\err(y)$.
	\item $t_h$ is the number of important compactions on level $h$. Note that it is a random variable. 
\end{itemize}
\end{defn}
Let $b_{h, i}$ denote the random $\pm 1$ variables generated during the $i$-th level-$h$ compaction, 
let $\mathcal{I}_h^{x}$ denote the subset of them generated during the first $\min(x, t_h)$ \emph{important} compactions on level $h$ and
let $\bin_h^x \eqdef \sum_{i\in \mathcal{I}_h^x}b_{h, i}$.
By the argument above, the error of the sketch can be expressed in the following way. 
\begin{obs}\label{obs:err}
$$\err(y) = \sum_h 2^h \bin_h^{t_h}$$
\end{obs}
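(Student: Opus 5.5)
The plan is to expand $\estrank(y) - \rank(y)$ as a telescoping sum over the history of the sketch, tracking how each operation changes $\estrank(y)$. The quantity to monitor is $E \eqdef \sum_{h<H} 2^h \rank(y, B_h)$, compared with the true rank of $y$ in the union of all inputs summarized so far. First, inserting a single item into level $0$ raises $E$ and $\rank(y)$ by the same amount: $1$ if the item is at most $y$, and $0$ otherwise. A merge adds both $E$ and $\rank(y)$ over the two sketches, because buffers are merged as multisets and $\rank(y,\cdot)$ is additive over disjoint unions. So updates and merges leave $E - \rank(y)$ unchanged, and by induction over the merge tree every change in $E - \rank(y)$ comes from a compaction.

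Next I analyze one level-$h$ compaction. If the number of items in $B$ is odd, the largest item is set aside and does not participate. The participating items form a sorted even-length list $z_1<\dots<z_{2k}$, and I let $r$ be the number of them that are at most $y$. The compaction removes all $2k$ items from $B_h$, which lowers $E$ by $2^h r$. It then adds to $B_{h+1}$ either the odd-indexed or the even-indexed items, which raises $E$ by $2^{h+1}$ times the number of promoted items at most $y$.
\begin{itemize}
\item If $r$ is even, exactly $r/2$ promoted items are at most $y$ in both cases, so the net change is $0$.
\item If $r$ is odd, the odd-indexed choice promotes $(r+1)/2$ items at most $y$ and the even-indexed choice promotes $(r-1)/2$. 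The net change is therefore $+2^h$ or $-2^h$, determined by the fair coin.
\end{itemize}
These odd-$r$ compactions are precisely the important compactions of \Cref{def:imp}. Writing the sign of the $j$-th important level-$h$ compaction as $b_{h,j}$ (relabeling the coin so that $+1$ means the odd-indexed choice), the level-$h$ contribution is $2^h\sum_{j\le t_h} b_{h,j} = 2^h \bin_h^{t_h}$, since $\mathcal{I}_h^{t_h}$ is exactly the set of these signs. Summing over levels gives the observation.

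The only subtle point is bookkeeping under merges. Compactions inside different sub-sketches that are later merged must all be counted on their level, so $t_h$ and the variables $b_{h,i}$ have to refer to all level-$h$ compactions across the entire merge history. This is consistent with the fact that input streams are unions under merges. I also need the compaction to change nothing else relevant to $E$: markers are not counted in $E$, and the temporarily removed largest item is returned to $B_h$, so neither affects the sum. I expect no real obstacle here; the statement is an accounting identity, with the randomness entering only through the signs.
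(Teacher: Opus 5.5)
Your proposal is correct and takes the same route as the paper, which also derives the identity by per-compaction accounting: a compaction with an even number of participating items $\le y$ leaves the error unchanged, and one with an odd number changes it by $\pm 2^h$ according to the coin. Your extra bookkeeping makes explicit what the paper leaves implicit: updates and merges preserve $\estrank(y)-\rank(y)$, $t_h$ counts level-$h$ compactions over the whole merge history, and a fixed sign convention makes $b_{h,i}$ well defined.
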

This expression is much more suitable for the analysis than the original definition. To bound parts of the error separately, we also introduce a handy notation for the error on and up to a given level and for the error after the first $x$ important compactions.

\begin{defn}[The error]\label{defn:err}
\begin{align*}
\err_h(y) &\eqdef 2^h\bin_h^{t_h}\\
\err_{<h}(y) &\eqdef \sum_{\ell=0}^{h-1} \err_{\ell}(y)\\
\err_{h}^{x}(y) &\eqdef 2^h\bin_h^x
\end{align*}
\end{defn}
All random variables used in level-$h$ compactions are by definition i.i.d.\ Rademacher variables. This may not be generally true for any subset of those variables, if the inclusion/exclusion of a given variable is dependent on the realization of some of the variables (which is our case). 

However, whether a compaction is important is decided \emph{before} its coin is tossed. Hence, if $X_k$ denotes the coin of the $k$-th important level-$h$ compaction (or an independent fair coin if there is no such compaction), each $X_k$ is fair given $X_1, \dots, X_{k-1}$, so $X_1, X_2, \dots$ are i.i.d.\ Rademacher variables. As $\bin_h^x = \sum_{k=1}^{\min(x, t_h)} X_k$ and $\min(x, t_h) \leq x$, we can apply \Cref{fact} to $\err_h^x = 2^h \bin_h^x$.

\begin{fact}[Corollary of the Azuma--Hoeffding bound]\label{fact}
Let there be fixed constants $h$, $x\geq1$, and $\delta_h$, and let $X_1, X_2, \dots$ be a sequence of independent random variables such that each $X_i$ attains values $2^h, -2^h$ with equal probability. We have
$$
\pr \Biggl[\exists z\leq x: \abs{\sum_{i=1}^{z} X_i} \geq 2^h\sqrt{2x\ln(2\delta_h^{-1})} \Biggr] \leq \delta_h.
$$
\end{fact}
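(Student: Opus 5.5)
This is a maximal version of Azuma--Hoeffding for the martingale $Y_z = \sum_{i=1}^{z} X_i$, a sum of independent symmetric steps of magnitude $2^h$. I would prove it by combining Doob's maximal inequality for the nonnegative submartingale $e^{\lambda Y_z}$ with a moment-generating-function bound, and then taking a union bound over the two signs.

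\textbf{Step 1 (MGF bound).} For a single step, $\mathbb{E}[e^{\lambda X_i}] = \cosh(\lambda 2^h) \le e^{\lambda^2 4^h/2}$. By independence,
$$\mathbb{E}[e^{\lambda Y_x}] \le e^{\lambda^2 4^h x/2}\quad\text{for all } \lambda>0.$$

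\textbf{Step 2 (Maximal inequality).} Since $Y_z$ is a martingale and $u\mapsto e^{\lambda u}$ is convex, $e^{\lambda Y_z}$ is a nonnegative submartingale. Doob's inequality then gives, for $a>0$,
$$\pr\Bigl[\max_{z\le x} Y_z \ge a\Bigr] = \pr\Bigl[\max_{z\le x} e^{\lambda Y_z}\ge e^{\lambda a}\Bigr] \le e^{-\lambda a}\,\mathbb{E}[e^{\lambda Y_x}] \le e^{-\lambda a + \lambda^2 4^h x/2}.$$
Optimizing with $\lambda = a/(4^h x)$ yields $\exp(-a^2/(2\cdot 4^h x))$.

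\textbf{Step 3 (Two-sided union bound).} Set $a = 2^h\sqrt{2x\ln(2\delta_h^{-1})}$. Then $a^2/(2\cdot 4^h x) = \ln(2\delta_h^{-1})$, so each one-sided probability is at most $\delta_h/2$. By symmetry ($-X_i$ has the same law as $X_i$), the same bound holds for $\min_{z\le x} Y_z \le -a$. A union bound gives total probability at most $\delta_h$.

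\textbf{Main obstacle.} The only step needing care is the maximal part, i.e., the quantifier $\exists z\le x$, which is exactly what Doob's inequality handles. Equivalently, one could stop the martingale at the first $z$ with $Y_z\ge a$ and apply optional stopping to the supermartingale $e^{\lambda Y_z - \lambda^2 4^h z/2}$. The case $\delta_h \ge 1$ is trivial.
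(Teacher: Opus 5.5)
Your proof is correct. The paper gives no argument beyond calling the statement a corollary of (the maximal form of) Azuma--Hoeffding, and your derivation (Doob's maximal inequality for $e^{\lambda Y_z}$, the bound $\cosh(t)\le e^{t^2/2}$, and a two-sided union bound) is exactly the standard proof of that inequality. The only nit is that $x$ need not be an integer, so $Y_x$ should be read as $Y_{\lfloor x\rfloor}$; this only strengthens the bound $\mathbb{E}[e^{\lambda Y_{\lfloor x\rfloor}}]\le e^{\lambda^2 4^h x/2}$.
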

To utilize \Cref{fact}, we need a way to bound the number of important compactions on a given level. We start with a trivial (but useful) bound.

\begin{obs}[Trivial bound on $t_h$]\label{m_simple_bound}
Each important compaction removes at least one important item from $B_h$ as the rank of $y$ among the compacted items is odd. This gives us for each $h$
$$
t_h \leq \rank_{h}(y).
$$
\end{obs}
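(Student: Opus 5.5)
The statement to prove is \Cref{m_simple_bound}, namely $t_h \leq \rank_h(y)$. The plan is a direct injective charging argument. Each important level-$h$ compaction will be charged to a distinct important item of the input stream $I_h$. An item participating in a compaction leaves $B_h$ for good: it is either evicted or promoted to level $h+1$. Under merges, the input stream of the merged compactor is the union of the two input streams, so every item of $B_h$ lies in $I_h$ and is removed from $B_h$ at most once over the whole history. It therefore suffices to show that every important compaction removes at least one important item from $B_h$.

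To establish this, I fix an important level-$h$ compaction and look at its participating items, sorted. By \Cref{def:imp}, the rank of $y$ among them, that is, the number of participating items $y' \leq y$, is odd. In particular it is at least $1$, so at least one participating item is important, and it is removed from $B_h$ by this compaction.

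One detail to check is the odd-size case. There the largest item is temporarily set aside and does not participate, so it is not counted, and the parity definition refers only to the actual participating items.

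Finally, I assign to each important compaction one important item it removes, for instance its smallest participating item. These assigned items are distinct, since each item is removed from $B_h$ only once. All of them belong to $I_h$ and are at most $y$, so their number is at most $\rank(y, I_h) = \rank_h(y)$, which gives $t_h \leq \rank_h(y)$.

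I do not expect a real obstacle. The only care needed is to confirm that items are never reinserted into $B_h$ after leaving it, which holds because promotions only go upward, and that merges do not duplicate items, which holds because the input streams of merged sketches are disjoint multisets whose union is taken.
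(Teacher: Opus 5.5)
Your proposal is correct and is the paper's own argument: an important compaction has an odd, hence positive, number of participating items $y' \leq y$, and each of them leaves $B_h$ permanently, so charging every important compaction to one such item (for instance its smallest participating item) gives an injection into the important items of $I_h$, and therefore $t_h \leq \rank_h(y)$. Your extra checks on merges and on the temporarily set-aside largest item are sound; they spell out details the paper leaves implicit.
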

To bound $t_h$, we need a bound on $\rank_h(y)$.
Let us investigate the dependence of $\rank_{\ell}(y)$ on $\rank_{\ell-1}(y)$ (for $\ell>0$).
Letting $ O_{\ell-1}$ be the output stream of the compactor at level $\ell - 1$,
it holds that $\rank(y, I_\ell) = \rank(y, O_{\ell-1})$, so we are comparing the input and the output of the compactor on level $\ell - 1$. Some of the items from $I_{\ell - 1}$ can stay in the buffer $B_{\ell-1}$; the rest are subject to compaction. By \Cref{def:imp}, any non-important compaction always sends half of the participating important items to the output, and any important compaction always sends one important item more or less than half with equal probability. Neglecting the items staying in the buffer, we can bound the rank of $y$ in the input stream $I_\ell$ as
$$
\rank(y, I_{\ell}) = \rank(y, O_{\ell-1}) \leq \frac{1}{2}\left(\rank(y, I_{\ell -1}) + \bin_{\ell-1}^{t_{\ell -1}}\right).
$$
Using this inequality recursively yields \Cref{rank_behaviour}.

\begin{obs}[Bound on $\rank_h(y)$]\label{rank_behaviour}
For any level $h$, we have
$$
\rank_h(y) \leq 2^{-h}\rank(y) + \sum_{\ell=0}^{h-1}2^{-h+\ell}\bin_\ell^{t_{\ell}}
= 2^{-h}\cdot (\rank(y) + \err_{<h}(y)).
$$
\end{obs}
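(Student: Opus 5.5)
We prove the first inequality by induction on $h$, using the one-step recursion stated just before the observation. Then we rewrite the resulting sum using \Cref{defn:err}.

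For $h=0$ the claim reads $\rank_0(y)\le \rank(y)$, which holds with equality, since $I_0$ is the input of the algorithm. For the induction step, fix $\ell\ge 1$ and justify the one-step bound
$$
\rank(y, I_\ell)=\rank(y,O_{\ell-1})\le \tfrac12\bigl(\rank(y,I_{\ell-1})+\bin_{\ell-1}^{t_{\ell-1}}\bigr).
$$
To do so, partition the important items of $I_{\ell-1}$ into those still in $B_{\ell-1}$ and those that participated in exactly one level-$(\ell-1)$ compaction. Every item of the input stream is in one of these two classes, because items leave $B_{\ell-1}$ only through compactions. Under merges this still holds, since input and output streams are unions.

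Consider a single compaction with $r$ participating important items. The participating items are sorted, and the odd- or even-indexed ones are promoted, so the number of promoted important items is exactly $r/2$ when $r$ is even. When $r$ is odd it is $(r+b)/2$, where $b=\pm1$ is the coin of that compaction under the sign convention matching $\err$. With the convention that promoting the odd-indexed items corresponds to $+1$, this matches the paper's statement that the error changes by $\pm2^{h}$.

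The temporarily removed largest item, used when $|B|$ is odd, does not matter. If it were important, then all items would be important, and it stays in the buffer anyway.

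Summing over all level-$(\ell-1)$ compactions gives
$$
\rank(y,O_{\ell-1})=\tfrac12\bigl(\#\{\text{important items that participated}\}+\bin_{\ell-1}^{t_{\ell-1}}\bigr)\le \tfrac12\bigl(\rank_{\ell-1}(y)+\bin_{\ell-1}^{t_{\ell-1}}\bigr),
$$
where the inequality drops the nonnegative number of important items remaining in $B_{\ell-1}$.

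Plugging in the induction hypothesis $\rank_{\ell-1}(y)\le 2^{-(\ell-1)}\rank(y)+\sum_{j<\ell-1}2^{-(\ell-1)+j}\bin_j^{t_j}$ and multiplying by $1/2$ yields exactly the claimed bound for $h=\ell$. Finally, $\sum_{\ell<h}2^{-h+\ell}\bin_\ell^{t_\ell}=2^{-h}\sum_{\ell<h}\err_\ell(y)=2^{-h}\err_{<h}(y)$ by \Cref{defn:err}, which gives the equality.

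The only delicate point is the sign convention: $\bin_{\ell-1}$ must enter with a plus sign, consistent with its use in $\err$. An estimate that is too high at level $\ell-1$ must correspond to more important items being promoted, and indeed $\estrank$ counts $2^{h}$ times the buffer rank. Checking that this matches \Cref{obs:err} is the main step to verify; everything else is bookkeeping.
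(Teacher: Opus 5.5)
Your proposal is correct and follows the paper's argument. Like the paper, you split $I_{\ell-1}$ into items still in $B_{\ell-1}$ and items compacted exactly once, use that a compaction with $r$ participating important items promotes exactly $r/2$ of them ($(r\pm1)/2$ for an important one), derive the one-step bound $\rank_\ell(y)\le\frac12\bigl(\rank_{\ell-1}(y)+\bin_{\ell-1}^{t_{\ell-1}}\bigr)$, and unroll it; your extra bookkeeping only makes explicit what the paper leaves implicit, namely the temporarily removed odd item and the sign convention, where the coin should be defined as $+1$ exactly when $(r+1)/2$ important items are promoted so that \Cref{obs:err} holds regardless of whether positions are indexed from $0$ or $1$.
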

This is the best we get out of the trivial bound. To obtain something stronger (\Cref{mbound}), we need some additional definitions.
The \emph{final level} $F_y$ is the highest level that any important item has ever reached. As this is a random value, we also need its deterministic counterpart $E_y$. We later prove that we have $E_y \geq F_y$ with high probability. The levels $F_y, F_y', F_y''$ (or their deterministic versions) serve as pivots, dividing the sketch into 4 parts, each one analyzed separately in the proof.

\begin{defn}[Final level \& friends]\label{critical}
\begin{align*}
F_y  &\eqdef \text{the lowest level $h$ such that $\rank_h(y) \leq K$}\\
E_y  &\eqdef \text{the lowest level such that $2^{-E_y+3}\rank(y) \leq K$}\\
F_y' &\eqdef \max(0, F_y - \lceil \log\ln \delta^{-1} \rceil)\\
E_y' &\eqdef \max(0, E_y - \lceil \log\ln \delta^{-1} \rceil)\\
F_y''&\eqdef \max(0, F_y - G_0)\\
E_y''&\eqdef \max(0, E_y - G_0)
\end{align*}
\end{defn}
The following corollary is by the minimality of $E_y$.
\begin{cor}\label{minimalityE}
If $E_y > 0$, we have $2^{-E_y+4}\rank(y) > K$ and thus $2^{E_y} < 2^4\rank(y)/K$.
\end{cor}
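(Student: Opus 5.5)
This follows directly from the definition of $E_y$ as the \emph{lowest} level satisfying $2^{-E_y+3}\rank(y) \leq K$, so the plan is a one-line minimality argument. Assume $E_y > 0$. Then $E_y - 1 \geq 0$ is a legitimate candidate level. Since $E_y$ is the smallest level with the defining property, level $E_y - 1$ must fail it, that is,
$$
2^{-(E_y-1)+3}\rank(y) > K\,,
$$
which is exactly $2^{-E_y+4}\rank(y) > K$.

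For the second inequality, I will multiply both sides by $2^{E_y}/K > 0$ (recall $K \geq 2^6 > 0$ by the remark after~\eqref{kdef}). This gives $2^4\rank(y)/K > 2^{E_y}$.

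The only point to watch is that $E_y$ is well defined, i.e., some level satisfies the defining inequality. This holds because $2^{-h+3}\rank(y) \to 0$ as $h$ grows, so the set of valid levels is nonempty and has a minimum. The hypothesis $E_y > 0$ is needed precisely so that $E_y - 1$ is a valid level to compare against. I do not expect any real obstacle here.
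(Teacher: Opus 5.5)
Your proposal is correct and is exactly the paper's argument. The paper justifies the corollary only by the minimality of $E_y$: since $E_y-1\ge 0$ fails $2^{-(E_y-1)+3}\rank(y)\le K$, the first inequality follows, and the second is the same rearrangement you give using $K>0$.
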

As the smallest $K$ items are never compacted, there can be no error on level $F_y$ and above.
\begin{cor}\label{f_trivial}
For each $h\geq F_y$ we have $\err_h(y) = 0$.
\end{cor}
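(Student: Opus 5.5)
The claim that $\err_h(y)=0$ for every $h\geq F_y$ rests on one fact: no important compaction happens on any level at or above $F_y$. By \Cref{obs:err} and \Cref{defn:err}, $\err_h(y)=2^h\bin_h^{t_h}$, and $\bin_h^{t_h}$ is a sum over exactly $t_h$ coins. So it suffices to show $t_h=0$ for all $h\geq F_y$.

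First I handle the level $F_y$ itself. By definition $\rank_{F_y}(y)\leq K$, so at most $K$ important items are ever inserted into $B_{F_y}$. I will argue that no compaction on this level ever includes an important item. A compaction removes only sections $\se(i,\infty]$ with $i\geq x+2\geq 2$, by the choice of $i$ and \Cref{obs:unmarked_head}. Hence it never touches the protected sections $\se[0,1]$, which hold the $2K$ smallest items currently in $B$. Since $B$ is kept sorted and contains at most $K$ important items at any moment, all of them lie among the smallest $K\leq 2K$ positions whenever a compaction is triggered. (Triggering requires $|B|\geq C\geq 20K$, so these positions exist.) Thus the rank of $y$ among the compacted items is $0$, which is even, so the compaction is not important.

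One technicality is the temporary removal of the largest item when $|B|$ is odd. It only removes a largest item, and that item could be important only if all items in $B$ were important, which is impossible because $|B|>K$. Merges also create no problem: the input streams are unions, so $\rank_{F_y}(y)$ already counts every important item that ever enters, including those arriving from merged sketches. This gives $t_{F_y}=0$.

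Next I propagate this upward. If level $F_y$ never compacts an important item, it never promotes one, so $\rank_{F_y+1}(y)=\rank(y,O_{F_y})=0$. Inductively, $\rank_h(y)=0$ for all $h>F_y$, and by \Cref{m_simple_bound}, $t_h\leq\rank_h(y)=0$. Equivalently, $\rank_h(y)$ is non-increasing in $h$, since $I_{h}$ consists of outputs of level $h-1$; therefore every $h\geq F_y$ satisfies $\rank_h(y)\leq K$, and the argument for level $F_y$ applies directly to each such $h$.

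The only delicate step is making sure important items always sit inside the protected sections at compaction time, under merges and the odd-size adjustment. This follows from the sortedness of $B$ and the cumulative bound $\rank_h(y)\leq K$.
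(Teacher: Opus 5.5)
Your proof is correct and takes the same approach as the paper. The paper justifies the corollary in one line: the smallest $K$ items are never compacted, so on level $F_y$ and above no compaction is important and $\err_h(y)=0$. Your write-up spells out the details the paper leaves implicit: the compaction index satisfies $i\ge x+2$ by \Cref{obs:unmarked_head}, and the odd-size adjustment, merges, and the monotonicity of $\rank_h(y)$ in $h$ cause no trouble.
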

For levels above $F_y''$, we finally obtain the better bound on $t_h$, which is due to the definition of a marking compaction.
\begin{obs}[Better bound on $t_h$]\label{mbound}
For any level $h \geq F_y''$ we have
$$
t_h \leq \rank_h(y)/K.
$$
\end{obs}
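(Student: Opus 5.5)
The plan is to show that for $h \geq F_y''$ every important level-$h$ compaction is \emph{marking}, and then to charge each such compaction to $K$ fresh important items using \Cref{obs:marking_compaction}.

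\textbf{Step 1: important compactions on these levels are marking.} Fix a level $h \geq F_y''$ and an important level-$h$ compaction $P$. Since the rank of $y$ among the compacted items is odd, some important item participates, so $\min(P) \leq y$. By definition $P$ is marking iff $\min(P) \leq \min(M_{h+G})$, so it suffices to show $y < \min(M_{h+G})$ at that moment.

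\textbf{Step 2: no marker at level $h+G$ can be at most $y$.} We have $G \geq G_0$. If $F_y \geq G_0$, then $F_y'' = F_y - G_0$ and $h+G \geq F_y$. If instead $F_y < G_0$, then $h + G \geq G_0 > F_y$. Either way $h+G \geq F_y$.
\begin{itemize}
\item If $h+G \geq H$, then $\min(M_{h+G}) = +\infty$ and we are done. So let $\ell = h+G < H$.
\item The input stream $I_{\ell+1}$ is a subset of $I_\ell$, so $\rank_\ell(y)$ is non-increasing in $\ell$. Hence $\rank_\ell(y) \leq \rank_{F_y}(y) \leq K$. The quantity $\rank(y, I_\ell)$ only grows over time, so this bound holds at every moment, in particular at the time of $P$.
\item Suppose some marker $m \in M_\ell$ satisfied $m \leq y$. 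By \Cref{obs:unmarked_head}, $m$ marks a full section $\se[j]$ with $j \geq 2$.
\item Buffer $B_\ell$ is sorted, and by \Cref{def:marking} all items of $\se[j]$ are smaller than $m$. So all items of $\se[0,j]$, at least $3K$ of them, are smaller than $m \leq y$.
\item These are distinct important items of $I_\ell$, so $\rank_\ell(y) \geq 3K > K$, a contradiction.
\end{itemize}
Thus $\min(M_\ell) > y \geq \min(P)$, and $P$ is marking.

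\textbf{Step 3: charging.} By \Cref{obs:marking_compaction}, each marking level-$h$ compaction $P$ has a set $U_P$ of $K$ items of $I_h$ that are all smaller than $\min(P) \leq y$, hence important. For a fixed $h$ these sets are pairwise disjoint. Summing over the $t_h$ important compactions gives $K t_h \leq \rank_h(y)$, which is the claim.

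\textbf{Main obstacle.} The delicate point is Step 2. It requires the rank bound at level $h+G$ to hold at the time of the compaction rather than only at the end; the monotonicity of input streams in both time and level handles this. It also needs the fact that a marker certifies $3K$ smaller items, which comes from the protected sections.
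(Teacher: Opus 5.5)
Your proof is correct and follows the same route as the paper. Every important compaction on a level $h \geq F_y''$ is marking because level $h+G \geq F_y$ cannot contain a marker below $y$, and \Cref{obs:marking_compaction} then charges each such compaction to $K$ disjoint important items.

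The only difference is how you rule out the small marker. The paper notes that such a marker would be a copy of an important item that participated in a level-$(h+G)$ compaction, which is impossible because the smallest items are never compacted. You instead use the marking invariant directly: the marked section together with the two protected sections gives at least $3K$ important items in $B_{h+G}$. This avoids tracing where the markers came from.
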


\begin{proof}
We claim that for any $h \geq F_y''$, all level-$h$ important compactions are marking. If this were not the case, we would have some level $h' = h + G \geq h+G_0 \geq F_y$ such that at some time there was a marker $m$ smaller than $y$ on $h'$ (by the definition of a marking compaction), and this marker is a copy of an important item participating in the compaction that created $m$. This is a contradiction with the definition of $F_y$, as the smallest $K$ items never participate in a compaction.
Using this fact together with \Cref{obs:marking_compaction} (for each marking compaction there are $K$ unique small items) yields the desired bound.
\end{proof}

Finally, the following observation is needed for proving that $E_y \geq F_y$ (and thus $E_y' \geq F_y'$ and $E_y'' \geq F_y''$) with high probability.

\begin{obs}\label{large_frank}
If $F_y \geq G_0$, we have $\rank_{F_y''}(y) > 2^{G_0-2}K$; otherwise, we have $\rank_h(y) < 2^{-h+1}\rank(y)$ for all $h\leq F_y$.
\end{obs}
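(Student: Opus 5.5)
The plan is to turn the one-level recursion behind \Cref{rank_behaviour} into a multiplicative one, using the better bound of \Cref{mbound} on every level above $F_y''$.

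\textbf{One-level recursion.} As in the derivation before \Cref{rank_behaviour}, I write
$$\rank_{\ell+1}(y) \le \tfrac12\bigl(\rank_\ell(y) + \bin_\ell^{t_\ell}\bigr) \le \tfrac12\bigl(\rank_\ell(y) + t_\ell\bigr).$$
This holds because each important compaction sends at most one important item more than half of its participating important items. Important items that stay in $B_\ell$ only make the left-hand side smaller. For every $\ell \ge F_y''$, \Cref{mbound} gives $t_\ell \le \rank_\ell(y)/K$. Substituting yields
$$\rank_{\ell+1}(y) \le \tfrac{1+1/K}{2}\,\rank_\ell(y) \qquad \text{for all } \ell \ge F_y''.$$
Iterating from a level $a\ge F_y''$ up to a level $b\ge a$ gives
$$\rank_b(y) \le \Bigl(\tfrac{1+1/K}{2}\Bigr)^{b-a}\rank_a(y).$$

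\textbf{Case $F_y \ge G_0$.} Here $F_y'' = F_y - G_0 \geq 0$. By the minimality of $F_y$ (and $F_y\ge G_0\ge 1$), we have $\rank_{F_y-1}(y) > K$. Applying the iterated bound with $a=F_y''$ and $b=F_y-1$, so that $b-a=G_0-1$, gives
$$K < \rank_{F_y-1}(y) \le 2^{-(G_0-1)}(1+1/K)^{G_0-1}\rank_{F_y''}(y).$$
It therefore suffices to show $(1+1/K)^{G_0-1} \le 2$, which follows from $e^{(G_0-1)/K}\le 2$, i.e., $G_0 - 1 \le K\ln 2$. Compare~\eqref{g0def} with~\eqref{kdef}: $G_0 \le \log\varepsilon^{-1} + \log\ln\delta^{-1} + 11$ while $K = 2^5\lceil\varepsilon^{-1}\rceil\lceil\log\ln\delta^{-1}\rceil \ge 2^6$. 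Since $\log x$ grows much slower than $x$, the inequality $G_0 - 1 \le K\ln 2$ holds comfortably. This gives $\rank_{F_y''}(y) > 2^{G_0-2}K$.

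\textbf{Case $F_y < G_0$.} Now $F_y''=0$, so the recursion holds from level $0$. For $h \le F_y$ we get
$$\rank_h(y) \le 2^{-h}(1+1/K)^h\rank(y).$$
Since $h \le F_y < G_0$, the same estimate as above gives $(1+1/K)^h < 2$, hence $\rank_h(y) < 2^{-h+1}\rank(y)$. For $h=0$ this reads $\rank(y)<2\rank(y)$, which holds since $\rank(y)\ge 1$.

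\textbf{Main obstacle.} The only delicate point is justifying the recursion with $t_\ell$ in place of the signed sum, including the role of items left in the buffer. This is the same bookkeeping already used for \Cref{rank_behaviour}, so I expect it to go through directly. The numeric check that $(1+1/K)^{G_0}$ stays below $2$ is routine.
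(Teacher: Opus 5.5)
Your proposal is correct and follows the paper's proof essentially step for step. Like the paper, you combine the one-level recursion $\rank_{\ell+1}(y)\le\frac{1}{2}\bigl(\rank_\ell(y)+t_\ell\bigr)$ with \Cref{mbound} for $\ell\ge F_y''$, iterate up to level $F_y-1$ (using the minimality of $F_y$) when $F_y\ge G_0$ and from level $0$ otherwise, and absorb the factor $(1+1/K)^{G_0}\le e^{G_0/K}<2$ using the definitions of $G_0$ and $K$; your tighter exponent $G_0-1$ and the remark that $\rank(y)\ge 1$ are harmless refinements.
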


\begin{proof}
Analogously to \Cref{rank_behaviour} we give a deterministic upper bound on $\rank_h(y)$ for each $h > F_y''$ (so that all the important compactions are marking): 
$$\rank_h(y) \leq 2^{-1}(\rank_{h-1}(y)+\bin_{h-1}^{t_{h-1}}) \leq 2^{-1}(\rank_{h-1}(y)+t_{h-1}) \leq 2^{-1}(\rank_{h-1}(y)+\rank_{h-1}(y)/K).$$
Using this recursively for any level $F_y''<\ell\leq F_y$ we get
$$
\rank_\ell(y)\leq 2^{F_y''-\ell}\rank_{F_y''}(y) \cdot(1+1/K)^{\ell-F_y''} \leq
2^{F_y''-\ell}\rank_{F_y''}(y) \cdot \exp(G_0/K)< 2^{F_y''-\ell+1}\rank_{F_y''}(y),
$$
where in the last step we used $\exp(G_0/K) < 2$ by the definition of $G_0$ and $K$.

If $F_y \geq G_0$, we have $F_y = F_y'' + G_0$ and plugging in $\ell = F_y - 1 = F_y''+G_0-1$ gives 
$$
K < \rank_{F_y-1}(y) \leq 2^{-G_0+2}\rank_{F_y''}(y).
$$

On the other hand, if $F_y < G_0$, we have $F_y'' = 0$ and thus for each $\ell \leq F_y$ we get $\rank_\ell(y) < 2^{-\ell+1}\rank(y)$.
\end{proof}

With all this in place, we can finally dive into the lemma.

\begin{lemma}[The error bound]\label{error_bound}
With probability at least $1 - \delta$, we have
$$
\abs{\err(y)} \leq \varepsilon\rank(y).
$$
\end{lemma}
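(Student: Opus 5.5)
We split the levels using the deterministic pivots $E_y'' \le E_y' \le E_y$ and write $\err(y) = \sum_h \err_h(y)$ via \Cref{obs:err}. The first step is to show that $F_y \le E_y$ holds with high probability; on this event, \Cref{f_trivial} makes all levels $h \ge E_y$ error-free. We then bound three groups of levels separately, each with failure probability about $\delta/4$, and show that each contributes at most $\frac{\varepsilon}{3}\rank(y)$. The tool throughout is \Cref{fact}, applied to $\err_h^x$ with a deterministic cap $x$ on $t_h$. Since \Cref{fact} controls all prefixes $z\le x$, it suffices to show $t_h \le x$ on the good event; this handles the subtlety that $t_h$ is random and correlated with earlier coins. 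We process the levels bottom-up, so that when we bound $t_h$ through \Cref{rank_behaviour}, the term $\err_{<h}(y)$ is already controlled by the events of the lower levels (say $|\err_{<h}(y)| \le \rank(y)$). This gives $\rank_h(y) \le 2^{-h+1}\rank(y)$ deterministically on the good event.

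\emph{Low levels} $h < E_y''$. We use the trivial bound $t_h \le \rank_h(y) \le 2^{-h+1}\rank(y)$ (\Cref{m_simple_bound}). Applying \Cref{fact} with $x = 2^{-h+1}\rank(y)$ and a level-dependent $\delta_h = \delta\, 2^{-(E_y''-h)}/8$ gives an error of order $2^{h/2}\sqrt{\rank(y)\,(E_y''-h+\ln\delta^{-1})}$. Summing this geometric-type series up to $E_y''$ yields $O\bigl(2^{E_y''/2}\sqrt{\rank(y)\ln\delta^{-1}}\bigr)$. Using \Cref{minimalityE}, $2^{E_y''} \le 2^{4-G_0}\rank(y)/K$, and since $2^{G_0} \ge 2^{10}\varepsilon^{-1}\ln\delta^{-1}$, this is at most $\frac{\varepsilon}{3}\rank(y)$. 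No marking argument is needed for these levels.

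\emph{Middle levels} $E_y'' \le h < E_y$. On the event $F_y \le E_y$, we have $h \ge F_y''$, so \Cref{mbound} gives $t_h \le \rank_h(y)/K \le 2^{-h+1}\rank(y)/K$. We apply \Cref{fact} with this $x$. For the levels $h<E_y'$ we take $\delta_h$ geometrically decaying, and the resulting sum is dominated by the top term at $E_y'$. For the last $\lceil\log\ln\delta^{-1}\rceil$ levels, between $E_y'$ and $E_y$, we split $\delta$ evenly. This is where the extra factor $\lceil\log\ln\delta^{-1}\rceil$ in $K$ (\eqref{kdef}) pays for the union bound over these levels. The error per level is about $2^{h/2}\sqrt{\rank(y)\ln(\delta_h^{-1})/K}$, and summing up to $E_y$ with $2^{E_y} < 2^4\rank(y)/K$ gives $O\bigl(\rank(y)\sqrt{\ln\delta^{-1}}/K\bigr)$ for the top block. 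The choice of $\delta_h$ has to be tuned so that the bound is $\varepsilon\rank(y)$ rather than $\varepsilon\rank(y)\sqrt{\ln\delta^{-1}}$: the near-top levels should get $\ln\delta_h^{-1} \approx \ln\delta^{-1}$, while the levels deeper by $j$ tolerate a $2^{j}$-times larger logarithm.

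\emph{Main obstacle: showing $F_y \le E_y$ with probability $1-\delta/4$.} This is circular, because \Cref{mbound} needs levels above $F_y''$ while the bound on $F_y$ needs error control. I would argue by contradiction using \Cref{large_frank}. Suppose $F_y > E_y$. If $F_y \ge G_0$, then $\rank_{F_y''}(y) > 2^{G_0-2}K$. However, the low-level error bound, proved conditionally up to level $F_y''$, gives $\rank_{F_y''}(y) \le 2^{-F_y''+1}\rank(y) < 2^{-E_y''+1}\rank(y)$, and by the definition of $E_y$ this is at most about $2^{G_0-2}K$, a contradiction. If $F_y < G_0$, then \Cref{large_frank} gives $\rank_h(y) < 2^{-h+1}\rank(y)$ directly, and plugging in $h = E_y$ yields $\rank_{E_y}(y) \le K/4$, which contradicts $F_y > E_y$. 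To make this rigorous, I would define the good events level by level with deterministic caps, so that the first violated level is a stopping time, and then union-bound over the at most $O(\log\ln\delta^{-1} + \log(\varepsilon N))$ level-events with the chosen $\delta_h$. I expect the bookkeeping of this circular conditioning to be the delicate part, namely the issue flagged about \cite{ReqSketch}.
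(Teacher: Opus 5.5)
\textbf{Verdict: there is a genuine gap, in the top block $E_y'\le h<E_y$.} Your architecture is the paper's: deterministic caps on $t_h$, a single union bound over fixed-level events, bottom-up induction through \Cref{rank_behaviour}, and the \Cref{large_frank} contradiction for $F_y\le E_y$. Your treatment of the levels $h<E_y'$ goes through.

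\textbf{Why the top block fails.} There you apply \Cref{fact} with $\delta$ split evenly, but the caps are tiny. Since $2^{-E_y+3}\rank(y)\le K<2^{-E_y+4}\rank(y)$, at depth $j=E_y-h$ the cap $2^{-h+1}\rank(y)/K$ lies in $(2^{j-3},2^{j-2}]$. This is below $\ln\delta^{-1}$ throughout the block. In this regime \Cref{fact} is weaker than the trivial bound. Any admissible $\delta_h\le\delta$ has $\ln\delta_h^{-1}\ge\ln\delta^{-1}$, so it yields only about $2^{E_y-j/2}\sqrt{\ln\delta^{-1}}$. For any constant $j\ge 3$ this is
\[
\Theta\bigl(\rank(y)\sqrt{\ln\delta^{-1}}/K\bigr)=\Theta\bigl(\varepsilon\rank(y)\sqrt{\ln\delta^{-1}}/\log\ln\delta^{-1}\bigr),
\]
which exceeds $\varepsilon\rank(y)$ once $\delta$ is small. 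You computed exactly this quantity, but no tuning of $\delta_h$ can remove it. Your prescription $\ln\delta_h^{-1}\approx\ln\delta^{-1}$ for the near-top levels is precisely the losing case.

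\textbf{The missing idea.} Use no concentration on these levels at all. On the good event, \Cref{mbound} and $\rank_h(y)\le 2^{-h+1}\rank(y)$ give deterministically
\[
\abs{\err_h(y)}=2^h\abs{\bin_h^{t_h}}\le 2^h t_h\le 2^h\rank_h(y)/K\le 2\rank(y)/K\le \varepsilon\rank(y)\,2^{-4}\lceil\log\ln\delta^{-1}\rceil^{-1}.
\]
So the whole block contributes at most $\varepsilon\rank(y)/16$; the paper sets $\delta_h=0$ there. The factor $\lceil\log\ln\delta^{-1}\rceil$ in $K$ therefore pays for the \emph{number} of trivially bounded levels, not for a union bound. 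Splitting $\delta$ among $\O(\log\ln\delta^{-1})$ events costs only an additive $\O(\log\log\ln\delta^{-1})$ inside the logarithm. This is also why $E_y'$ sits $\lceil\log\ln\delta^{-1}\rceil$ below $E_y$: that is the depth where the cap reaches about $\ln\delta^{-1}$ and concentration starts to beat the trivial bound.

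\textbf{A smaller point.} In the case $F_y\ge G_0$ of your contradiction, you invoke error control up to level $F_y''$. Under $F_y>E_y$ this involves levels in $[E_y'',F_y'')$, where \Cref{mbound} is not yet available and your $K$-reduced caps need not dominate $t_h$. Instead, use that $\rank_h(y)$ is nonincreasing in $h$, as the paper does:
\[
\rank_{F_y''}(y)\le\rank_{E_y''}(y)\le 2^{-E_y''+1}\rank(y)\le 2^{G_0-2}K.
\]
This needs only the levels $h<E_y''$.
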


\begin{proof}
Let us first give a high-level overview of the proof. 
The main idea is that for each $h$ we bound the probability of $\abs{\err_h^{T_h}(y)}$ exceeding some small threshold $\varepsilon_h\rank(y)$, take the union bound over all $h$, and conditioning on the bound we prove for each $h$ that $t_h \leq T_h$ and thus $\err_h^{T_h} = \err_h$.
The lemma follows by the triangle inequality.
The proof consists of a few steps.
\begin{enumerate}[label=(S\arabic*)]
	\item In the first part, for each level $h$, we define $T_h$, $\varepsilon_h$ and $\delta_h$ such that $\sum \varepsilon_h \leq \varepsilon$ and $\sum \delta_h \leq \delta$.
	\item Second, we use \Cref{fact} for each $h<E_y'$ to prove that the probability of $\abs{\err_h^{T_h}(y)}$ exceeding $\varepsilon_h \rank(y)$ is at most $\delta_h$. Then we take the union bound.
	\item As a third step, we prove that conditioning on the union bound we have for each $h < E_y''$ that $t_h \leq T_h$, which implies $\err_h^{T_h}(y) = \err_h(y)$ and thus $\abs{\err_h(y)}<\varepsilon_h \rank(y)$.
	\item Then we use this fact to obtain $E_y \geq F_y$.
	\item This allows us to prove $\abs{\err_h(y)}<\varepsilon_h \rank(y)$ for $E_y'' \leq h < E_y'$ similarly to (S3).
	\item As a final step, we bound the error on levels $h\geq E_y'$.
\end{enumerate}

\subsubsection{(S1) Definition of $\varepsilon_h$ and $\delta_h$}\label{S1}
We define $T_h$, $\varepsilon_h$ and $\delta_h$ for each $h$ and prove that $\sum \varepsilon_h \leq \varepsilon$ and $\sum \delta_h \leq \delta$. 
\begin{equation*}
T_h \eqdef 
\begin{cases}
\floor{2^{-h+1}\rank(y)}, & h<E_y''\\[2mm]
\floor{2^{-h+1}\rank(y)/K}, & h \geq E_y''
\end{cases}
\end{equation*}

\begin{equation*}
\varepsilon_h \eqdef 
\begin{cases}
\varepsilon \cdot \sqrt{(E_y''-h)\cdot 2^{h-E_y''-9}}, & h<E_y''\\
\varepsilon \cdot \sqrt{(E_y'-h)\cdot 2^{h-E_y'-5}}, & E_y'' \leq h < E_y'\\
\varepsilon \cdot 2^{-4}\cdot\lceil \log\ln \delta^{-1} \rceil^{-1}, & E_y'\leq h < E_y\\
0, & E_y \leq h
\end{cases}
\end{equation*}
\begin{align*}
\begin{split}
\sum_{h=0}^{H-1} \varepsilon_h
&\leq \sum_{h=0}^{E_y''-1} \varepsilon_h + \sum_{h=E_y''}^{E_y'-1} \varepsilon_h + \sum_{h=E_y'}^{E_y-1} \varepsilon_h\\
&\leq \varepsilon \cdot \left( 
	\sum_{i=1}^{\infty}\sqrt{i\cdot 2^{-i-9}} + 
	\sum_{i=1}^{\infty}\sqrt{i\cdot 2^{-i-5}} + 
	\lceil \log\ln\delta^{-1} \rceil \cdot 2^{-4} \cdot \lceil \log\ln \delta^{-1} \rceil^{-1}
\right) \\
&\leq \varepsilon \cdot (0.19 + 0.74 + 0.0625) < \varepsilon
\end{split}
\end{align*}

\begin{equation*}
\delta_h \eqdef 
\begin{cases}
2\delta^{2(E_y''-h)}, & h<E_y''\\
2\delta^{2(E_y'-h)}, & E_y'' \leq h < E_y'\\
0, & E_y' \leq h
\end{cases}
\end{equation*}
For bounding $\sum \delta_h$ we use the fact that $\delta\leq2^{-3}$.
\begin{align*}
\begin{split}
\sum_{h=0}^{H-1} \delta_h
&\leq \sum_{h=0}^{E_y''-1} \delta_h + \sum_{h=E_y''}^{E_y'-1} \delta_h \leq 
	\sum_{h=0}^{E_y''-1} \delta\cdot 2^{-6(E_y''-h)+4} + \sum_{h=E_y''}^{E_y'-1} \delta\cdot 2^{-6(E_y'-h)+4}\\
&\leq \delta \cdot \left( 
	\sum_{i=1}^{\infty}2^{-6i+4} + 
	\sum_{i=1}^{\infty}2^{-6i+4} 
\right) < \delta
\end{split}
\end{align*}

\subsubsection{(S2) With probability at least $1-\delta$ we have $\abs{\err_h^{T_h}(y)}<\varepsilon_h\rank(y)$ simultaneously for all $h < E_y'$}\label{S2}
\Cref{fact} gives us 
\begin{equation*}
\pr \Biggl[\abs{\err_h^{T_h}(y)} \geq 2^h\sqrt{2T_h\ln(2\delta_h^{-1})} \Biggr] \leq \delta_h.
\end{equation*}

For the case $h < E_y''$ we assume $E_y'' = E_y - G_0$ (otherwise there is no such $h$). We plug in $\delta_h = 2\delta^{2(E_y''-h)}$ and $T_h \leq 2^{-h+1}\rank(y)$. With probability at least $1-\delta_h$ we obtain
$$
\abs{\err_h^{T_h}(y)} < 2^h\sqrt{2^{-h+2}\rank(y)2(E_y''-h)\ln(\delta^{-1})} = 
\sqrt{2^{h+3}\rank(y)\cdot(E_y''-h)\cdot\ln(\delta^{-1})\cdot 2^{-E_y''}\cdot 2^{E_y''}}.
$$

Using $E_y'' = E_y - G_0$ and the fact that $2^{E_y} < 2^4\rank(y)/K$ by \Cref{minimalityE} we get
$$
\abs{\err_h^{T_h}(y)} < 
\rank(y)\cdot\sqrt{2^{h+7}\cdot(E_y''-h)\cdot\ln(\delta^{-1})\cdot 2^{-E_y''}\cdot 2^{-G_0}/K} \leq
\varepsilon\rank(y)\cdot\sqrt{2^{h-E_y''-9}\cdot(E_y''-h)},
$$
where in the second step we used the definitions of $K$ and $G_0$.

The case $E_y'' \leq h < E_y'$ is analogous. Again we have $E_y' = E_y - \ceil{\log\ln\delta^{-1}}$ and we plug in $\delta_h = 2\delta^{2(E_y'-h)}$ and $T_h \leq 2^{-h+1}\rank(y)/K$, obtaining
\begin{align*}
\abs{\err_h^{T_h}(y)} &< 2^h\sqrt{2^{-h+2}\rank(y)2(E_y'-h)\ln(\delta^{-1})/K} \\
&= \sqrt{2^{h+3}\rank(y)\cdot(E_y'-h)\cdot\ln(\delta^{-1})\cdot 2^{-E_y'}\cdot 2^{E_y'}/K}
\end{align*}
with probability at least $1-\delta_h$.
We plug in $E_y' \leq E_y - \log\ln\delta^{-1}$ and $2^{E_y} < 2^4\rank(y)/K$ (by \Cref{minimalityE}), and in the last step we unroll $K$ from the definition.
\begin{align*}
\abs{\err_h^{T_h}(y)} &<
\rank(y)\cdot\sqrt{2^{h+7}\cdot(E_y'-h)\cdot\ln(\delta^{-1})\cdot 2^{-E_y'}\cdot 2^{-\log\ln\delta^{-1}}/K^2} \\
&\leq \varepsilon\rank(y)\cdot\sqrt{2^{h-E_y'-5}\cdot(E_y'-h)}.
\end{align*}
We conclude that for each $h < E_y'$ we have with probability at least $1 - \delta_h$
$$
\abs{\err_h^{T_h}(y)} < \varepsilon_h\rank(y).
$$

Let us denote by $\mathcal{E}$ the event that $\abs{\err_h^{T_h}(y)} < \varepsilon_h\rank(y)$ simultaneously for all $h < E_y'$. As we have $\sum \delta_h < \delta$ by step \hyperref[S1]{(S1)}, the union bound gives us that $\mathcal{E}$ happens with probability at least $1-\delta$. For the rest of the proof we condition on $\mathcal{E}$.

\subsubsection{(S3) Conditioning on $\mathcal{E}$ we have $\abs{\err_h(y)}<\varepsilon_h \rank(y)$ for each $h < E_y''$}\label{S3}
Proof by induction over $h$. 
For any $h$ we use \Cref{rank_behaviour}, which gives us $\rank_h(y) \leq 2^{-h}\cdot(\rank(y)+\err_{<h}(y))$. 
By the induction hypothesis, we have $\err_{<h}(y) \leq \varepsilon\rank(y) \leq \rank(y)$ (for $h=0$ this is trivial) and thus $\rank_h(y) \leq 2^{-h+1}\rank(y)$. 
The simple bound $t_h \leq \rank_h(y)$ from \Cref{m_simple_bound} gives us 
$$
t_h \leq \rank_h(y) \leq \floor{2^{-h+1}\rank(y)} = T_h.
$$ 
We conclude that $\err_h^{T_h}(y) = \err_h(y)$ and thus by conditioning on $\mathcal{E}$ from step \hyperref[S2]{(S2)} we have $\abs{\err_h(y)}<\varepsilon_h\rank(y)$.

\subsubsection{(S4) Conditioning on $\mathcal{E}$ we have $E_y \geq F_y$}\label{S4}
As $\rank_h(y)$ is a nonincreasing function of $h$, it suffices to prove that $\rank_{E_y''}(y) < \rank_{F_y''}(y)$. 
By \Cref{large_frank} we have either $\rank_{h}(y) < 2^{-h+1}\rank(y)$ for each $h \leq F_y$ or $\rank_{F_y''}(y) > 2^{G_0-2}K$.

In the first case, if we had $E_y < F_y$, the definition of $E_y$ gives us
$\rank_{E_y}(y)\le2^{-E_y+1}\rank(y)\le K/4<K$, which contradicts the proposition $E_y < F_y$ due to the definition of $F_y$.

In the second case, by step \hyperref[S3]{(S3)} we have $\abs{\err_{<E_y''}(y)} < \varepsilon\rank(y) \leq \rank(y)$, which together with \Cref{rank_behaviour} gives us 
$$
\rank_{E_y''}(y)\leq 2^{-E_y''}(\rank(y)+\err_{<E_y''}(y)) \leq 2^{-E_y''+1}\rank(y)
$$ 
and by the definition of $E_y$ this is at most $K\cdot2^{G_0-2}$. We conclude that $\rank_{E_y''}(y) < \rank_{F_y''}(y)$ and thus $E_y'' \geq F_y''$ and $E_y \geq F_y$.

Using this important fact, we have \Cref{f_trivial} and \Cref{mbound} also for levels $E_y$ and $E_y''$ instead of $F_y$ and $F_y''$. This allows us to continue bounding the error.

\subsubsection{(S5) Conditioning on $\mathcal{E}$ we have $\abs{\err_h(y)}<\varepsilon_h \rank(y)$ for each $h < E_y'$}\label{S5}
This step is analogous to step \hyperref[S3]{(S3)}. The main difference is using the stronger bound $t_h \leq \rank_h(y)/K$ from \Cref{mbound}, which is valid also for $E_y''$ by step \hyperref[S4]{(S4)} of the proof.

We proceed by induction over $h$. For $h<E_y''$ the statement was proved in step \hyperref[S3]{(S3)}.

For $h\geq E_y''$ we again use \Cref{rank_behaviour} together with the induction hypothesis to obtain $\rank_h(y) \leq 2^{-h}\cdot(\rank(y)+\err_{<h}(y)) \leq 2^{-h+1}\rank(y)$. 
The stronger bound on $t_h$ gives us 
$$
t_h \leq \floor{\rank_h(y)/K} \leq \floor{2^{-h+1}\rank(y)/K} = T_h.
$$ 
The final statement $\abs{\err_h(y)}<\varepsilon_h\rank(y)$ is again by conditioning on $\mathcal{E}$.

\subsubsection{(S6) Conditioning on $\mathcal{E}$ we have $\abs{\err_h(y)}<\varepsilon_h \rank(y)$ for each $h < E_y$}\label{S6}
For $h< E_y'$ the statement was proved in \hyperref[S5]{(S5)}. For $E_y'\leq h < E_y$ we again condition on $\mathcal{E}$ and proceed by induction using \Cref{mbound} ($t_h \leq \rank_h(y)/K$), which is valid also for $E_y''$ due to step \hyperref[S4]{(S4)} of the proof. 

In the induction step, by the same reasoning as in \hyperref[S3]{(S3)} and \hyperref[S5]{(S5)} we get
$$
\rank_h(y) \leq 2^{-h+1}\rank(y).
$$
Using a simple bound $|\bin_h^{t_h}| \leq t_h$ we obtain
$$
\abs{\err_h(y)} = 2^h\abs{\bin_h^{t_h}} \leq 2^h t_h \leq
2^h \rank_h(y) / K \leq 2^h\cdot 2^{-h+1} \rank(y) \cdot \varepsilon \cdot 2^{-5} \cdot 
\lceil \log \ln \delta^{-1} \rceil^{-1} = \varepsilon_h \rank(y).
$$

For $h \geq E_y$ we have $\err_h = 0$ by \Cref{f_trivial} and \hyperref[S4]{(S4)}. Using $\sum \varepsilon_h < \varepsilon$ by step \hyperref[S1]{(S1)}, the statement of the lemma follows from step \hyperref[S6]{(S6)} by the triangle inequality.
\end{proof}

\begin{rem}
	Having analyzed the error of our sketch, we comment on an issue in a similar error analysis of~\cite{ReqSketch};
	namely, the issue is in Section~4 of~\cite{ReqSketch} with a relatively simple streaming analysis and also in Section~6 of~\cite{ReqSketch} with a complex analysis of mergeability.
	There, the authors first prove that with high probability ranks decrease exponentially, i.e., $\rank_h(y) \leq 2^{-h+1}\rank(y)$, and then, \emph{conditioned on these rank bounds}, they subsequently provide a tight analysis of the error using a variant of the Hoeffding bound. However, the conditioning destroys the independence of the random bits used in the compactions, so Chernoff bounds are not directly applicable. (Another issue is that the Azuma--Hoeffding bound of \Cref{fact} should be used in~\cite{ReqSketch} as the number of random variables in the error sum is random.)

	Here, we circumvent the issue by proving all the probability bounds for fixed constants, taking a single union bound, and then proving that the constants are large enough.
	We expect that a similar approach would fix the proofs in~\cite{ReqSketch} without making their bounds worse up to constant factors.
\end{rem}

\subsection{The time complexity}\label{time}
Let us briefly sketch the implementation of the two operations of the sketch. 
There are some major differences from the description in \Cref{description} -- in an actual efficient implementation, we are lazy in sorting $B$ and $M$ and in calculating $T$.
In particular, the sorting happens only immediately before a level-$h$ compaction, and after a merge, we just take the value of $T_h$ from the sketch with the smaller $\min(M_h)$.
It is straightforward to adjust the analysis in Section~\ref{analysis} for this implementation of the algorithm.

\subsubsection{Update operation}
\begin{enumerate}[label = (U\arabic*)]
	\item\label{U1} Add $y$ to $B_0$ and immediately end if it is not full.
	\item\label{U2} Starting from level $i = 0$, while the current level $i$ is full (which is checked using the previously computed $S_i$), sort $B_i$ and $M_i$, find the marking, calculate $T_i$ and $S_i$, create the next level if it does not exist, do the compaction, calculate $T_i$ again, and move to the next level by incrementing $i$. We end at some level $h$ that is not full; there, we do not perform a compaction.
	\item\label{U3} Go through all the levels $i = h-1, \dots, 0$ in decreasing order, updating the bounding level $b(i)$ and recalculating $S_i$.
	\item\label{U4} If $G$ increased, go through all levels top-down and update bounding levels and $S_h$. If $G$ increased again, repeat until $G$ stops increasing.
\end{enumerate}

\subsubsection{Merge $\mathcal{S}$ with $\mathcal{S}'$}
\begin{enumerate}[label = (M\arabic*)]
	\setcounter{enumi}{-1}
	\item\label{M0} Without loss of generality, let $H \geq H'$ (the height of $\mathcal{S}$ is at least that of $\mathcal{S}'$). Set $G$ in $\mathcal{S}$ as the maximum of the gaps of the two sketches.
	
	\item\label{M1}
	For each level $h < H'$, copy the items from the smaller of the two buffers to the larger one and from the smaller stack $M$ to the larger one. 
	For the new level-$h$ compactor of $\mathcal{S}$, we use this resulting buffer and stack, and we set the value of $T_h$
	as the value of $T_h$ from the sketch where $\min(M_h)$ is smaller.
	\item\label{M2} Go through all the levels of $\mathcal{S}$ from $H'-1$ to the bottom and calculate bounding levels and section counts.
	\item\label{M3} Go through the levels of $\mathcal{S}$ bottom-up and for each full level $i$: 
	sort $B_i$ and $M_i$, find the marking, calculate $T_i$ and $S_i$, create the next level if it does not exist, do the compaction, and recalculate $T_i$ again. Stop when the current level $h$ is not full and $h > H'-1$.
	\item\label{M4} Go through all the levels $i$ of $\mathcal{S}$ from the stopping level $h$ to bottom level 0, and calculate bounding levels $b(i)$ and section counts $S_i$.
	\item\label{M5} If the gap $G$ increased (compared to the maximum of the two sketches before the merge), go through all levels of $\mathcal{S}$ top-down and update bounding levels and $S_h$. If $G$ increased again, repeat until $G$ stops increasing.
	\item\label{M6} Return $\mathcal{S}$.
\end{enumerate}

\begin{lemma}[Time complexity]\label{l:time}
The cumulative running time of the algorithm under any series of update and merge operations is at most
$$
\O\left(N \cdot (\log\log\log\delta^{-1} + \log\log(\varepsilon N) + \log\varepsilon^{-1} )\right),
$$
where $N$ is the number of items summarized.

\end{lemma}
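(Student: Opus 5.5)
We will prove \Cref{l:time} by amortization: we charge the work of each compaction to the items it removes. The work of the update and merge operations then splits into three parts: sorting buffers and marker stacks before compactions, the recomputation passes over the levels (steps \ref{U3}, \ref{U4}, \ref{M2}, \ref{M4}, \ref{M5}), and the copying of buffers in \ref{M1}.

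\textbf{Compactions.} A level-$h$ compaction is performed on a full buffer, so $\abs{B_h} \ge 2KS_h$. It removes at least $S_hK \ge \abs{B_h}/8$ items, using \Cref{obs:buffer_size} and the lower bound on the number of participating items. Its cost has three components:
\begin{itemize}
\item sorting $B_h$ and $M_h$ in $\O(\abs{B_h}\log\abs{B_h})$;
\item computing the canonical marking greedily in one linear scan of the sorted arrays;
\item computing $T_h$ and $S_h$ by a linear backward pass over the sections.
\end{itemize}
Sorting dominates. We would improve it by keeping the non-participating prefix sorted and sorting only newly inserted items. A cheaper route is to note that the merge step \ref{M1} and the promotions deliver sorted runs, so each sort can be realized as a merge of $\O(1)$ sorted runs. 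Failing that, we accept a $\log\abs{B_h}$ factor, since $\log(8KS_{\max}) = \O(\log\varepsilon^{-1}+\log\log\ln\delta^{-1}+\log\log(\varepsilon N))$ by \Cref{cor:smax} and $H\le\log(\varepsilon N)$. This is exactly the factor in the statement. Each item removed from level $h$ therefore pays $\O(\log(KS_{\max}))$. Every input item is removed at most once per level, and the number of items entering level $h$ decays geometrically ($\abs{I_{h+1}}\le\abs{I_h}/2$), so the total charge is $\O(N\log(KS_{\max}))$.

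\textbf{Recomputation passes.} In an update, step \ref{U3} costs $\O(h)$ when compactions occurred on levels $0,\dots,h-1$. Each of these levels did a compaction costing $\Omega(K)$, so this term is absorbed. Computing bounding levels takes a linear top-down scan that maintains the suffix maximum of $T$ over levels at distance at least $G$, so each pass costs $\O(H)$.

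Step \ref{U4} and step \ref{M5} run only when $G$ increases. The gap $G$ is bounded by $\O(\log H+\log\varepsilon^{-1}+\log\ln\delta^{-1})$, and it only increases. However, it is not monotone across independent sketches before merging, so we charge each increment to the sketch and note that the number of merges is less than the number of leaf sketches, i.e.\ less than $N$. Each pass costs $\O(H)$. Over the lifetime of any sketch history, the increments along the path to the current sketch total $\O(G)$; across the whole merge tree we bound them crudely by the number of merges times $\O(H\cdot G)$. This is where care is needed.

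\textbf{Merges.} Steps \ref{M1}--\ref{M4} cost $\O(H'K S_{\max})$ for copying the smaller buffers plus $\O(H)$ for the scans. We charge the copying to the smaller sketch, which has $H'$ levels. The $\O(H)$ scan above $H'$ in \ref{M2} only goes down from $H'-1$, so it also costs $\O(H')$. Step \ref{M3} stops above $H'-1$ at the first non-full level, and any full level is paid for by its compaction.

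The smaller sketch summarizes at least $2^{H'-1}K$ items, since its top level received items. Its total cost is $\O(H'KS_{\max})$, which is small relative to that count. Moreover, each item lies in the smaller side of a merge only a bounded number of times \emph{weighted by this geometric ratio}. More precisely, charging the cost to the smaller sketch's items gives per-item cost $\O(H'S_{\max}/2^{H'})=\O(1)$ per merge only if merges along an item's path have increasing $H'$. This fails for repeated merges with tiny sketches. For those, we charge instead to the items of the smaller sketch: the merge cost is at most linear in the smaller sketch's stored items plus $\O(H')$, and $H'\le\log$ of its size. Each item enters a given level buffer as a copied item at most once before being compacted or staying. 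We therefore charge each copy to the later compaction, or to the final state.

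\textbf{Main obstacle.} The main obstacle is this merge accounting together with the $G$-increase recomputations; the rest is routine geometric summation.
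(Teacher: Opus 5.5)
Your accounting for compactions is the same as the paper's. You sort in $\O(\abs{B_h}\log\abs{B_h})$, charge this to the constant fraction of $B_h$ that is removed, and use $\log\abs{B_h}=\O(\log(KS_{\max}))$. (The constant fraction follows from $\abs{B_h}\ge 2KS_h$ and there being at most $S_hK$ non-participating items, not from \Cref{obs:buffer_size}.) The two parts you flag as the main obstacle, however, are left unresolved, and as written they do not give the bound.

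\textbf{Merges.} Your claim that an item enters a given level buffer as a copied item at most once before being compacted is false. If the level-$h$ buffer holding $y$ is the smaller one in a merge, and the resulting buffer is again the smaller one in the next merge, $y$ is copied twice with no level-$h$ compaction in between, and this can repeat. The missing idea is the smaller-into-larger doubling argument, confined to a single inter-compaction interval. View the history of the level-$h$ buffer between two consecutive level-$h$ compactions (or up to the end) as a binary tree, where each merge node has size equal to the sum of its children. Since (M1) copies the smaller child, each item is copied at most $\log r$ times, where $r<8KS_{\max}$ is the root size by \Cref{obs:buffer_size}. So the total copying cost is at most $r\log r$, and you charge it to the $\Omega(r)$ items removed by the compaction closing the interval, or to the items still stored at the end. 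This gives $\O(\log(KS_{\max}))$ per item and level, after which your geometric summation over levels applies. The $\O(1)$ overhead per level in (M1) and (M2) is absorbed because every buffer of an existing level is nonempty, so the separate $\O(H')$ charging you attempt is unnecessary.

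\textbf{Recomputations of $G$.} For (U4) and (M5), your bound of the number of merges times $\O(H\cdot G)$ is $\O(N\log(\varepsilon N)\cdot G)$. This exceeds the target by more than a $\log(\varepsilon N)$ factor, so this part of the argument fails. What is needed is a budget that is subadditive under merges. Prove by induction over the merge tree that the total time ever spent in (U4) and (M5) on a sketch summarizing $N$ items, including all sketches merged into it, is $\O(N\cdot(G-G_0))$. A merge preserves this, since the new gap is $\max(G_1,G_2)$ and $N_1(G_1-G_0)+N_2(G_2-G_0)\le(N_1+N_2)(\max(G_1,G_2)-G_0)$. Every recomputation pass costs $\O(H)\subseteq\O(N)$ and is triggered by an integer increase of $G$, which raises the budget by $N$. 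As $G-G_0=\O(\log S_{\max})=\O(\log\log(\varepsilon N))$ by \Cref{cor:smax}, this part costs only $\O(N\log\log(\varepsilon N))$ in total.
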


\begin{proof}
	We charge all the time steps of the algorithm to the $N$ input items.
	
	\paragraph{Update operation lines \ref{U1} to \ref{U3}}
	During an update operation, we spend $\O(1)$ time for adding the item to level 0 by \ref{U1} and then $\O(\abs{B} \log \abs{B})$ time for each level where we performed a compaction by lines \ref{U2} and \ref{U3}. As after any compaction, we evict at least $\abs{B}/4$ items from memory, we can charge all this time to the evicted items, $\O(\log \abs{B})$ each, and we have $\O(\log \abs{B}) \subseteq \O(\log\log\log\delta^{-1} + \log\log(\varepsilon N) + \log\varepsilon^{-1})$ by the definition of $K$ and $S$ (see~\eqref{kdef} and~\eqref{sdef}) and by \Cref{cor:smax}.
	
	\paragraph{Merge operation lines \ref{M1} to \ref{M4}}
	During a merge operation, all the work done by lines \ref{M1} to \ref{M4} on levels where we performed a compaction (on line \ref{M3}) can be charged to the items evicted from memory during the compaction, as in the analysis of the update. 
	
	On any level $h$ without compaction, the time spent on lines \ref{M1} to \ref{M4} is linear in the smaller of the two buffers.
	Consider the first compaction on level $h$ in the future, and let $t$ be the time of this compaction. If the algorithm ends before such a compaction happens, let $t$ be the time of the end of the algorithm.
	
	All the states of the buffer $B$ after any compaction previously performed on level $h$ but before $t$ can be imagined as a binary tree where 
	leaves represent initial states (after a previous compaction or after the creation of the buffer),
	nodes with one child represent the addition of new items from the previous level, and 
	nodes with two children represent merge operations.
	Let the size of any node be the current number of items in $B$. 
	Thus, the size of a merge node is the sum of its two children.
	
	Let $r$ be the size of the root. The overall time of all the merges before $t$ is the sum over all the merge nodes of the size of their smaller child, and a simple induction shows that this is at most $r \log r$. Thus, we charge all the time spent on lines \ref{M1} to \ref{M4} to items removed from the buffer by the compaction at time $t$ or to the items remaining in the buffer at the end of the algorithm if $t$ is the end of the algorithm.
	
	\paragraph{Changes of $G$ by \ref{U4} and \ref{M5}}
	We prove that the cumulative time spent on lines \ref{U4} and \ref{M5} is at most $N \cdot (G - G_0)$.
	Any time we increase $G$, the bound increases by $N$, which covers the time needed to recalculate bounding levels and section counts. After a merge of two sketches summarizing $N_1$ and $N_2$ items, the bound holds as 
	$N_1 \cdot (G_1 - G_0) + N_2 \cdot (G_2 - G_0) \leq 
	(N_1 + N_2) \cdot (\max(G_1, G_2) - G_0)$.
	The overall time per item is at most $G - G_0 \in \O(\log\log(\varepsilon N))$ by the definition of $G$ in \eqref{gdef} and \Cref{cor:smax}.
	
	\paragraph{The rest of Merge}
	The time spent by each merge in \ref{M0} and \ref{M6} is constant, and there are at most $N$ merge operations. This concludes the proof.
\end{proof}

\section{Discussion and open problems}\label{conclusions}

\subsection{Discussion of our result}
We have developed adaptive compactors as a new building block for mergeable relative-error quantile sketches, replacing both relative compactors from~\cite{ReqSketch} and elastic compactors from~\cite{GribelyukSWY25}. We have also devised a space allocation strategy for our compactors and introduced the notion of marking and non-marking compactions. Using these tools, we obtained a relative-error sketch with three crucial properties:

First, it outperforms the state-of-the-art randomized comparison-based sketches in space consumption -- in particular, it exceeds the known lower bound of 
$\Omega(\varepsilon^{-1}\log (\varepsilon N) \log \log \delta^{-1})$ 
only by a factor of 
$\O(\log\log \delta^{-1} + \log \log (\varepsilon N) + \log \varepsilon^{-1})$. Moreover, any further improvement in the dependence on the parameter $\delta$ would yield new results for deterministic sketches, specifically improving upon the state-of-the-art bound of $\O(\varepsilon^{-1}\log^3 (\varepsilon N))$~\cite{zhang}.

Second, in contrast to~\cite{GribelyukSWY25}, our sketch is fully mergeable, allowing distributed quantile estimation or seamless parallelization of data processing.

Finally, its simple structure makes it suitable for practical usage, which we demonstrated by a proof-of-concept implementation\footnote{\url{https://github.com/domestomas/adaptive-sketch-experiments}}.

\subsection{Future research}
From the theoretical point of view, the most relevant question is finally closing the gap between our result and the known lower bound. It seems plausible that one could somehow get rid of the $\log\log (\varepsilon N)$ and $\log \varepsilon^{-1}$ factors. However, whether the additional $\log\log \delta^{-1}$ factor could be removed may be better approached via deterministic comparison-based algorithms, as discussed in \Cref{lower_bounds}.

The most important practical task is to develop a proper implementation, optimizing it using both standard tricks~\cite{kll_improvements} and new insights, and performing a detailed comparison to ReqSketch~\cite{ReqSketch} and $t$-digest~\cite{tdigest}, both in terms of space and time. If these experiments are favorable to our algorithm, it could replace ReqSketch~\cite{ReqSketch} in the standard open-source libraries, such as Apache DataSketches\footnote{\url{https://datasketches.apache.org}}.

\printbibliography

@article{GuptaSW-noncomparison_optimal,
	title={Optimal Quantile Estimation: Beyond the Comparison Model}, 
	author={Meghal Gupta and Mihir Singhal and Hongxun Wu},
	journal={2024 IEEE 65th Annual Symposium on Foundations of Computer Science (FOCS)},
	year={2024},
	pages={1137-1158},
	doi={10.1109/FOCS61266.2024.00075},
}

@article{kll_improvements,
	title={Streaming Quantiles Algorithms with Small Space and Update Time}, 
	author={Nikita Ivkin and Edo Liberty and Kevin Lang and Zohar Karnin and Vladimir {Bra\-ver\-man}},
	journal={Sensors},
	year={2022},
	eprint={1907.00236},
	archivePrefix={arXiv},
}

@ARTICLE{ReqSketch_improvements,
	title={Theory meets Practice at the Median: a worst case comparison of relative error quantile algorithms}, 
	author={Graham Cormode and Abhinav Mishra and Joseph Ross and Pavel Veselý},
	journal={KDD '21: Proceedings of the 27th ACM SIGKDD Conference on Knowledge Discovery \& Data Mining},
	year={2021},
	doi = {10.1145/3447548.3467152}
}

@article{ReqSketch,
	author       = {Graham Cormode and
	Zohar S. Karnin and
	Edo Liberty and
	Justin Thaler and
	Pavel Vesel{\'{y}}},
	title        = {Relative Error Streaming Quantiles},
	journal      = {J. {ACM}},
	volume       = {70},
	number       = {5},
	pages        = {30:1--30:48},
	year         = {2023},
	url          = {https://doi.org/10.1145/3617891},
	doi          = {10.1145/3617891},
	bibsource    = {dblp computer science bibliography, https://dblp.org}
	}

@ARTICLE{tdigest,
	title = {The t-digest: Efficient estimates of distributions},
	author = {Ted Dunning},
	journal = {Software Impacts},
	year = {2021},
	doi = {10.1016/j.simpa.2020.100049},
}

@article{add_lower,
	title={Tight Lower Bound for {Com\-pa\-ri\-son}-Based Quantile Summaries}, 
	author={Graham Cormode and Pavel Veselý},
	year={2020},
	journal={Proceedings of the 39th ACM SIGMOD-SIGACT-SIGAI Symposium on Principles of Database Systems},
	doi = {10.1145/3375395.3387650},
}

@article{KLL,
	title={Optimal quantile approximation in streams},
	author={Karnin, Zohar and Lang, Kevin and Liberty, Edo},
	journal={2016 {IEEE} 57th annual symposium on foundations of computer science ({FOCS})},
	year={2016},
	doi={10.1109/FOCS.2016.17}
}

@article{zhang,
	author = {Zhang, Qi and Wang, Wei},
	title = {An efficient algorithm for approximate biased quantile computation in data streams},
	year = {2007},
	doi = {10.1145/1321440.1321601},
	journal = {Proceedings of the Sixteenth ACM Conference on Information and Knowledge Management},
}

@article{q_digest_new,
	author = {Cormode, Graham and Korn, Flip and Muthukrishnan, S. and Srivastava, Divesh},
	title = {Space- and time-efficient deterministic algorithms for biased quantiles over data streams},
	journal = {Proceedings of the Twenty-Fifth ACM SIGMOD-SIGACT-SIGART Symposium on Principles of Database Systems},
	year = {2006},
	doi = {10.1145/1142351.1142389},
}

@article{q_digest,
  author = {Nisheeth Shrivastava and Chiranjeeb Buragohain and Divyakant Agrawal and Subhash Suri},
  title = {Medians and beyond: new aggregation techniques for sensor networks},
  journal = {Proceedings of the 2nd International Conference on Embedded Networked Sensor Systems},
  year         = {2004},
  doi          = {10.1145/1031495.1031524},
}

@article{GK,
	title = {Space-efficient online computation of quantile summaries},
	author = {Greenwald, Michael and Khanna, Sanjeev},
	year = {2001},
	journal = {SIGMOD Rec.},
	doi = {10.1145/376284.375670},
}

@article{MRL,
	author = {Manku, Gurmeet Singh and Rajagopalan, Sridhar and Lindsay, Bruce G.},
	title = {Random sampling techniques for space efficient online computation of order statistics of large datasets},
	year = {1999},
	journal = {SIGMOD Rec.},
	doi = {10.1145/304181.304204},
}

@misc{tene2015_latency_talk,
	title = {How {NOT} to Measure Latency},
	author = {Gil Tene},
	year = {2015},
	howpublished = {\url{https://www.youtube.com/watch?v=lJ8ydIuPFeU}},
	lastaccessed = {April 1, 2021}
	
}

@article{MassonRL19,
	Author = {Charles Masson and Jee E. Rim and Homin K. Lee},
	Bibsource = {dblp computer science bibliography, https://dblp.org},
	DOI={10.14778/3352063.3352135},
	Journal = {{PVLDB}},
	Number = {12},
	Pages = {2195--2205},
	Title = {DDSketch: {A} Fast and Fully-Mergeable Quantile Sketch with Relative-Error Guarantees},
	Volume = {12},
	Year = {2019}}

@article{agarwal2013mergeable,
	Author = {Agarwal, Pankaj K and Cormode, Graham and Huang, Zengfeng and Phillips, Jeff M and Wei, Zhewei and Yi, Ke},
	Journal = {ACM Transactions on Database Systems (TODS)},
	Number = {4},
	Pages = {26},
	Publisher = {ACM},
	Title = {Mergeable summaries},
	Volume = {38},
	Year = {2013},
	Doi = {10.1145/2500128}
	}

@article{GribelyukSWY25,
	author       = {Elena Gribelyuk and
	Pachara Sawettamalya and
	Hongxun Wu and
	Huacheng Yu},
	editor       = {Yossi Azar and
	Debmalya Panigrahi},
	title        = {Near-Optimal Relative Error Streaming Quantile Estimation via Elastic
	Compactors},
	journal    = {Proceedings of the 2025 Annual {ACM-SIAM} Symposium on Discrete Algorithms,
	{SODA} 2025, New Orleans, LA, USA, January 12-15, 2025},
	pages        = {3486--3529},
	publisher    = {{SIAM}},
	year         = {2025},
	url          = {https://doi.org/10.1137/1.9781611978322.115},
	doi          = {10.1137/1.9781611978322.115},
	bibsource    = {dblp computer science bibliography, https://dblp.org}
}

@article{GuptaZ03,
	author       = {Anupam Gupta and
	Francis Zane},
	title        = {Counting inversions in lists},
	journal    = {Proceedings of the Fourteenth Annual {ACM-SIAM} Symposium on Discrete
	Algorithms, January 12-14, 2003, Baltimore, Maryland, {USA}},
	pages        = {253--254},
	publisher    = {{ACM/SIAM}},
	year         = {2003},
	url          = {http://dl.acm.org/citation.cfm?id=644108.644150},
}

@article{ZhangLXKW06,
	author       = {Ying Zhang and
	Xuemin Lin and
	Jian Xu and
	Flip Korn and
	Wei Wang},
	editor       = {Ling Liu and
	Andreas Reuter and
	Kyu{-}Young Whang and
	Jianjun Zhang},
	title        = {Space-efficient Relative Error Order Sketch over Data Streams},
	journal    = {Proceedings of the 22nd International Conference on Data Engineering,
	{ICDE} 2006, 3-8 April 2006, Atlanta, GA, {USA}},
	pages        = {51},
	publisher    = {{IEEE} Computer Society},
	year         = {2006},
	url          = {https://doi.org/10.1109/ICDE.2006.145},
	doi          = {10.1109/ICDE.2006.145},
	bibsource    = {dblp computer science bibliography, https://dblp.org}
}

@article{GK_new,
author = {Gribelyuk, Elena and Sawettamalya, Pachara and Wu, Hongxun and Yu, Huacheng},
title = {Simple \& Optimal Quantile Sketch: Combining Greenwald-Khanna with Khanna-Greenwald},
year = {2024},
issue_date = {May 2024},
publisher = {Association for Computing Machinery},
address = {New York, NY, USA},
volume = {2},
number = {2},
url = {https://doi.org/10.1145/3651610},
doi = {10.1145/3651610},
journal = {Proc. ACM Manag. Data},
month = {5},
articleno = {109},
numpages = {25}
}

@misc{datasketches,
	Author = {Rhodes, Lee and Lang, Kevin and Malkin, Jon and Saydakov, Alexander and Liberty, Edo and Thaler, Justin},
	Howpublished = {Open source software: \url{https://datasketches.apache.org/}},
	Title = {Data{S}ketches: {A} library of stochastic streaming algorithms},
	Year = {2013}}

@article{CormodeKMS05,
	author = {Graham Cormode and
	Flip Korn and
	S. Muthukrishnan and
	Divesh Srivastava},
	editor       = {Karl Aberer and
	Michael J. Franklin and
	Shojiro Nishio},
	title        = {Effective Computation of Biased Quantiles over Data Streams},
	booktitle    = {Proceedings of the 21st International Conference on Data Engineering,
	{ICDE} 2005, 5-8 April 2005, Tokyo, Japan},
	pages        = {20--31},
	publisher    = {{IEEE} Computer Society},
	year         = {2005},
	url          = {https://doi.org/10.1109/ICDE.2005.55},
	doi          = {10.1109/ICDE.2005.55},
	bibsource    = {dblp computer science bibliography, https://dblp.org}
}

@article{HeCCW15,
	author       = {Zaobo He and
	Zhipeng Cai and
	Siyao Cheng and
	Xiaoming Wang},
	title        = {Approximate aggregation for tracking quantiles and range countings
	in wireless sensor networks},
	journal      = {Theor. Comput. Sci.},
	volume       = {607},
	pages        = {381--390},
	year         = {2015},
	url          = {https://doi.org/10.1016/j.tcs.2015.07.056},
	doi          = {10.1016/J.TCS.2015.07.056},
	bibsource    = {dblp computer science bibliography, https://dblp.org}
}

@article{splineSketch,
  author       = {Aleksander Lukasiewicz and
                  Jakub Tetek and
                  Pavel Vesel{\'{y}}},
  title        = {SplineSketch: Even More Accurate Quantiles with Error Guarantees},
  journal      = {Proc. {ACM} Manag. Data},
  volume       = {3},
  number       = {6},
  pages        = {1--26},
  year         = {2025},
  url          = {https://doi.org/10.1145/3769827},
  doi          = {10.1145/3769827},
  bibsource    = {dblp computer science bibliography, https://dblp.org}
}

@article{luo2016quantiles,
  title={Quantiles over data streams: experimental comparisons, new analyses, and further improvements},
  author={Luo, Ge and Wang, Lu and Yi, Ke and Cormode, Graham},
  journal={The VLDB Journal},
  volume={25},
  number={4},
  pages={449--472},
  year={2016},
  publisher={Springer}
}

@misc{bulanek2026matrixfactorizationapproachturnstile,
      title={A Matrix Factorization Approach in Turnstile Streaming}, 
      author={Jan Bulanek and Ravi Kumar and Raghu Meka and Jelani Nelson and Tamas Sarlos},
      year={2026},
      eprint={2607.28819},
      archivePrefix={arXiv},
      primaryClass={cs.DS},
      url={https://arxiv.org/abs/2607.28819}, 
}

\end{document}